\numberwithin{equation}{section}
\newtheorem{thm}{Theorem}[section]
\newtheorem{cor}[thm]{Corollary}
\newtheorem{lma}[thm]{Lemma}
\newtheorem{lem}[thm]{Lemma}
\newtheorem{prop}[thm]{Proposition}
{\theoremstyle{definition}
\newtheorem{dfn}[thm]{Definition}}
\newcommand{\defeq}{\coloneqq}
\newcommand{\tens}{\otimes}
\newcommand{\ctens}{\hat{\otimes}}
\newcommand{\im}{\mathrm{i}}
\newcommand{\R}{\mathbb{R}}
\newcommand{\C}{\mathbb{C}}
\newcommand{\toi}{\hookrightarrow}
\newcommand{\cH}{\mathcal{H}}
\newcommand{\ds}{\circ}
\newcommand{\cHd}{\mathcal{H}^{\ds}}
\newcommand{\cHl}{\mathcal{H}^\text{L}}
\newcommand{\cHld}{\mathcal{H}^{\text{L}\ds}}
\newcommand{\rhol}{\rho^\text{L}}
\newcommand{\rH}{\mathrm{H}}
\newcommand{\cont}{\mathrm{C}}
\newcommand{\xd}{\mathrm{d}}
\newcommand{\sts}{\mathcal{M}}
\newcommand{\gf}{\varphi}
\newcommand{\geps}{\varepsilon}
\begin{document}
\allowdisplaybreaks

\newcommand{\arXivNumber}{1712.05537}

\renewcommand{\PaperNumber}{105}

\FirstPageHeading

\ShortArticleName{Quantum Abelian Yang--Mills Theory on Riemannian Manifolds with Boundary}

\ArticleName{Quantum Abelian Yang--Mills Theory\\ on Riemannian Manifolds with Boundary}

\Author{Homero G.~D\'IAZ-MAR\'IN~$^\dag$ and Robert OECKL~$^\ddag$}

\AuthorNameForHeading{H.G.~D\'{\i}az-Mar\'{\i}n and R.~Oeckl}

\Address{$^\dag$~Facultad de Ciencias F\'{\i}sico-Matem\'aticas, Universidad Michoacana de San Nicol\'as de Hidalgo,\\
\hphantom{$^\dag$}~Ciudad Universitaria, C.P.~58060, Morelia, Michoac\'an, Mexico}
\EmailD{\href{mailto:hdiaz@umich.mx}{hdiaz@umich.mx}}

\Address{$^\ddag$~Centro de Ciencias Matem\'aticas, Universidad Nacional Aut\'onoma de M\'exico,\\
\hphantom{$^\ddag$}~C.P.~58190, Morelia, Michoac\'an, Mexico}
\EmailD{\href{mailto:robert@matmor.unam.mx}{robert@matmor.unam.mx}}

\ArticleDates{Received December 18, 2017, in final form September 18, 2018; Published online September 27, 2018}

\Abstract{We quantize abelian Yang--Mills theory on Riemannian manifolds with bounda\-ries in any dimension. The quantization proceeds in two steps. First, the classical theory is encoded into an axiomatic form describing solution spaces associated to manifolds. Second, the quantum theory is constructed from the classical axiomatic data in a functorial manner. The target is general boundary quantum field theory, a TQFT-type axiomatic formulation of quantum field theory.}

\Keywords{Yang--Mills theory; TQFT; Riemannian manifolds}

\Classification{53D30; 58E15; 58E30;81T13}

\section{Introduction}

The present paper represents a step in the program to formalize realistic quantum field theories in terms of the axiomatic approach known as \emph{topological quantum field theory} (TQFT) or \emph{functorial quantum field theory}. More specifically, we show how to quantize abelian Yang--Mills theory functorially on Riemannian manifolds with boundary.

The precise framework we use is \emph{general boundary quantum field theory} (GBQFT) \cite{Oe:gbqft}. We choose this framework over other TQFT-type axiomatizations for two reasons. On the one hand it is more amenable to support infinitely many degrees of freedom than the often preferred cobordism setting originally proposed by Atiyah \cite{Ati:tqft}. On the other hand it embeds into the larger program on the foundations of quantum theory known as the \emph{general boundary formulation} and provides a direct connection to the \emph{positive formalism} \cite{Oe:dmf,Oe:posfound}. Other TQFT-type approaches to quantizing classical field theory (beyond dimension~2) include the program by Cattaneo, Mnev, Reshetikhin and collaborators based on the BV-BFV formalism \cite{CaMnRe:semiquantft,CaMnRe:pertqgaugebdy} as well as work by Kandel~\cite{Kan:fqftriem}. Works of the former group even include remarks on abelian Yang--Mills theory, mostly on the classical level. However, as far as we are aware, a TQFT-type quantization of abelian Yang--Mills theory in dimensions higher than 2 is achieved in the present work for the first time.

Instead of directly attempting to quantize a classical theory in terms of its geometric and analytic structures such as bundles, sections, spaces of solutions etc.\ we introduce an intermediate step. This step consists of an axiomatization of the classical theory using the same geometric structures as the quantum theory. That is, we associate data to spacetime regions and hypersurfaces and describe its behavior under gluing. The quantization then proceeds in two parts. The first part consists of showing that and how the classical field theory satisfies the classical axioms. The second part consists in constructing the quantum theory from the data of the classical axioms.

This strategy was first proposed and carried through successfully in \cite{Oe:holomorphic}. There, the simplest case of \emph{linear field theory} without gauge symmetries was considered. It was shown how (the second part of) the quantization can be carried out functorially, given that additional data is provided. This data takes the form of a complex structure for a geometric quantization with K\"ahler polarization per hypersurface.

A generalization of the quantization functor to \emph{affine field theory}, i.e., where spaces of solutions are affine spaces, was carried out in \cite{Oe:affine}. This seems in principle a fitting setting for abelian Yang--Mills theory since its spaces of solutions are naturally affine spaces. However, the gauge symmetries of Yang--Mills theory provide a considerable complication. It was shown in~\cite{Dia:gbclassab} how the gauge symmetries can be dealt with by symplectic reduction in such a way that the theory on Riemannian manifolds can be brought into a classical axiomatic formulation similar to the one introduced for affine field theory. In the process, an additional obstruction was dis\-co\-ve\-red. Namely, it turns out that the reduced spaces of solutions in manifolds are not necessarily Lagrangian submanifolds of the reduced boundary data, but merely isotropic ones. In~\cite{Dia:dtonopab} the first step was taken to construct a complex structure that would allow to complete the classical data in order to proceed to (the second part of) the quantization. In the present paper we bring all these ingredients together and show how to complete the picture in order to obtain a~quantization of abelian Yang--Mills theory on Riemannian manifolds with boundary in terms of GBQFT.

In Section~\ref{sec:classax} the axiomatic system for affine classical field theory is introduced. This axioma\-tic data is quantized functorially in Section~\ref{sec:quantization}, leading to GBQFT. (This is step two in the above description.) The construction of the classical data from abelian Yang--Mills theory is presented in Section~\ref{sec:classdem}. (This is step one in the above description.) The special case of abelian quantum Yang--Mills theory on 2-dimensional compact manifolds is worked out more explicitly in Section~\ref{sec:examples}. An outlook is provided in Section~\ref{sec:outlook}. Appendices on the axiomatization of spacetime (Appendix~\ref{sec:geomax}), the core axioms of GBQFT (Appendix~\ref{sec:coreaxioms}) and some basic facts on abelian Yang--Mills theory (Appendix~\ref{sec:minimal-YM}) are also included.

\section{Semiclassical axioms}\label{sec:classax}

For the quantization of abelian Yang--Mills theory on Riemannian manifolds as a general boun\-da\-ry quantum field theory (GBQFT) we can make use of the machinery already developed for linear~\cite{Oe:holomorphic} and affine field theory~\cite{Oe:affine} in this context. Indeed, in those works a quantization functor was exhibited that sends a classical field theory to a general boundary quantum field theory (up to a certain integrability condition). For this purpose the classical field theory is encoded not in terms of fields and differential equations, but rather through algebraic data in axiomatic form. These data involve local solution spaces on a spacetime system. It turns out that abelian Yang--Mills can be brought into a form that ``almost'' satisfies the axioms of affine field theory as given in~\cite{Oe:affine}.

In this section we present a suitably generalized axiomatic system for affine field theory. In Section~\ref{sec:classdem} we then show how abelian Yang--Mills theory gives rise to data that satisfy this system. At that point the motivation and physical meaning of this data will be clarified, see also \cite{Oe:affine}. For the moment we restrict attention to the axiomatic system itself to subsequently consider its quantization in Section~\ref{sec:quantization}. Given a \emph{spacetime system} (reviewed in Appendix~\ref{sec:geomax}), we say that the following axioms determine a \emph{semiclassical affine field theory}.
\begin{itemize}\itemsep=0pt
\item[\textbf{(C1)}] Associated to each hypersurface $\Sigma$ is a complex separable Hilbert space $L_\Sigma$ and an affine space $A_\Sigma$ over $L_\Sigma$ with the induced topology. The latter means that there is a transitive and free abelian group action $L_\Sigma\times A_\Sigma\to A_\Sigma$ which we denote by $(\phi,\eta)\mapsto \phi+\eta$. The inner product in $L_\Sigma$ is denoted by $\{\cdot ,\cdot\}_\Sigma$. We also define $g_\Sigma(\cdot,\cdot)\defeq \Re\{\cdot ,\cdot\}_\Sigma$ and $\omega_\Sigma(\cdot,\cdot)\defeq \frac{1}{2}\Im\{\cdot ,\cdot\}_\Sigma$ and denote by $J_\Sigma\colon L_\Sigma\to L_\Sigma$ the scalar multiplication with $\im$ in $L_\Sigma$. Moreover we suppose there are continuous maps $\theta_\Sigma\colon A_\Sigma\times L_\Sigma\to\R$ and $[\cdot,\cdot]_\Sigma\colon L_\Sigma\times L_\Sigma\to\R$ such that $\theta_\Sigma$ is real linear in the second argument, $[\cdot,\cdot]_\Sigma$ is real bilinear, and both structures are compatible via
\begin{gather*}
[\phi,\phi']_\Sigma+\theta_\Sigma(\eta,\phi')=\theta_\Sigma(\phi+\eta,\phi'),\qquad\forall\, \eta\in A_\Sigma, \qquad \forall\, \phi,\phi'\in L_\Sigma .
\end{gather*}
Finally we require
\begin{gather*}
 \omega_\Sigma(\phi,\phi')=\tfrac{1}{2} [\phi,\phi']_\Sigma-\tfrac{1}{2} [\phi',\phi]_\Sigma, \qquad \forall\, \phi,\phi'\in L_\Sigma .
\end{gather*}
\item[\textbf{(C2)}] Associated to each hypersurface $\Sigma$ there is a homeomorphic involution $A_\Sigma\to A_{\overline{\Sigma}}$ and a~compatible conjugate linear involution $L_\Sigma\to L_{\overline\Sigma}$ under which the inner product is complex conjugated. We will not write these maps explicitly, but rather think of $A_\Sigma$ as identified with $A_{\overline{\Sigma}}$ and $L_\Sigma$ as identified with $L_{\overline\Sigma}$. Then, $\{\phi',\phi\}_{\overline{\Sigma}}=\overline{\{\phi',\phi\}_\Sigma}$ and we also require $\theta_{\overline{\Sigma}}(\eta,\phi)=-\theta_\Sigma(\eta,\phi)$ and $[\phi,\phi']_{\overline{\Sigma}}=-[\phi,\phi']_\Sigma$ for all $\phi,\phi'\in L_\Sigma$ and $\eta\in A_\Sigma$.
\item[\textbf{(C3)}] Suppose the hypersurface $\Sigma$ decomposes into a union of hypersurfaces $\Sigma=\Sigma_1\cup\cdots\cup\Sigma_n$. Then, there is a homeomorphism $A_{\Sigma_1}\times\dots\times A_{\Sigma_n}\to A_\Sigma$ and a compatible isometric isomorphism of complex Hilbert spaces $L_{\Sigma_1}\oplus\cdots\oplus L_{\Sigma_n}\to L_\Sigma$. Moreover, these maps satisfy obvious associativity conditions. We will not write these maps explicitly, but rather think of them as identifications. Also, $\theta_\Sigma=\theta_{\Sigma_1}+\dots+\theta_{\Sigma_n}$ and $[\cdot,\cdot]_\Sigma=[\cdot,\cdot]_{\Sigma_1}+\dots+[\cdot,\cdot]_{\Sigma_n}$.
\item[\textbf{(C4)}] Associated to each region $M$ is a real topological vector space $L_M$ and an affine space $A_M$ over $L_M$ with the induced topology. Also, there is a map $S_M\colon A_M\to\R$.
\item[\textbf{(C5)}] Associated to each region $M$ there is a closed Hilbert subspace $L_{M,\partial M}\subseteq L_{\partial M}$ and an affine subspace $A_{M,\partial M}\subseteq A_{\partial M}$ over it.
\item[\textbf{(C6)}] Associated to each region $M$ there is a continuous map $a_M\colon A_M\to A_{\partial M}$ and a compatible continuous linear map of real vector spaces $r_M\colon L_M\to L_{\partial M}$. We denote by $A_{\tilde{M}}$ the image of $A_M$ under $a_M$ and by $L_{\tilde{M}}$ the image of $L_M$ under $r_M$. Then, $A_{\tilde{M}}\subseteq A_{M,\partial M}$. Also, $L_{\tilde{M}}$ is a closed Lagrangian subspace of the space $L_{M,\partial M}$ as a real symplectic vector space with respect to the symplectic form $\omega_{\partial M}$. We also require
\begin{gather}
S_M(\eta)= S_M(\eta')-\tfrac{1}{2}\theta_{\partial M}(a_M(\eta),r_M(\eta-\eta')) \nonumber\\
\hphantom{S_M(\eta)=}{} -\tfrac{1}{2}\theta_{\partial M}(a_M(\eta'),r_M(\eta-\eta')),\qquad\forall\,\eta,\eta'\in A_M .\label{eq:actsympot}
\end{gather}
\item[\textbf{(C7)}]
Given a hypersurface $\Sigma$ we have for the associated slice region $\hat{\Sigma}$ the equalities $A_{\hat{\Sigma},\partial \hat{\Sigma}}= A_{\partial \hat{\Sigma}}$ and $L_{\hat{\Sigma},\partial \hat{\Sigma}}= L_{\partial \hat{\Sigma}}$. Also, $A_{\hat{\Sigma}}$ can be identified with $A_{\Sigma}$ as a topological affine space and~$L_{\hat{\Sigma}}$ with~$L_{\Sigma}$ as a real topological vector space. Moreover, using these identifications, $a_{\hat{\Sigma}}(\eta)=(\eta,\eta)$ and $r_{\hat{\Sigma}}(\phi)=(\phi,\phi)$ with the decompositions of (C3) understood.
\item[\textbf{(C8)}] Let $M_1$ and $M_2$ be regions and $M\defeq M_1\sqcup M_2$ be their disjoint union. Then, there is a homeomorphism $A_{M_1}\times A_{M_2}\to A_M$ and a compatible isomorphism of real topological vector spaces $L_{M_1}\oplus L_{M_2}\to L_M$ such that $a_M=a_{M_1}\times a_{M_2}$ and $r_M=r_{M_1}\times r_{M_2}$. Moreover, these maps satisfy obvious associativity conditions. Hence, we can think of them as identifications and omit their explicit mention in the following. We also require $S_M=S_{M_1}+S_{M_2}$. Moreover, writing identifications as equalities we require $A_{M,\partial M}=A_{M_1,\partial M_1}\times A_{M_2,\partial M_2}$ and $L_{M,\partial M}=L_{M_1,\partial M_1}\oplus L_{M_2,\partial M_2}$.
\item[\textbf{(C9)}] Let $M$ be a region with its boundary decomposing as a union $\partial M=\Sigma_1\cup\Sigma\cup \overline{\Sigma'}$, where~$\Sigma'$ is a copy of $\Sigma$. Let $M_1$ denote the gluing of~$M$ to itself along $\Sigma$, $\overline{\Sigma'}$ and suppose that~$M_1$ is a~region. Note $\partial M_1=\Sigma_1$. Then, there is an injective map $a_{M;\Sigma,\overline{\Sigma'}}\colon A_{M_1}\toi A_{M}$ and a~compatible injective linear map $r_{M;\Sigma,\overline{\Sigma'}}\colon L_{M_1}\toi L_{M}$ such that
\begin{gather*}
 A_{M_1}\toi A_{M}\rightrightarrows A_\Sigma, \qquad L_{M_1}\toi L_{M}\rightrightarrows L_\Sigma
\end{gather*}
are exact sequences. Here, for the first sequence, the arrows on the right hand side are compositions of the map $a_M$ with the projections of $A_{\partial M}$ to $A_\Sigma$ and $A_{\overline{\Sigma'}}$ respectively (the latter identified with $A_\Sigma$). For the second sequence the arrows on the right hand side are compositions of the map $r_M$ with the projections of $L_{\partial M}$ to $L_\Sigma$ and $L_{\overline{\Sigma'}}$ respectively (the latter identified with $L_\Sigma$). We also require $S_{M_1}=S_M\circ a_{M;\Sigma,\overline{\Sigma'}}$.

Consider the projection map $\alpha_1\colon A_{\partial M}\rightarrow A_{\partial M_1}$ and the associated linear projection map $\lambda_1\colon L_{\partial M}\rightarrow L_{\partial M_1}$. Then, $\alpha_1(A_{M,\partial M})\subseteq A_{M_1,\partial M_1}$ and $\lambda_1(L_{M,\partial M})\subseteq L_{M_1,\partial M_1}$ and the following diagrams commute
\begin{gather*}\xymatrix{
 A_{M_1}
 \ar[rrr]^{a_{M;\Sigma\overline{\Sigma'}}}
 \ar[dr] \ar@{-->}[dd]^{{a}_{M_1}}
 &&&
 A_{M}
 \ar[dl] \ar@{-->}[dd]^{{a}_M}
 \\
 &
 A_{M_1,\partial M_1}
 \ar@^{{(}->}[dl]
 &
 A_{M,\partial M}
 \ar@^{{(}->}[dr] \ar[l]_{\alpha_1}
 &
 \\
 A_{\partial M_1}
 &&&
 A_{\partial M}
 \ar[lll]
}\end{gather*}
and
\begin{gather*}\xymatrix{
 L_{M_1}
 \ar[rrr]^{r_{M;\Sigma\overline{\Sigma'}}}
 \ar[dr] \ar@{-->}[dd]^{{r}_{M_1}}
 &&&
 L_{M}
 \ar[dl] \ar@{-->}[dd]^{{r}_M}
 \\
 &
 L_{M_1,\partial M_1}
 \ar@^{{(}->}[dl]
 &
 L_{M,\partial M}
 \ar@^{{(}->}[dr] \ar[l]_{\lambda_1}
 &
 \\
 L_{\partial M_1}
 &&&
 L_{\partial M}.
 \ar[lll]
}\end{gather*}
\end{itemize}

We comment on the difference to the axiomatic system presented in \cite{Oe:affine}. In the latter it is required that for any region $M$ the space $L_{\tilde{M}}$ is a Lagrangian subspace of $L_{\partial M}$. This amounts to requiring $L_{M,\partial M}=L_{\partial M}$ and $A_{M,\partial M}=A_{\partial M}$, resulting in a considerable simplification of the axioms. The more general version of the axioms presented here is motivated precisely by the discovery in \cite{Dia:gbclassab} that for abelian Yang--Mills theory the stricter Lagrangian subspace condition is not satisfied in general. We shall demonstrate in Section~\ref{sec:classdem}, however, that the present generalized axioms are satisfied. Before that, we show in Section~\ref{sec:quantization} that the quantization functor of \cite{Oe:affine} can be generalized correspondingly. This gives us abelian quantum Yang--Mills theory on Riemannian manifolds as a GBQFT.

We recall the following elementary lemmas, adapted to the present setting.

\begin{lem}[\cite{Oe:holomorphic}]For a region $M$ the space $L_{M,\partial M}$ decomposes as a real orthogonal direct sum over $\R$ as $L_{M,\partial M}=L_{\tilde{M}}\oplus J_{\partial M} L_{\tilde {M}}$.
\end{lem}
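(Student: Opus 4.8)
The plan is to use the two structural facts already available: by axiom \textbf{(C6)} the space $L_{\tilde{M}}$ is a closed Lagrangian subspace of the real symplectic space $\bigl(L_{M,\partial M},\omega_{\partial M}\bigr)$, and by axiom \textbf{(C5)} the space $L_{M,\partial M}$ is a closed complex --- hence $J_{\partial M}$-invariant --- Hilbert subspace of $L_{\partial M}$, so in particular a real Hilbert space under the inner product $g_{\partial M}$. I would first unwind \textbf{(C1)}: since $\{\cdot,\cdot\}_{\partial M}=g_{\partial M}(\cdot,\cdot)+2\im\,\omega_{\partial M}(\cdot,\cdot)$ and $J_{\partial M}$ is multiplication by $\im$, a one-line computation shows that for each fixed $\phi$ the real linear functionals $\psi\mapsto g_{\partial M}(\psi,J_{\partial M}\phi)$ and $\psi\mapsto\omega_{\partial M}(\psi,\phi)$ are proportional with a nonzero constant (the precise constant and sign are convention-dependent and irrelevant here). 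It also follows that $J_{\partial M}$ is a real isometry of $g_{\partial M}$, so $J_{\partial M}L_{\tilde{M}}$ is again a closed subspace, and $J_{\partial M}$-invariance of $L_{M,\partial M}$ gives $J_{\partial M}L_{\tilde{M}}\subseteq L_{M,\partial M}$.

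The heart of the argument is to identify the $g_{\partial M}$-orthogonal complement of $J_{\partial M}L_{\tilde{M}}$ inside the real Hilbert space $L_{M,\partial M}$. By the proportionality just noted, an element $\psi\in L_{M,\partial M}$ is $g_{\partial M}$-orthogonal to $J_{\partial M}\phi$ for every $\phi\in L_{\tilde{M}}$ if and only if $\omega_{\partial M}(\psi,\phi)=0$ for every $\phi\in L_{\tilde{M}}$, that is, if and only if $\psi$ lies in the symplectic complement of $L_{\tilde{M}}$ inside $L_{M,\partial M}$. Since $L_{\tilde{M}}$ is Lagrangian, this symplectic complement is $L_{\tilde{M}}$ itself. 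Hence $\bigl(J_{\partial M}L_{\tilde{M}}\bigr)^{\perp}=L_{\tilde{M}}$, the orthogonal complement being taken in $L_{M,\partial M}$ with respect to $g_{\partial M}$.

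To conclude, I would invoke the orthogonal decomposition of the Hilbert space $L_{M,\partial M}$ by its closed subspace $J_{\partial M}L_{\tilde{M}}$:
\begin{gather*}
L_{M,\partial M}=J_{\partial M}L_{\tilde{M}}\oplus\bigl(J_{\partial M}L_{\tilde{M}}\bigr)^{\perp}=J_{\partial M}L_{\tilde{M}}\oplus L_{\tilde{M}},
\end{gather*}
which is exactly the claimed real orthogonal direct sum (with the summands listed in the other order); triviality of the intersection $L_{\tilde{M}}\cap J_{\partial M}L_{\tilde{M}}$ is then automatic from orthogonality and positive-definiteness of $g_{\partial M}$. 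I do not anticipate a genuine obstacle: this is the standard linear-algebra fact that a complex structure sends a Lagrangian subspace to a complement of it, carried over to the Hilbert-space setting. The only points needing attention are functional-analytic bookkeeping --- reading ``Lagrangian'' in \textbf{(C6)} as ``isotropic and equal to its own symplectic complement'' (so that the identification of $\bigl(J_{\partial M}L_{\tilde{M}}\bigr)^{\perp}$ with $L_{\tilde{M}}$ is legitimate), checking that $L_{M,\partial M}$ is genuinely $J_{\partial M}$-stable so that the whole decomposition takes place inside it, and using closedness of $L_{\tilde{M}}$ (hence of $J_{\partial M}L_{\tilde{M}}$) to split off the orthogonal complement.
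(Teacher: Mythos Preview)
Your argument is correct and is exactly the standard proof: the identity $g_{\partial M}(\psi,J_{\partial M}\phi)=-2\,\omega_{\partial M}(\psi,\phi)$ (from \textbf{(C1)}) converts $g_{\partial M}$-orthogonality to $J_{\partial M}L_{\tilde M}$ into the symplectic-complement condition, and then Lagrangianity of $L_{\tilde M}$ in $L_{M,\partial M}$ together with closedness and $J_{\partial M}$-invariance of $L_{M,\partial M}$ finishes it. The paper does not supply a proof of this lemma at all; it simply cites \cite{Oe:holomorphic}, where the same compatible-triple argument you sketch is used (there for $L_{\partial M}$ itself, the present version being the obvious restriction to the closed complex subspace $L_{M,\partial M}$).
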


\begin{lem}[\cite{Oe:affine}]For a region $M$ the space $A_{M,\partial M}$ decomposes as a generalized direct sum over $\R$ as $A_{M,\partial M}=A_{\tilde{M}}\oplus J_{\partial M} L_{\tilde {M}}$.
\end{lem}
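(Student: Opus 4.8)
The statement to prove is that $A_{M,\partial M}$, which by (C5) is an affine space over $L_{M,\partial M}$, splits as the "generalized direct sum" of the affine subspace $A_{\tilde M}$ (modelled on the linear subspace $L_{\tilde M}$) and the linear subspace $J_{\partial M}L_{\tilde M}$; concretely, that every $\eta\in A_{M,\partial M}$ has a unique expression $\eta=\xi+J_{\partial M}\phi$ with $\xi\in A_{\tilde M}$ and $\phi\in L_{\tilde M}$, and that the resulting projection maps are continuous. The plan is to reduce this entirely to the preceding (linear) lemma by choosing a base point. First I would record that $A_{\tilde M}$ is non-empty, being the image under $a_M$ of the affine space $A_M$, and that it is an affine subspace of $A_{M,\partial M}$ modelled on $L_{\tilde M}$. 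Indeed $A_{\tilde M}\subseteq A_{M,\partial M}$ holds by (C6), and the compatibility of $a_M$ with the real linear map $r_M$ (whose image is $L_{\tilde M}$) gives $a_M(\phi+\eta)=r_M(\phi)+a_M(\eta)$ for $\phi\in L_M$, $\eta\in A_M$; hence translating a point of $A_{\tilde M}$ by an element of $L_{\tilde M}$ stays in $A_{\tilde M}$, and the action of $L_{\tilde M}$ on $A_{\tilde M}$ inherited from that of $L_{\partial M}$ on $A_{\partial M}$ is transitive and free.

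Next, fix any $\eta_0\in A_{\tilde M}$. Since $A_{M,\partial M}$ is an affine space over $L_{M,\partial M}$ with the induced topology, the map $\eta\mapsto\eta-\eta_0$ is a homeomorphism $A_{M,\partial M}\to L_{M,\partial M}$ which, by the previous paragraph, carries $A_{\tilde M}$ onto $L_{\tilde M}$ and is the identity on the common subspace $J_{\partial M}L_{\tilde M}\subseteq L_{M,\partial M}$. The preceding lemma gives the real orthogonal direct sum $L_{M,\partial M}=L_{\tilde M}\oplus J_{\partial M}L_{\tilde M}$, so every $\eta\in A_{M,\partial M}$ can be written uniquely as $\eta=\eta_0+\phi_1+J_{\partial M}\phi_2$ with $\phi_1,\phi_2\in L_{\tilde M}$; putting $\xi\defeq\eta_0+\phi_1\in A_{\tilde M}$ yields $\eta=\xi+J_{\partial M}\phi_2$ as required, and uniqueness of this splitting follows from the uniqueness in the linear lemma together with the fact that $A_{\tilde M}$ is modelled exactly on $L_{\tilde M}$, which meets $J_{\partial M}L_{\tilde M}$ only in $0$.

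Finally I would check that the splitting does not depend on the auxiliary choice of $\eta_0$: any other admissible base point differs from $\eta_0$ by an element of $L_{\tilde M}$, which is simply absorbed into the $A_{\tilde M}$-component and leaves the $J_{\partial M}L_{\tilde M}$-component unchanged. Continuity of the two projections $A_{M,\partial M}\to A_{\tilde M}$ and $A_{M,\partial M}\to J_{\partial M}L_{\tilde M}$ is obtained by transporting the orthogonal projections of $L_{M,\partial M}$ (bounded, hence continuous) through the homeomorphism $\eta\mapsto\eta-\eta_0$. There is no genuine obstacle here; the only points needing a little care are the verification that $A_{\tilde M}$ is honestly an affine subspace over $L_{\tilde M}$ and the independence of the decomposition from the base point, and both follow directly from the compatibility clause of (C6).
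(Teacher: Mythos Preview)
Your proof is correct and follows the natural approach: choose a base point in $A_{\tilde M}$, translate to reduce to the linear situation, and invoke the preceding lemma $L_{M,\partial M}=L_{\tilde M}\oplus J_{\partial M}L_{\tilde M}$. The paper itself gives no proof here, merely citing \cite{Oe:affine}; the argument there is exactly this base-point reduction, so your write-up matches the intended proof.
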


We denote in the following the (complex) orthogonal complement of $L_{M,\partial M}$ in $L_{\partial M}$ by $L_{M,\partial M}^\perp$.

\begin{cor}\label{cor:declm}For a region $M$ the space $L_{\partial M}$ decomposes as a direct sum over $\R$ as
\begin{gather}\label{eq:declm}
 L_{\partial M}=L_{\tilde{M}}\oplus J_{\partial M} L_{\tilde {M}}\oplus L_{M,\partial M}^\perp .
\end{gather}
\end{cor}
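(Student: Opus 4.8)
\emph{Proof proposal.} The plan is to read off the decomposition \eqref{eq:declm} by combining the orthogonal projection theorem for the Hilbert space $L_{\partial M}$ with the first of the two elementary lemmas recalled above (the one from~\cite{Oe:holomorphic}). By axiom \textbf{(C5)} the subspace $L_{M,\partial M}$ is a closed complex Hilbert subspace of $L_{\partial M}$, so its complex orthogonal complement $L_{M,\partial M}^\perp$ (with respect to $\{\cdot,\cdot\}_{\partial M}$) yields
\begin{gather*}
 L_{\partial M}=L_{M,\partial M}\oplus L_{M,\partial M}^\perp ,
\end{gather*}
a complex orthogonal direct sum and hence, a fortiori, an orthogonal direct sum of real Hilbert spaces with respect to $g_{\partial M}=\Re\{\cdot,\cdot\}_{\partial M}$ (which induces the same topology).

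Next I would substitute into this the real orthogonal decomposition $L_{M,\partial M}=L_{\tilde{M}}\oplus J_{\partial M}L_{\tilde{M}}$ provided by the lemma from~\cite{Oe:holomorphic}. This is consistent because $L_{M,\partial M}$, being a complex subspace, is invariant under the scalar multiplication $J_{\partial M}$, so $J_{\partial M}L_{\tilde{M}}\subseteq L_{M,\partial M}$ and the right-hand side of \eqref{eq:declm} is well posed. Plugging in gives
\begin{gather*}
 L_{\partial M}=L_{\tilde{M}}\oplus J_{\partial M}L_{\tilde{M}}\oplus L_{M,\partial M}^\perp .
\end{gather*}

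It then remains to observe that the three summands are mutually $g_{\partial M}$-orthogonal and the sum is direct: $L_{\tilde{M}}\perp J_{\partial M}L_{\tilde{M}}$ is the content of the cited lemma, and both $L_{\tilde{M}}$ and $J_{\partial M}L_{\tilde{M}}$ lie in $L_{M,\partial M}$, hence are $g_{\partial M}$-orthogonal to $L_{M,\partial M}^\perp$ by construction. There is essentially no obstacle here; the only point deserving a word of care is bookkeeping, namely that ``direct sum over $\R$'' is understood in the real-orthogonal sense and that the passage from the complex inner product to its real part preserves completeness and orthogonality of the relevant subspaces, which it does.
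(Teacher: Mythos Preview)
Your proposal is correct and matches the paper's approach: the paper states this as an immediate corollary, without explicit proof, of the preceding lemma $L_{M,\partial M}=L_{\tilde{M}}\oplus J_{\partial M}L_{\tilde{M}}$ together with the definition of $L_{M,\partial M}^\perp$ as the complex orthogonal complement of the closed Hilbert subspace $L_{M,\partial M}\subseteq L_{\partial M}$. Your write-up simply spells out the two steps (orthogonal projection in $L_{\partial M}$, then substitution of the lemma) that the paper leaves implicit.
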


\begin{cor}\label{cor:decam}For a region $M$ the space $A_{\partial M}$ decomposes as a generalized direct sum over $\R$ as
\begin{gather*}
 A_{\partial M}=A_{\tilde{M}}\oplus J_{\partial M} L_{\tilde {M}}\oplus L_{M,\partial M}^\perp .
\end{gather*}
\end{cor}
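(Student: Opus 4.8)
The plan is to combine the two preceding lemmas with Corollary~\ref{cor:declm} and the elementary fact that an affine decomposition can be pushed forward along a linear splitting of the underlying vector space.

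First I would recall that, by (C5), $A_{M,\partial M}$ is an affine subspace of $A_{\partial M}$ modeled on the closed subspace $L_{M,\partial M}\subseteq L_{\partial M}$, and that it is nonempty, since it contains $A_{\tilde{M}}$, the image under $a_M$ of the nonempty affine space $A_M$. Fix a base point $\eta_0\in A_{M,\partial M}$. Because $L_{M,\partial M}$ is closed in the Hilbert space $L_{\partial M}$, we have the topological orthogonal splitting $L_{\partial M}=L_{M,\partial M}\oplus L_{M,\partial M}^\perp$ with continuous projections. Writing an arbitrary $\eta\in A_{\partial M}$ as $\eta=\eta_0+\phi$ with $\phi\in L_{\partial M}$ and decomposing $\phi=\phi'+\phi''$ along this splitting, one gets $\eta=(\eta_0+\phi')+\phi''$ with $\eta_0+\phi'\in A_{M,\partial M}$ and $\phi''\in L_{M,\partial M}^\perp$. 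This exhibits $A_{\partial M}$ as the generalized direct sum $A_{M,\partial M}\oplus L_{M,\partial M}^\perp$; uniqueness of the two summands, independence of the construction from the choice of $\eta_0$, and continuity of the associated projection maps all follow from the corresponding properties of the linear splitting of $L_{\partial M}$.

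It then remains to substitute into this splitting the decomposition $A_{M,\partial M}=A_{\tilde{M}}\oplus J_{\partial M}L_{\tilde{M}}$ furnished by the second lemma, together with $L_{M,\partial M}=L_{\tilde{M}}\oplus J_{\partial M}L_{\tilde{M}}$ from the first. Associativity of the generalized direct sum then yields $A_{\partial M}=A_{\tilde{M}}\oplus J_{\partial M}L_{\tilde{M}}\oplus L_{M,\partial M}^\perp$, which is the claimed decomposition; it is the exact affine analogue of Corollary~\ref{cor:declm}, with $A_{\tilde{M}}$ replacing $L_{\tilde{M}}$ in the first slot. The only point requiring care — rather than any real obstacle — is the formal bookkeeping for generalized direct sums of affine spaces: one must verify that combining an affine-over-linear decomposition with a linear-over-linear decomposition is associative and produces a well-defined affine space over $L_{\partial M}$ with continuous structure maps. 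This is purely formal, parallels the proof of Corollary~\ref{cor:declm}, and requires no analytic input beyond the closedness of $L_{M,\partial M}$ already invoked there.
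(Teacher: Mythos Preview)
Your argument is correct and is precisely the natural derivation the authors have in mind: the paper states this corollary without proof, as an immediate consequence of the two preceding lemmas and Corollary~\ref{cor:declm}, and your write-up simply makes explicit the affine bookkeeping behind that implication.
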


\section{Quantization}\label{sec:quantization}

In the present section we exhibit the quantization functor that assigns to a semiclassical affine field theory in terms of the axioms of Section~\ref{sec:classax} a corresponding GBQFT in terms of the axioms of Appendix~\ref{sec:coreaxioms} in a constructive manner.\footnote{Strictly speaking, the exhibited quantization prescription is not quite a functor since the quantum theory might have restrictions on allowable gluings not present in the classical theory. These come from an integrability condition.} This functor is a generalization of the one given in~\cite{Oe:affine}. Correspondingly, we shall rely heavily on the results presented in \cite{Oe:affine}. These in turn were obtained by recurrence to the linear theory and the functor presented in~\cite{Oe:holomorphic}. As we will also make use of this recurrence here, we start by considering the linear theory.
We shall use superscripts $L$ to distinguish the output of the quantization functor for the linear theory from that of the affine theory, to be considered subsequently.

\subsection{Linear theory}

In this subsection we consider the special case of linear field theory. In terms of the axiomatic system of Section~\ref{sec:classax} this means that the spaces $A_{\Sigma}$ and $A_M$ of local solutions on hypersurfaces and in regions are linear spaces, i.e., have distinguished base points $\varphi_{\Sigma}$ and $\varphi_M$ that fit together under hypersurface decompositions and gluings of regions. The spaces $A_{\Sigma}$ and $A_M$ can then be identified with their vector space counterparts $L_{\Sigma}$ and $L_M$, simplifying considerably the axiomatic system. The resulting system is then a generalization of the axiomatic system for linear field theory presented in \cite[Section~4.1]{Oe:holomorphic}. The latter is recovered completely by always setting $L_{M,\partial M}=L_M$ and eliminating mention of the action from the axioms. In case that the spacetime system arises in terms of submanifolds of a global manifold the choice of base points is essentially equivalent to a choice of global solution $\varphi$ of which $\varphi_{\Sigma}$ and $\varphi_M$ are local restrictions.

The following is largely a review of~\cite[Section~4]{Oe:holomorphic}. We shall indicate where new results are presented. We recall that for a hypersurface $\Sigma$ the real inner product $\frac{1}{2} g_\Sigma$ on the complex Hilbert space $L_\Sigma$ defines a Gaussian measure $\nu_\Sigma$ on the space $\hat{L}_\Sigma$ which is an extension of $L_{\Sigma}$. More precisely, $\hat{L}_\Sigma$ can be identified with the algebraic dual of the topological dual of $L_\Sigma$ so that there is a natural inclusion $L_\Sigma\toi \hat{L}_\Sigma$. Recall also that the square-integrable holomorphic functions on~$\hat{L}_\Sigma$ form a separable complex Hilbert space $\rH^2\big(\hat{L}_\Sigma,\nu_\Sigma\big)$, whose elements are uniquely determined by their values on the subspace $L_\Sigma$. This is declared to be the \emph{state space} $\cHl_\Sigma$ of the quantized theory. This construction is also called the \emph{holomorphic representation} and the elements of $\cHl_\Sigma$ viewed as functions on $L_\Sigma$ or $\hat{L}_\Sigma$ are referred to as \emph{(holomorphic) wave functions}. The conjugate linear isometry $\iota_{\Sigma}^\text{L}\colon \cHl_{\Sigma}\to \cHl_{\overline{\Sigma}}$ associated to orientation change of the hypersurface $\Sigma$ is given by complex conjugation of the wave function. For the decomposition of a hypersurface $\Sigma=\Sigma_1\cup\Sigma_2$ it is clear that we have $\cHl_{\Sigma}=\cHl_{\Sigma_1}\ctens\cHl_{\Sigma_2}$ where $\ctens$ denotes the completed tensor product, since $L_{\Sigma}=L_{\Sigma_1}\oplus L_{\Sigma_2}$. Thus, the quantized theory satisfies Axioms~(T1),~(T1b),~(T2),~(T2b).

A particularly important set of states are the \emph{coherent states}. These are the usual coherent states in linear field theory that can be obtained as exponentials of creation operators. They are parametrized by elements of $L_{\Sigma}$. The coherent state $K_{\xi}\in \cHl_\Sigma$ associated to the local solution $\xi\in L_{\Sigma}$ is given by the wave function
\begin{gather}
 K_{\xi}(\phi)=\exp\big(\tfrac{1}{2}\{\xi,\phi\}_{\Sigma}\big),\qquad\forall\,\phi\in L_{\Sigma} . \label{eq:lcoh}
\end{gather}
Key properties of these states are the \emph{reproducing property} and the \emph{completeness property},
\begin{gather*}
 \langle K_{\xi},\psi\rangle_{\Sigma}^{\text{L}}=\psi(\xi), \qquad\forall\,\xi\in L_{\Sigma},\qquad\forall\, \psi\in\cHl_{\Sigma},\\
 \langle \psi',\psi\rangle_{\Sigma}^{\text{L}}=
 \int_{\hat{L}_{\Sigma}} \langle\psi',K_{\xi}\rangle_{\Sigma}^{\text{L}} \langle K_{\xi},\psi\rangle_{\Sigma}^{\text{L}}\, \xd\nu_{\Sigma}(\xi),\qquad\forall\, \psi,\psi'\in\cHl_{\Sigma} .
\end{gather*}
Further properties of coherent states are
\begin{gather*}
 \langle K_{\xi'}, K_{\xi}\rangle_{\Sigma}^{\text{L}}=\exp\left(\tfrac{1}{2}\{\xi,\xi'\}_{\Sigma}\right), \qquad\forall\, \xi,\xi'\in L_{\Sigma},\\
 \iota_{\Sigma}^{\text{L}}(K_{\Sigma,\xi})=K_{\overline{\Sigma},\xi}, \qquad\forall\,\xi\in L_{\Sigma},\\
 K_{\Sigma_1\cup\Sigma_2, (\xi_1,\xi_2)}=K_{\Sigma_1,\xi_1}\tens K_{\Sigma_2,\xi_2}, \qquad\forall\,\xi_1\in L_{\Sigma_1}, \xi_2\in L_{\Sigma_2} .
\end{gather*}
We denote the \emph{normalized} version of the coherent state $K_\xi$ by $\tilde{K}_\xi$.

Given a spacetime region $M$ the inner product $\frac{1}{4} g_{\partial M}$ restricted to $L_{\tilde{M}}$ defines a Gaussian measure on the space $\hat{L}_{\tilde{M}}$ that we denote by $\nu_M$. Here, $\hat{L}_{\tilde{M}}$ can be identified with the algebraic dual of the topological dual of~$L_{\tilde{M}}$. Note that the measure $\nu_M$ (denoted~$\nu_{\tilde{M}}$ in~\cite{Oe:affine, Oe:holomorphic}) is distinct from the measure $\nu_{\partial M}$ restricted to the same space. The amplitude map $\rhol_M\colon \cHld_{\partial M}\to\C$ is given by the integral (the map~$r_M$ being implicit in our notation)
\begin{gather}
 \rhol_M(\psi)=\int_{\hat{L}_{\tilde{M}}} \psi(\phi)\,\xd\nu_M(\phi) .
 \label{eq:linampl}
\end{gather}
Note that $\psi$ is square integrable with respect to $\nu_{\partial M}$. This does not mean that it is integrable with respect to $\nu_M$. Indeed, we shall say that $\rhol_M$ is defined for $\psi$ precisely if this is the case. We denote the subspace of $\cHl_{\partial M}$ with this property by $\cHld_{\partial M}$. As we shall see this includes at least all coherent states and their linear combinations. As these form a dense subspace of~$\cHl_{\partial M}$,~$\cHld_{\partial M}$ is dense. It was shown in~\cite{Oe:feynobs} that the prescription~(\ref{eq:linampl}) is equivalent to Feynman path integral quantization.

The following result gives the value of the amplitude map on a coherent state. This generalizes Proposition~4.2 of \cite{Oe:holomorphic}.
\begin{prop} \label{prop:lamplcoh} Let $M$ be a region and $\xi\in L_{\partial M}$. Write $\xi=\xi^{\text{R}}+J_{\partial M}\xi^{\text{I}}+\xi^0$ in terms of the decomposition \eqref{eq:declm}. Then, $K_{\xi}\in\cHld_{\partial M}$ and
 \begin{gather*}
 \rhol_M(K_{\xi})=\exp\big(\tfrac{1}{4}g_{\partial M}\big(\xi^{\text{R}},\xi^{\text{R}}\big)
 -\tfrac{1}{4}g_{\partial M}\big(\xi^{\text{I}},\xi^{\text{I}}\big)-\tfrac{\im}{2}g_{\partial M}\big(\xi^{\text{R}},\xi^{\text{I}}\big)\big) .
 \end{gather*}
\end{prop}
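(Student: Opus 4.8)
The plan is to unwind $\rhol_M(K_\xi)=\int_{\hat L_{\tilde{M}}}K_\xi(\phi)\,\xd\nu_M(\phi)$ into an explicit Gaussian integral over $\hat L_{\tilde{M}}$ and evaluate it, reducing in the end to the linear-theory computation of \cite[Proposition~4.2]{Oe:holomorphic}. Since the integration is over $\hat L_{\tilde{M}}$ only the restriction of the coherent wave function $K_\xi$ to $L_{\tilde{M}}\subseteq L_{\partial M}$ matters. First I would compute this restriction using the decomposition $\xi=\xi^{\text{R}}+J_{\partial M}\xi^{\text{I}}+\xi^0$ of \eqref{eq:declm}, with $\xi^{\text{R}},\xi^{\text{I}}\in L_{\tilde{M}}$ and $\xi^0\in L_{M,\partial M}^\perp$. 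For $\phi\in L_{\tilde{M}}$ one has $\{\xi^0,\phi\}_{\partial M}=0$, because $\xi^0$ is \emph{complex} orthogonal to $L_{M,\partial M}$ and $L_{\tilde{M}}\subseteq L_{M,\partial M}$ by (C6); one has $g_{\partial M}(J_{\partial M}\xi^{\text{I}},\phi)=0$ by the real orthogonal splitting $L_{M,\partial M}=L_{\tilde{M}}\oplus J_{\partial M}L_{\tilde{M}}$ (the first of the two lemmas recalled above); and $\omega_{\partial M}$ vanishes on $L_{\tilde{M}}\times L_{\tilde{M}}$ since $L_{\tilde{M}}$ is Lagrangian in $(L_{M,\partial M},\omega_{\partial M})$, again by (C6). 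Combining these with the compatibility $\omega_{\partial M}(J_{\partial M}\phi',\phi)=-\tfrac12 g_{\partial M}(\phi',\phi)$ (equivalently, conjugate linearity of $\{\cdot,\cdot\}_{\partial M}$ in its first argument) gives, for all $\phi\in L_{\tilde{M}}$,
\begin{gather*}
\tfrac12\{\xi,\phi\}_{\partial M}=\tfrac12 g_{\partial M}\big(\xi^{\text{R}},\phi\big)-\tfrac{\im}{2}g_{\partial M}\big(\xi^{\text{I}},\phi\big)=g_{\partial M}(w,\phi),\qquad w\defeq\tfrac12\big(\xi^{\text{R}}-\im\,\xi^{\text{I}}\big)\in L_{\tilde{M}}\tens\C ,
\end{gather*}
with $g_{\partial M}$ read bilinearly. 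Notably, the dependence on $\xi^0$ has dropped out — this is the only place where the present (merely isotropic) setting goes beyond \cite{Oe:holomorphic}.

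Next I would record that $K_\xi\big|_{\hat L_{\tilde{M}}}=\exp\big(g_{\partial M}(w,\cdot)\big)$ is the exponential of a continuous ($\C$-valued) linear functional; its modulus is $\exp\big(\tfrac12 g_{\partial M}(\xi^{\text{R}},\cdot)\big)$, the exponential of a continuous real linear functional, and Gaussian measures integrate such exponentials, so $K_\xi$ is $\nu_M$-integrable. Hence $\rhol_M(K_\xi)$ is defined, i.e.\ $K_\xi\in\cHld_{\partial M}$. To evaluate the integral I would invoke the Gaussian identity underlying the linear theory of \cite{Oe:holomorphic}: since $\nu_M$ is the centered Gaussian measure on $\hat L_{\tilde{M}}$ attached to the real inner product $\tfrac14 g_{\partial M}$ restricted to $L_{\tilde{M}}$,
\begin{gather*}
\int_{\hat L_{\tilde{M}}}\exp\big(g_{\partial M}(v,\phi)\big)\,\xd\nu_M(\phi)=\exp\big(g_{\partial M}(v,v)\big),\qquad\forall\, v\in L_{\tilde{M}},
\end{gather*}
and both sides extend to entire functions of $v$ (the left-hand side by differentiation under the integral, using integrability against $\nu_M$ of exponentials of linear functionals and of their products with polynomials), so the identity persists for $v\in L_{\tilde{M}}\tens\C$ with $g_{\partial M}$ read bilinearly.

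Applying this with $v=w$ and expanding $g_{\partial M}(w,w)=\tfrac14 g_{\partial M}(\xi^{\text{R}},\xi^{\text{R}})-\tfrac14 g_{\partial M}(\xi^{\text{I}},\xi^{\text{I}})-\tfrac{\im}{2}g_{\partial M}(\xi^{\text{R}},\xi^{\text{I}})$ then gives exactly the asserted value of $\rhol_M(K_\xi)$. The step I expect to require the most care is nothing conceptual but rather the bookkeeping of normalization constants — in particular pinning down the factor in the displayed Gaussian identity, which hinges on the distinction between the inner product $\tfrac14 g_{\partial M}$ defining $\nu_M$ and the inner product $\tfrac12 g_{\partial M}$ defining $\nu_{\partial M}$ — together with the (routine) justification of the analytic continuation to complex $v$. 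The only genuinely new ingredient relative to \cite[Proposition~4.2]{Oe:holomorphic} is the observation that the $L_{M,\partial M}^\perp$-component of $\xi$ is annihilated by the integration, because $L_{\tilde{M}}$ is complex orthogonal to $L_{M,\partial M}^\perp$.
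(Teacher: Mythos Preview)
Your proposal is correct and follows essentially the same approach as the paper. The paper's proof is terser: it simply observes that $\xi^0$ is complex orthogonal to $L_{\tilde M}$, hence $K_\xi(\phi)=K_{\xi-\xi^0}(\phi)$ for all $\phi\in L_{\tilde M}$, and then invokes Proposition~4.2 of \cite{Oe:holomorphic} for $\xi-\xi^0\in L_{M,\partial M}$; you unfold that invocation by writing out the Gaussian integral explicitly and evaluating it via the moment-generating identity, but the logical content and the single new observation (that the $L_{M,\partial M}^\perp$-component drops out) are identical.
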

\begin{proof} Observe that $\xi^0$ is complex orthogonal to $L_{\tilde M}$. That is, given $\phi\in L_{\tilde{M}}$ we have $K_{\xi}(\phi)=K_{\xi-\xi^0}(\phi)$. By inspection of~(\ref{eq:linampl}) we see that we must have $\rhol_M(K_{\xi})=\rhol_M(K_{\xi-\xi^0})$ if the amplitude is defined. The statement reduces then to that of Proposition~4.2 of~\cite{Oe:holomorphic}.
\end{proof}

With this we satisfy Axiom~(T4). Also, Axiom (T3x) is satisfied. The proof reduces to that given in~\cite{Oe:holomorphic} due to the fact that $L_{\partial\hat{\Sigma}}=L_{\hat{\Sigma},\partial\hat{\Sigma}}$ for slice regions $\hat{\Sigma}$, see Axiom~(C7). Axiom~(T5a) is also immediate.
\begin{cor} For the corresponding normalized coherent state $\tilde{K}_\xi$ we get
 \begin{gather}
 \rhol_M(\tilde{K}_{\xi})=\exp\big({-}\tfrac{\im}{2}g_{\partial M}\big(\xi^{\text{R}},\xi^{\text{I}}\big)
 -\tfrac{1}{2}g_{\partial M}\big(\xi^{\text{I}},\xi^{\text{I}}\big)
 -\tfrac{1}{4}g_{\partial M}\big(\xi^{0},\xi^{0}\big)\big) . \label{eq:lncohampl}
 \end{gather}
\end{cor}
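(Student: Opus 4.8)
The plan is to reduce everything to Proposition~\ref{prop:lamplcoh} by simply normalizing the coherent state. First I would recall that $\tilde{K}_\xi=\|K_\xi\|_{\partial M}^{-1}K_\xi$, and compute the norm from the coherent state inner product formula: $\|K_\xi\|_{\partial M}^2=\langle K_\xi,K_\xi\rangle_{\partial M}^{\text{L}}=\exp\big(\tfrac12\{\xi,\xi\}_{\partial M}\big)=\exp\big(\tfrac12 g_{\partial M}(\xi,\xi)\big)$, the last step because $\{\xi,\xi\}_{\partial M}$ is real and hence equals $g_{\partial M}(\xi,\xi)=\Re\{\xi,\xi\}_{\partial M}$. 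Thus $\|K_\xi\|_{\partial M}=\exp\big(\tfrac14 g_{\partial M}(\xi,\xi)\big)$.

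Next I would expand $g_{\partial M}(\xi,\xi)$ along the decomposition $\xi=\xi^{\text{R}}+J_{\partial M}\xi^{\text{I}}+\xi^0$ of~\eqref{eq:declm}. The key observation is that this three-term decomposition is orthogonal with respect to the real inner product $g_{\partial M}$: by the first of the two elementary lemmas recalled above, $L_{M,\partial M}=L_{\tilde M}\oplus J_{\partial M}L_{\tilde M}$ is a real $g_{\partial M}$-orthogonal direct sum, while $L_{M,\partial M}^\perp$ is by definition the complex orthogonal complement of $L_{M,\partial M}$ in $L_{\partial M}$ and hence in particular $g_{\partial M}$-orthogonal to both of the other summands. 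Since moreover $J_{\partial M}$ is an isometry of $(L_{\partial M},g_{\partial M})$ --- being multiplication by $\im$ in the complex Hilbert space $L_{\partial M}$ --- we have $g_{\partial M}(J_{\partial M}\xi^{\text{I}},J_{\partial M}\xi^{\text{I}})=g_{\partial M}(\xi^{\text{I}},\xi^{\text{I}})$, so that
\[
 g_{\partial M}(\xi,\xi)=g_{\partial M}\big(\xi^{\text{R}},\xi^{\text{R}}\big)+g_{\partial M}\big(\xi^{\text{I}},\xi^{\text{I}}\big)+g_{\partial M}\big(\xi^0,\xi^0\big).
\]

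Finally, since by Proposition~\ref{prop:lamplcoh} we have $K_\xi\in\cHld_{\partial M}$ and $\rhol_M$ is linear (being given by the integral~\eqref{eq:linampl}), the state $\tilde K_\xi$ lies in $\cHld_{\partial M}$ as well and $\rhol_M(\tilde K_\xi)=\|K_\xi\|_{\partial M}^{-1}\rhol_M(K_\xi)$. Substituting the value of $\rhol_M(K_\xi)$ from Proposition~\ref{prop:lamplcoh} together with the expression for $\|K_\xi\|_{\partial M}$ just obtained, the $\tfrac14 g_{\partial M}(\xi^{\text{R}},\xi^{\text{R}})$ terms cancel and the $g_{\partial M}(\xi^{\text{I}},\xi^{\text{I}})$ contributions combine to $-\tfrac12 g_{\partial M}(\xi^{\text{I}},\xi^{\text{I}})$, leaving exactly~\eqref{eq:lncohampl}. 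There is no serious obstacle here; the only point requiring a moment's care is the $g_{\partial M}$-orthogonality of the three summands in~\eqref{eq:declm}, which however follows immediately from the lemma and from the definition of $L_{M,\partial M}^\perp$ as a complex orthogonal complement.
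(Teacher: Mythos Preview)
Your argument is correct and is precisely the intended one: the paper states the result as an immediate corollary of Proposition~\ref{prop:lamplcoh} without spelling out a proof, and dividing by the norm $\|K_\xi\|_{\partial M}=\exp\big(\tfrac14 g_{\partial M}(\xi,\xi)\big)$ together with the $g_{\partial M}$-orthogonality of the decomposition~\eqref{eq:declm} is exactly what is needed.
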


Recall that a simple, but compelling physical interpretation of the amplitude formula~(\ref{eq:lncohampl}) was put forward in~\cite{Oe:holomorphic}, valid here in the special case $L_{M,\partial M}=L_{\partial M}$. Essentially this same interpretation extends to the present more general setting. If we think in classical terms, the component $\xi^{\mathrm{R}}$ of the boundary solution $\xi$ can be continued consistently to the interior and is hence classically allowed. The components $J_{\partial M}\xi^{\mathrm{I}}$ and $\xi^0$ do not possess such a continuation and are hence classically forbidden. This is reflected precisely in equation~(\ref{eq:lncohampl}). If the classically forbidden components are not present, the amplitude has unit value. On the other hand, the presence of a classically forbidden component leads to an exponential suppression, governed precisely by the ``magnitude'' of this component (measured in terms of the metric $g_{\partial M}$). It is interesting to note that the suppression factor is not the same for $J_{\partial M}\xi^{\mathrm{I}}$ and $\xi^0$. However, it is not clear whether this difference can be given a simple physical interpretation.

It remains to show that the quantized theory satisfies the gluing Axiom (T5b). Consider a~region~$M$ with boundary decomposition $\partial M=\partial M_1\cup\Sigma\cup\overline{\Sigma'}$ gluable to itself along~$\Sigma$ with~$\Sigma'$ resulting in the region $M_1$. Using the completeness relation of the coherent states the gluing identity~(\ref{eq:glueid}) can be rewritten as
\begin{gather}
 \rho_{M_1}^{\mathrm{L}}(\psi)\cdot c\big(M;\Sigma,\overline{\Sigma'}\big)=\int_{\hat{L}_\Sigma}\rho_M^{\mathrm{L}}(\psi\tens K_{\xi}\tens\iota_\Sigma(K_{\xi}))\, \xd\nu_\Sigma(\xi) ,\label{eq:lglueid}
\end{gather}
for any $\psi\in\cHld_{\partial M_1}$.
Recall that for Axiom (T5b) to hold we must require an additional \emph{integrability condition}. We say that the gluing data satisfies the integrability condition if the function $L_{\Sigma}\to\C$ given by
\begin{gather}
 \xi\mapsto \rhol_M(K_0\tens K_{\xi}\tens\iota_{\Sigma}(K_{\xi}))
 \label{eq:lint}
\end{gather}
extends to an integrable function on $\hat{L}_{\Sigma}$ with respect to the measure $\nu_{\Sigma}$. The validity of \mbox{Axiom~(T5b)} is then the subject of the following result, generalizing Theorem~4.5 of \cite{Oe:holomorphic}.

\begin{thm} \label{thm:lgluing}If the integrability condition is satisfied, then Axiom~{\rm (T5b)} holds. Moreover, the \emph{gluing anomaly factor} is given by
\begin{gather}
 c(M;\Sigma,\overline{\Sigma'}) = \int_{\hat{L}_{\Sigma}} \rhol_M(K_0\tens K_{\xi}\tens \iota_{\Sigma}(K_{\xi}))\,\xd\nu_{\Sigma}(\xi) . \label{eq:lanom}
\end{gather}
\end{thm}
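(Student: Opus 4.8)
The plan is to reduce the statement to the linear gluing theorem of \cite{Oe:holomorphic}, exactly as the affine quantization in \cite{Oe:affine} reduces to the linear one, but keeping careful track of the extra orthogonal component $L_{M,\partial M}^\perp$ appearing in the decompositions of Corollaries~\ref{cor:declm} and~\ref{cor:decam}. First I would recall that Axiom~(T5b) demands the gluing identity~\eqref{eq:glueid}; using the completeness property of the coherent states on $L_\Sigma$, this identity is equivalent to the rewritten form~\eqref{eq:lglueid}. So it suffices to establish~\eqref{eq:lglueid} for all $\psi$ in a dense subspace of $\cHld_{\partial M_1}$, and by linearity and continuity it is enough to treat $\psi = K_\zeta$ for $\zeta\in L_{\partial M_1}$. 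Then both sides of~\eqref{eq:lglueid} are to be computed explicitly using the coherent-state amplitude formula of Proposition~\ref{prop:lamplcoh}: the left side is $\rhol_{M_1}(K_\zeta)$ times the anomaly factor, and the right side is the $\nu_\Sigma$-integral of $\rhol_M(K_\zeta\tens K_\xi\tens\iota_\Sigma(K_\xi))$.

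The key computational step is to express the integrand on the right of~\eqref{eq:lglueid} in closed form. Apply Proposition~\ref{prop:lamplcoh} to the region $M$ with the boundary solution $\zeta+\xi+\overline{\xi}\in L_{\partial M}=L_{\partial M_1}\oplus L_\Sigma\oplus L_{\overline{\Sigma'}}$, decomposing it according to~\eqref{eq:declm} for $\partial M$ into real, imaginary (i.e.\ $J_{\partial M}$-twisted) and orthogonal parts. The point is that the exact sequence $L_{M_1}\toi L_M\rightrightarrows L_\Sigma$ of Axiom~(C9), together with the commuting diagram relating $r_{M_1}$, $r_M$ and $\lambda_1$, pins down how $L_{\tilde M_1}$ sits inside $L_{\tilde M}$: the image $r_M(L_{M_1})$ is precisely the part of $L_{\tilde M}$ whose two projections to $L_\Sigma$ coincide. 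A Gaussian-integral computation in $\xi$ (completing the square against the quadratic form $\tfrac14 g_\Sigma$) then collapses the $\xi$-dependent Gaussian to a $\zeta$-independent constant times $\rhol_{M_1}(K_\zeta)$; isolating $\zeta=0$ identifies that constant with the right-hand side of~\eqref{eq:lanom}, and the integrability condition~\eqref{eq:lint} is exactly what guarantees the $\xi$-integral converges and that this constant is finite. This is structurally identical to the proof of Theorem~4.5 of \cite{Oe:holomorphic}; the arguments for the $J_{\partial M}\xi^{\mathrm I}$-type terms go through verbatim, and the new $L_{M,\partial M}^\perp$ contributions, by Proposition~\ref{prop:lamplcoh}, only contribute a factor $\exp(-\tfrac14 g_{\partial M}(\zeta^0,\zeta^0))$ that depends on $\zeta$ alone and is inert under the $\xi$-integration, hence it matches on both sides.

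The main obstacle I anticipate is not the Gaussian bookkeeping but verifying that the orthogonal component genuinely decouples from the gluing in the way just claimed — i.e.\ that under the identifications of Axiom~(C9) the ``$\xi + \overline{\xi}$'' insertion on $\Sigma\cup\overline{\Sigma'}$ contributes nothing to the $\xi^0$-component of the combined boundary solution, and that $L_{M_1,\partial M_1}^\perp$ is correctly recovered from $\lambda_1(L_{M,\partial M}^\perp)$ via the inclusion $\lambda_1(L_{M,\partial M})\subseteq L_{M_1,\partial M_1}$. This requires a careful chase through the two commuting diagrams of Axiom~(C9) and the symplectic-orthogonality properties of $L_{\tilde M}$ inside $L_{M,\partial M}$ from the first Lemma. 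Once that decoupling is in hand, the remainder is a direct adaptation of \cite[Theorem~4.5]{Oe:holomorphic}, with the convergence and the explicit value of $c(M;\Sigma,\overline{\Sigma'})$ falling out of the same manipulation.
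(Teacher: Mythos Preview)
Your overall strategy---reduce to coherent states $\psi=K_\zeta$, reduce to Theorem~4.5 of \cite{Oe:holomorphic}, and isolate the new orthogonal summand---is exactly the paper's. The difference is in \emph{how} the orthogonal component is eliminated, and here your plan runs into a genuine difficulty.

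You propose to apply Proposition~\ref{prop:lamplcoh} to the combined boundary datum $(\zeta,\xi,\xi)\in L_{\partial M}$ and assert that its $L_{M,\partial M}^\perp$-component is $(\zeta^0,0,0)$, hence $\xi$-independent. One half of this is fine: if $\zeta^0\in L_{M_1,\partial M_1}^\perp$, then $(\zeta^0,0,0)\in L_{M,\partial M}^\perp$ because any $(\phi_1,\phi_\Sigma,\phi_{\Sigma'})\in L_{M,\partial M}$ has $\phi_1=\lambda_1(\phi_1,\phi_\Sigma,\phi_{\Sigma'})\in L_{M_1,\partial M_1}$ by Axiom~(C9). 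But the other half---that $(0,\xi,\xi)$ and $(\zeta^{\mathrm X},0,0)$ lie in $L_{M,\partial M}$---is \emph{not} a consequence of the axioms. Nothing in (C5)--(C9) forces the ``diagonal'' $\{(0,\xi,\xi):\xi\in L_\Sigma\}$ into $L_{M,\partial M}$, nor does $\lambda_1(L_{M,\partial M})\subseteq L_{M_1,\partial M_1}$ give you a section $L_{M_1,\partial M_1}\to L_{M,\partial M}$. So the ``careful chase'' you anticipate will not produce the decoupling you want, and the $L_{M,\partial M}^\perp$-part of $(\zeta,\xi,\xi)$ may well be $\xi$-dependent.

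The paper avoids this entirely by \emph{not} invoking Proposition~\ref{prop:lamplcoh} on the $M$-side. Instead it returns to the integral definition~\eqref{eq:linampl}: for $\eta\in L_{\tilde M}$ with $r_M(\eta)=(\eta_1,\eta_\Sigma,\eta_{\Sigma'})$, the commuting diagram of Axiom~(C9) gives $\eta_1\in L_{M_1,\partial M_1}$, whence $\{\zeta,\eta_1\}_{\partial M_1}=\{\zeta^{\mathrm X},\eta_1\}_{\partial M_1}$ and therefore $K_\zeta(\eta_1)=K_{\zeta^{\mathrm X}}(\eta_1)$ directly at the wave-function level. This lets one replace $\zeta$ by $\zeta^{\mathrm X}\in L_{M_1,\partial M_1}$ on both sides of~\eqref{eq:gluingidcoh} without ever decomposing $(\zeta,\xi,\xi)$ inside $L_{\partial M}$, and the remainder is then literally Theorem~4.5 of \cite{Oe:holomorphic} with $L_{\partial M}$, $L_{\partial M_1}$ replaced by $L_{M,\partial M}$, $L_{M_1,\partial M_1}$.
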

\begin{proof} It is sufficient to show the validity of the gluing identity (\ref{eq:lglueid}) for coherent states. That is, we need to show, for any $\phi\in L_{\partial M_1}$,
 \begin{gather}
 \rhol_{M_1}(K_{\phi})\cdot c\big(M;\Sigma,\overline{\Sigma'}\big) = \int_{\hat{L}_{\Sigma}} \rhol_M(K_{\phi}\tens K_{\xi}\tens \iota_{\Sigma}(K_{\xi}))\,\xd\nu_{\Sigma}(\xi) . \label{eq:gluingidcoh}
 \end{gather}
 We decompose $\phi=\phi^{\text{X}}+\phi^0$ with $\phi^{\text{X}}\in L_{M_1,\partial M_1}$ and $\phi^0\in L_{M_1,\partial M_1}^\perp$. From Proposition~\ref{prop:lamplcoh} we have on the left-hand side
 $\rhol_{M_1}(K_{\phi})=\rhol_{M_1}(K_{\phi^{\text{X}}})$. On the other hand, the integrand on the right-hand side can be rewritten as
 \begin{gather*}
 \rhol_M(K_{\phi}\tens K_{\xi}\tens \iota_{\Sigma}(K_{\xi})) =\int_{\hat{L}_{\tilde{M}}} K_{\phi}(\eta_1) K_\xi(\eta_{\Sigma}) \overline{K_\xi(\eta_{\Sigma'})}\,\xd\nu_M(\eta) .
 \end{gather*}
 Here, $(\eta_1,\eta_{\Sigma},\eta_{\Sigma'})=r_M(\eta)$. However, since $\lambda_1(L_{M,\partial M})\subseteq L_{M_1,\partial M_1}$ with Axiom (C9) we have $\eta_1\in L_{M_1,\partial M_1}$ and thus $\{\phi,\eta_1\}_{\partial M_1}^{\text{L}}=\{\phi^{\text{X}},\eta_1\}_{\partial M_1}^{\text{L}}$. From the explicit form of the wave func\-tion~(\ref{eq:lcoh}) of the coherent states this implies $K_{\phi}(\eta_1)=K_{\phi^{\text{X}}}(\eta_1)$. Thus on the right-hand side of~(\ref{eq:gluingidcoh}) we can also replace $\phi$ by $\phi^{\text{X}}$. The proof then reduces to the proof of~Theorem~4.5 of~\cite{Oe:holomorphic} by replacing $L_{\partial M}$ and $L_{\partial M_1}$ there with $L_{M,\partial M}$ and $L_{M_1,\partial M_1}$ respectively.
\end{proof}

\subsection{Affine theory}

We proceed in the present subsection to describe the functor that constructs from an affine field theory in terms of the axioms of Section~\ref{sec:classax} a GBQFT in terms of the axioms of Appendix~\ref{sec:coreaxioms}. As this functor is an adaption of the one presented in~\cite{Oe:affine}, large parts of this subsection consist of a review of Section~4 of that paper.

Given a hypersurface $\Sigma$ we denote the algebra of complex valued continuous functions on $A_\Sigma$ by $\cont_\Sigma$. We define the Hilbert space $\cH_\Sigma$ associated to the hypersurface $\Sigma$ as a certain subspace of $\cont_\Sigma$ as follows. Fix a \emph{base point} $\eta\in A_\Sigma$ and consider the following element of $\cont_\Sigma$
\begin{gather*}
\alpha_{\Sigma}^\eta(\varphi)=\exp\big(\tfrac{\im}{2} \theta_\Sigma(\eta,\varphi-\eta) + \tfrac{\im}{2} \theta_\Sigma(\varphi,\varphi-\eta)-\tfrac{1}{4} g_\Sigma(\varphi-\eta,\varphi-\eta)\big) .
\end{gather*}
Define $\cH_\Sigma$ as the subspace of $\cont_\Sigma$ of elements $\psi$ that take the form
\begin{gather}
 \psi(\varphi)=\chi^\eta(\varphi-\eta)\alpha_{\Sigma}^\eta(\varphi) , \label{eq:wfdec}
\end{gather}
where $\chi^\eta\in\cHl_{\Sigma}$. Equip $\cH_\Sigma$ with the inner product
\begin{gather*}
\langle\psi',\psi\rangle_\Sigma = \int_{\hat{L}_\Sigma} \chi^\eta\overline{{\chi'}^\eta} \xd\nu_\Sigma .
\end{gather*}
$\cH_\Sigma$ becomes a complex separable Hilbert space in this way, naturally isomorphic to $\cHl_\Sigma$. Crucially, the definition is independent of the choice of base point as shown by Lemma~4.1 of \cite{Oe:affine}.

As for the elements of $\cHl_{\Sigma}$ we shall refer to the elements of $\cH_\Sigma$ also as \emph{wave functions}. Note that $\alpha_{\overline{\Sigma}}^{\eta}=\overline{\alpha_{\Sigma}^{\eta}}$. Thus, complex conjugation of wave functions yields a conjugate linear isomorphism $\iota_\Sigma\colon \cH_\Sigma\to\cH_{\overline{\Sigma}}$. For a hypersurface decomposition $\Sigma=\Sigma_1\cup\Sigma_2$ and $\eta_1\in A_{\Sigma_1}$, $\eta_2\in A_{\Sigma_2}$ we have $\alpha_{\Sigma_1\cup\Sigma_2}^{(\eta_1,\eta_2)}=\alpha_{\Sigma_1}^{\eta_1}\alpha_{\Sigma_2}^{\eta_2}$ and therefore naturally get $\cH_{\Sigma_1\cup\Sigma_2}=\cH_{\Sigma_1}\ctens\cH_{\Sigma_2}$, where the tensor product is the completed tensor product of Hilbert spaces. Thus, we satisfy Axioms~(T1), (T1b), (T2), (T2b).

It turns out that in affine field theory there is a natural notion of \emph{affine coherent states}, somewhat, but not completely analogous to the coherent states for linear field theory. For a hypersurface $\Sigma$ the affine coherent states are parametrized by~$A_{\Sigma}$. Given $\zeta\in A_{\Sigma}$ the associated coherent state $\hat{K}_\zeta\in\cH_\Sigma$ is determined by the wave function
\begin{gather*}
\hat{K}_\zeta(\varphi)\defeq\exp\big(\tfrac{\im}{2}\theta_\Sigma(\zeta,\varphi-\zeta)+\tfrac{\im}{2}\theta_\Sigma(\varphi,\varphi-\zeta)-\tfrac{1}{4}g_\Sigma(\varphi-\zeta,\varphi-\zeta)\big) .
\end{gather*}
The affine coherent states satisfy the \emph{reproducing property} and a \emph{completeness relation}
\begin{gather}
\langle \hat{K}_\zeta,\psi\rangle_\Sigma=\psi(\zeta),\nonumber\\
 \langle\psi',\psi\rangle_\Sigma=\int_{\hat{L}_\Sigma} \langle\psi',\hat{K}_{\eta+\xi}\rangle_\Sigma
 \langle \hat{K}_{\eta+\xi},\psi\rangle_\Sigma\exp\big(\tfrac{1}{2}g_\Sigma(\xi,\xi)\big) \xd\nu_\Sigma(\xi).\label{eq:acohcompl}
\end{gather}
Note that in the completeness relation we use a base point $\eta\in A_\Sigma$ as we have a measure on $\hat{L}_\Sigma$, rather than on $\hat{A}_\Sigma$. However, the choice of base point is arbitrary.
We also exhibit the inner product
\begin{gather*}
 \langle \hat{K}_{\zeta'},\hat{K}_{\zeta}\rangle_\Sigma=\exp\big(\tfrac{\im}{2}\theta_\Sigma(\zeta,\zeta'-\zeta)+\tfrac{\im}{2}\theta_\Sigma(\zeta',\zeta'-\zeta)-\tfrac{1}{4}g_\Sigma(\zeta'-\zeta,\zeta'-\zeta)\big) .
\end{gather*}
Note in particular, that the affine coherent states are already \emph{normalized}. The behavior of the affine coherent states with respect to orientation change and hypersurface decomposition is straightforward
\begin{gather*}
\hat{K}_{\overline{\Sigma},\zeta}=\iota_{\Sigma}\big(\hat{K}_{\Sigma,\zeta}\big), \qquad
 \hat{K}_{\Sigma\cup\Sigma',(\zeta,\zeta')}=\hat{K}_{\Sigma,\zeta}\tens \hat{K}_{\Sigma',\zeta'} .
\end{gather*}

It is also useful to explicitly relate the affine coherent states to the coherent states of the linear theory. Given a base point $\eta\in A_\Sigma$ and an element $\xi\in L_{\Sigma}$ we consider the state $K_{\xi}^\eta\in\cH_{\Sigma}$ given by the wave function
\begin{gather}
K_{\xi}^\eta(\varphi)\defeq K_\xi(\varphi-\eta)\alpha_{\Sigma}^{\eta}(\varphi),\qquad\forall\, \varphi\in A_\Sigma .\label{eq:deccohwf}
\end{gather}
We observe that this state is related to the affine coherent state $\hat{K}_{\eta+\xi}$ by a constant factor
\begin{gather}
 \hat{K}_{\eta+\xi}=K^\eta_\xi \exp\big({-}\im\theta_\Sigma(\eta,\xi)-\tfrac{\im}{2}[\xi,\xi]_\Sigma-\tfrac{1}{4}g_\Sigma(\xi,\xi)\big) .\label{eq:relcohst}
\end{gather}

Let $M$ be a region. The amplitude map $\rho_M\colon \cH_{\partial M}^{\ds}\to\C$ in $M$ is defined with the help of a~choice of \emph{base point} $\eta\in A_{M}$ as follows. We decompose the wave function of a state $\psi\in\cH_{\partial M}$ as in equation~(\ref{eq:wfdec}). This specifies an element $\chi^{\eta}$ in $\cH_{\partial M}$. In particular, $\chi^{\eta}$ determines a complex function on $\hat{L}_{\partial M}$. If this function is integrable with respect to the measure $\nu_M$ on $\hat{L}_{\tilde{M}}$, then we declare $\psi\in\cHd_{\partial M}$ and the amplitude of $\psi$ to be given by (with $r_M$ implicit)
\begin{gather}
 \rho_M(\psi)=\exp\left(\im S_M(\eta)\right)\int_{\hat{L}_{\tilde{M}}} \chi^\eta(\phi)\,\xd\nu_{M}(\phi) .\label{eq:affampl}
\end{gather}
By Lemma~4.2 of~\cite{Oe:affine} this definition is independent of the base point. It was shown in \cite{Oe:feynobs} that this definition precisely implements Feynman path integral quantization.
As we shall see in a~moment the subspace $\cH^\ds_{\partial M}$ of $\cH_{\partial M}$ of wave functions which are integrable contains at least all coherent states. Thus, $\cH^\ds_{\partial M}$ is dense in $\cH_{\partial M}$ and Axiom (T4) is satisfied. Also, Proposition~4.4 of~\cite{Oe:affine} shows that Axiom~(T3x) is satisfied. The proposition is applicable since $L_{\partial\hat{\Sigma}}=L_{\hat{\Sigma},\partial\hat{\Sigma}}$ and $A_{\partial\hat{\Sigma}}=A_{\hat{\Sigma},\partial\hat{\Sigma}}$ by Axiom~(C7). Axiom~(T5a) is straightforward using the additivity of~$S$ as exhibited in Axiom~(C8).

The following result yields an explicit formula for the amplitude of an affine coherent state. This generalizes Proposition~4.3 of \cite{Oe:affine}.
\begin{prop}\label{prop:cohampl}Let $\zeta\in A_{\partial M}$ and $\zeta=\zeta^{\mathrm{R}}+J_{\partial M} \zeta^{\mathrm{I}}+\zeta^0$ be its decomposition with respect to the generalized direct sum $A_{\partial M}=A_{\tilde{M}}\oplus J_{\partial M} L_{\tilde{M}}\oplus L_{M,\partial M}^\perp$ according to Corollary~{\rm \ref{cor:decam}}. Then
\begin{gather}
 \rho_M\big(\hat{K}_\zeta\big)= \exp\big(\im S_M\big(\zeta^{\mathrm{R}}\big)-\im \theta_{\partial M}\big(\zeta^{\mathrm{R}},J_{\partial M}\zeta^{\mathrm{I}}+\zeta^0\big) \nonumber\\
\hphantom{\rho_M\big(\hat{K}_\zeta\big)=}{} -\tfrac{\im}{2}[J_{\partial M}\zeta^{\mathrm{I}}+\zeta^0,J_{\partial M}\zeta^{\mathrm{I}}+\zeta^0]_{\partial M}-\tfrac{1}{2} g_{\partial M} \big(\zeta^{\mathrm{I}},\zeta^{\mathrm{I}}\big)-\tfrac{1}{4} g_{\partial M}\big(\zeta^0,\zeta^0\big)\big) .\label{eq:cohampl}
\end{gather}
\end{prop}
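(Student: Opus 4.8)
The plan is to reduce the statement, via base-point independence of the affine amplitude map, to the linear amplitude formula of Proposition~\ref{prop:lamplcoh} together with the explicit relation~\eqref{eq:relcohst} between affine and linear coherent states; this mirrors the proof of Proposition~4.3 of~\cite{Oe:affine}. First I would exploit the freedom in the choice of base point for $\rho_M$. Since $\zeta^{\mathrm{R}}\in A_{\tilde M}=a_M(A_M)$ by Corollary~\ref{cor:decam}, pick $\eta_0\in A_M$ with $a_M(\eta_0)=\zeta^{\mathrm{R}}$. Using that $a_M$ is affine over $r_M$ together with the identity~\eqref{eq:actsympot} of Axiom~(C6), one checks that $S_M(\eta_0)$ is independent of this choice; this is the meaning we attach to the symbol $S_M(\zeta^{\mathrm{R}})$ in~\eqref{eq:cohampl}. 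By Lemma~4.2 of~\cite{Oe:affine} we may then compute $\rho_M(\hat K_\zeta)$ from~\eqref{eq:affampl} using the base point $\eta_0$, so that the decomposition~\eqref{eq:wfdec} of the wave function is taken with respect to $\zeta^{\mathrm{R}}\in A_{\partial M}$.

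Next, writing $\xi\defeq J_{\partial M}\zeta^{\mathrm{I}}+\zeta^0\in L_{\partial M}$, so that $\zeta=\zeta^{\mathrm{R}}+\xi$, the relation~\eqref{eq:relcohst} (applied with base point $\zeta^{\mathrm{R}}$) together with~\eqref{eq:deccohwf} and~\eqref{eq:wfdec} identifies the $\chi$-component of $\hat K_\zeta$ relative to $\zeta^{\mathrm{R}}$ as the linear coherent state $K_\xi$ multiplied by the constant $\exp\big({-}\im\theta_{\partial M}(\zeta^{\mathrm{R}},\xi)-\tfrac{\im}{2}[\xi,\xi]_{\partial M}-\tfrac14 g_{\partial M}(\xi,\xi)\big)$. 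Inserting this into~\eqref{eq:affampl} and comparing with~\eqref{eq:linampl} gives
\[
 \rho_M\big(\hat K_\zeta\big)=\exp\big(\im S_M(\zeta^{\mathrm{R}})-\im\theta_{\partial M}(\zeta^{\mathrm{R}},\xi)-\tfrac{\im}{2}[\xi,\xi]_{\partial M}-\tfrac14 g_{\partial M}(\xi,\xi)\big)\,\rhol_M(K_\xi),
\]
and in particular $\hat K_\zeta\in\cHd_{\partial M}$ because Proposition~\ref{prop:lamplcoh} already yields $K_\xi\in\cHld_{\partial M}$.

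It then remains to evaluate $\rhol_M(K_\xi)$ via Proposition~\ref{prop:lamplcoh}. The key observation is that the $(\mathrm{R},\mathrm{I},0)$-decomposition of $\xi$ with respect to~\eqref{eq:declm} is just $(0,\zeta^{\mathrm{I}},\zeta^0)$: indeed $J_{\partial M}\zeta^{\mathrm{I}}\in J_{\partial M}L_{\tilde M}$ and $\zeta^0\in L_{M,\partial M}^\perp$, the latter a complex, hence $J_{\partial M}$-invariant, subspace. Hence $\rhol_M(K_\xi)=\exp\big({-}\tfrac14 g_{\partial M}(\zeta^{\mathrm{I}},\zeta^{\mathrm{I}})\big)$. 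Finally, since $L_{M,\partial M}$ and $L_{M,\partial M}^\perp$ are $g_{\partial M}$-orthogonal and $J_{\partial M}$ is a $g_{\partial M}$-isometry, $g_{\partial M}(\xi,\xi)=g_{\partial M}(\zeta^{\mathrm{I}},\zeta^{\mathrm{I}})+g_{\partial M}(\zeta^0,\zeta^0)$; substituting this and $\xi=J_{\partial M}\zeta^{\mathrm{I}}+\zeta^0$ back into the displayed formula reproduces~\eqref{eq:cohampl}. I expect the only genuinely delicate point to be the bookkeeping: justifying the base-point choice $a_M(\eta_0)=\zeta^{\mathrm{R}}$ and the well-definedness of $S_M(\zeta^{\mathrm{R}})$, and correctly tracking the three-fold decomposition of $\xi$ inside~\eqref{eq:declm}; everything else is the linear result of Proposition~\ref{prop:lamplcoh} combined with the purely algebraic identity~\eqref{eq:relcohst}.
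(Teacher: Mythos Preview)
Your proof is correct and follows essentially the same route as the paper's: choose $\zeta^{\mathrm{R}}$ as the base point, identify the $\chi$-component of $\hat K_\zeta$ via~\eqref{eq:relcohst} and~\eqref{eq:deccohwf} as $K_{J_{\partial M}\zeta^{\mathrm{I}}+\zeta^0}$ times an explicit constant, and then reduce to the linear integral using that $\zeta^0$ is complex orthogonal to $L_{\tilde M}$. The only cosmetic difference is that you invoke Proposition~\ref{prop:lamplcoh} to evaluate $\rhol_M(K_\xi)$, while the paper unpacks the integral and cites Proposition~4.2 of~\cite{Oe:holomorphic} directly; since Proposition~\ref{prop:lamplcoh} is precisely that result adapted to the present setting, the two are the same argument, and your slightly more explicit treatment of the base-point choice and of the well-definedness of $S_M(\zeta^{\mathrm{R}})$ is a welcome clarification.
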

\begin{proof}Taking $\zeta^{\mathrm{R}}$ as base point and defining $\zeta^{\mathrm{X}}\defeq \zeta-\zeta^0$ we decompose the coherent state wave function as in (\ref{eq:wfdec})
\begin{gather*}
\hat{K}_\zeta(\varphi)=\chi^{\zeta^{\text{R}}}(\varphi-\eta)\alpha_{\partial M}^{\zeta^{\text{R}}}(\varphi) .
\end{gather*}
Combining (\ref{eq:relcohst}) and (\ref{eq:deccohwf}) then yields
\begin{gather*}
\chi^{\zeta^{\text{R}}}=K_{J_{\partial M} \zeta^{\mathrm{I}}+\zeta^0}\exp\big({-}\im \theta_{\partial M}\big(\zeta^{\text{R}},J_{\partial M} \zeta^{\mathrm{I}}+\zeta^0\big) \\
\hphantom{\chi^{\zeta^{\text{R}}}=}{}
-\tfrac{\im}{2}\big[J_{\partial M} \zeta^{\mathrm{I}}+\zeta^0,J_{\partial M} \zeta^{\mathrm{I}}+\zeta^0\big]_{\partial{M}}
-\tfrac{1}{4}g_{\partial{M}}\big(J_{\partial M} \zeta^{\mathrm{I}}+\zeta^0,J_{\partial M} \zeta^{\mathrm{I}}+\zeta^0\big)\big) .
\end{gather*}
The relevant integral in (\ref{eq:affampl}) is
\begin{gather*}
\int_{\hat{L}_{\tilde{M}}} K_{J_{\partial M} \zeta^{\mathrm{I}}+\zeta^0}(\phi)\,\xd\nu_{M}(\phi)
=\int_{\hat{L}_{\tilde{M}}} K_{J_{\partial M} \zeta^{\mathrm{I}}}(\phi)\,\xd\nu_{M}(\phi)
=\exp\big({-}\tfrac{1}{4}g_{\partial M}\big(\zeta^{\mathrm{I}},\zeta^{\mathrm{I}}\big)\big).
\end{gather*}
The first equality comes from the fact that $\zeta^0$ is complex orthogonal to $L_{\tilde{M}}$ and thus does not contribute to the coherent state wave function (\ref{eq:lcoh}) evaluated on $\phi$. The second equality follows from Proposition~4.2 of \cite{Oe:holomorphic}. Combining, formula (\ref{eq:affampl}) yields equation (\ref{eq:cohampl}).
\end{proof}

It remains to demonstrate the validity of the gluing Axiom (T5b). Thus, consider a~region~$M$ with its boundary decomposing as $\partial M=\Sigma_1\cup\Sigma\cup \overline{\Sigma'}$, where $\Sigma'$ is a copy of~$\Sigma$. $M_1$ denotes the gluing of~$M$ to itself along $\Sigma$, $\overline{\Sigma'}$ and we suppose that~$M_1$ is an admissible region. Fixing a~base point $\eta\in A_\Sigma$ the gluing identity~(\ref{eq:glueid}) of Axiom~(T5b) in terms of the affine coherent state completeness relation~(\ref{eq:acohcompl}) takes the form
\begin{gather}
 \rho_{M_1}(\psi)\cdot c\big(M;\Sigma,\overline{\Sigma'}\big)=\int_{\hat{L}_\Sigma}\rho_M(\psi\tens \hat{K}_{\eta+\xi}\tens\iota_\Sigma(\hat{K}_{\eta+\xi}))
\exp\left(\tfrac{1}{2}g_\Sigma(\xi,\xi)\right)\xd\nu_\Sigma(\xi) ,\label{eq:aglueid}
\end{gather}
for all $\psi\in\cHd_{\Sigma_1}$.

As in the linear case, we require an \emph{integrability condition} for the axiom to hold. Here, this condition is that for some, hence any, $\eta\in A_{M_1}$, the extension of the function $L_\Sigma\to\C$ defined by
\begin{gather}
 \xi\mapsto \rho_M\big(\hat{K}_{\eta_1}\tens \hat{K}_{\eta_{\Sigma}+\xi}\tens\iota_\Sigma\big(\hat{K}_{\eta_{\Sigma}+\xi}\big)\big)\exp\left(\tfrac{1}{2}g_\Sigma(\xi,\xi)\right)\label{eq:aint}
\end{gather}
to a function on $\hat{L}_\Sigma$ is $\nu_\Sigma$-integrable and its integral is different from zero. Here $(\eta_1,\eta_{\Sigma},\eta_{\Sigma'})=a_M\circ a_{M;\Sigma,\overline{\Sigma'}}(\eta)$, compare Axioms (C6) and (C9).

\begin{thm}\label{thm:agluing}If the integrability condition is satisfied, then Axiom~{\rm (T5b)} holds. Moreover, given a base point $\eta\in A_{M_1}$
\begin{gather}
c\big(M;\Sigma,\overline{\Sigma'}\big)=\exp (-\im S_{M_1}(\eta) t)\nonumber\\
\hphantom{c\big(M;\Sigma,\overline{\Sigma'}\big)=}{}\times \int_{\hat{L}_\Sigma} \rho_M\big(\hat{K}_{\eta_1}\tens \hat{K}_{\eta_{\Sigma}+\xi}\tens\iota_\Sigma\big(\hat{K}_{\eta_{\Sigma}+\xi}\big)\big)\exp\left(\tfrac{1}{2}g_\Sigma(\xi,\xi)\right) \xd\nu_\Sigma(\xi) ,\label{eq:afaf}
\end{gather}
and this expression is independent of the choice of base point.
\end{thm}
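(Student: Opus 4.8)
The plan is to reduce Theorem~\ref{thm:agluing} to the linear gluing result, Theorem~\ref{thm:lgluing}, in the same way that Proposition~\ref{prop:cohampl} was reduced to Proposition~4.2 of~\cite{Oe:holomorphic}; this mirrors the proof of Theorem~4.5 of~\cite{Oe:affine}, now in the generalized setting where $L_{\tilde M}$ is only isotropic. First I would observe that it is enough to verify the gluing identity~(\ref{eq:aglueid}) on affine coherent states: the states $\hat K_{\zeta_1}$, $\zeta_1\in A_{\partial M_1}$, span a dense subspace of $\cH_{\partial M_1}$ lying inside $\cHd_{\partial M_1}$, and both sides of~(\ref{eq:glueid}) depend continuously on $\psi$, so the general case follows from $\psi=\hat K_{\zeta_1}$.

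Next I would fix a base point $\eta\in A_{M_1}$ to define the amplitude $\rho_{M_1}$, take $\eta'\defeq a_{M;\Sigma,\overline{\Sigma'}}(\eta)\in A_M$ as base point for $\rho_M$, and set $(\eta_1,\eta_\Sigma,\eta_{\Sigma'})=a_M(\eta')$; since the completeness-relation base point in~(\ref{eq:acohcompl}) is arbitrary, I would use $\eta_\Sigma\in A_\Sigma$ there, so that all base points are compatible. Then, using the wave-function decomposition~(\ref{eq:wfdec}) together with the relation~(\ref{eq:relcohst}) between the affine coherent states $\hat K_{\eta_\Sigma+\xi}$ and the linear states $K^{\eta_\Sigma}_\xi$ (and its orientation-reversed version on $\overline{\Sigma'}$, where Axiom~(C2) supplies the sign changes of $\theta_\Sigma$ and $[\cdot,\cdot]_\Sigma$), I would rewrite the integrand on the right-hand side of~(\ref{eq:aglueid}) and the factor $\rho_{M_1}(\hat K_{\zeta_1})$ on the left-hand side each as a product of (i)~$\exp(\im S_M(\eta'))$, respectively $\exp(\im S_{M_1}(\eta))$, (ii)~an explicit constant which is Gaussian in $\xi$, and (iii)~a linear amplitude $\rhol_M$, respectively $\rhol_{M_1}$, of a product of linear coherent states $K_{(\cdot)}$. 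The action factors are then matched using $S_{M_1}=S_M\circ a_{M;\Sigma,\overline{\Sigma'}}$ from Axiom~(C9), the compatibility~(\ref{eq:actsympot}) of $S_M$ with $\theta_{\partial M}$, and the additivity of $\theta$ and $[\cdot,\cdot]$ under $\partial M=\Sigma_1\cup\Sigma\cup\overline{\Sigma'}$ from Axiom~(C3).

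At this point the affine gluing identity has been turned into an identity purely about $\rhol_M$, and I would invoke the proof of Theorem~\ref{thm:lgluing} with $L_{\partial M}$, $L_{\partial M_1}$ replaced throughout by $L_{M,\partial M}$, $L_{M_1,\partial M_1}$. This replacement is legitimate for exactly the reason given in the proof of Theorem~\ref{thm:lgluing}: by Axiom~(C9) the argument that feeds into the $\Sigma_1$-slot of $r_M$ already lies in $L_{M_1,\partial M_1}$, so the $L_{M_1,\partial M_1}^\perp$-component of $\zeta_1$ drops out on both sides. The affine integrability condition~(\ref{eq:aint}) matches the linear one~(\ref{eq:lint}) under the same reduction, the weight $\exp(\tfrac12 g_\Sigma(\xi,\xi))$ from~(\ref{eq:acohcompl}) being precisely what is absorbed in passing from $\hat K$ to $K$. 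Theorem~\ref{thm:lgluing} then yields Axiom~(T5b) and the anomaly formula~(\ref{eq:lanom}) for the linear theory; reinstating the constant and action factors from the previous step converts this into~(\ref{eq:afaf}). Base-point independence of~(\ref{eq:afaf}) is then immediate: by Lemma~4.2 of~\cite{Oe:affine} neither $\rho_{M_1}(\psi)$ nor $\rho_M(\psi\tens\hat K_{\eta+\xi}\tens\iota_\Sigma(\hat K_{\eta+\xi}))$ depends on a base point, and $c(M;\Sigma,\overline{\Sigma'})$ is uniquely pinned down by~(\ref{eq:aglueid}) evaluated on any single $\psi$ for which the two sides are nonzero.

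The main obstacle is the bookkeeping in the second step. One has to check that the $\theta_{\partial M}$- and $[\cdot,\cdot]_{\partial M}$-contributions generated both by shifting the base point to $\eta'$ in~(\ref{eq:wfdec}) and by the passage~(\ref{eq:relcohst}), once combined with $\exp(\im S_M(\eta'))$, reproduce exactly $\exp(\im S_{M_1}(\eta))$ on the other side. In particular, the orientation-reversal signs of Axiom~(C2) are what make the parts of the integrand that are \emph{not} Gaussian in $\xi$ --- namely those coming from $\im\theta_\Sigma(\cdot,\cdot)$ and $[\cdot,\cdot]_\Sigma$ on the $\Sigma$ and $\overline{\Sigma'}$ slots --- cancel against each other, leaving an integrand of exactly the form that Theorem~\ref{thm:lgluing} can handle. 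Verifying this cancellation explicitly, and that no leftover $\xi$-dependence spoils integrability, is the one genuinely substantive computation and the place where the argument could break down if the axioms were not set up correctly.
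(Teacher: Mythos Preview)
Your proposal is correct and follows essentially the same strategy as the paper: reduce the affine gluing identity to Theorem~\ref{thm:lgluing} by choosing a base point $\eta\in A_{M_1}$, pushing it to $M$ via $a_{M;\Sigma,\overline{\Sigma'}}$, and using~(\ref{eq:relcohst}) together with the sign rules of Axiom~(C2) to strip off the $\theta_\Sigma$- and $[\cdot,\cdot]_\Sigma$-factors on the $\Sigma$ and $\overline{\Sigma'}$ slots so that only the linear integrand $\rhol_M(K_0\tens K_\xi\tens\iota_\Sigma(K_\xi))$ survives. The paper carries out exactly this computation, arriving at the single identity
\[
\exp\big({-}\im S_M(\eta)+\tfrac{1}{2}g_\Sigma(\xi,\xi)\big)\,\rho_M\big(\hat K_{\eta_1}\tens\hat K_{\eta_\Sigma+\xi}\tens\iota_\Sigma(\hat K_{\eta_\Sigma+\xi})\big)=\rhol_M(K_0\tens K_\xi\tens\iota_\Sigma(K_\xi)),
\]
which simultaneously matches the anomaly factors and the integrability conditions; for the gluing identity itself the paper works with the states $K^{\zeta_1}_\phi$ rather than $\hat K_{\zeta_1}$, which makes the bookkeeping you flag as the ``main obstacle'' disappear in one line, but this is only a cosmetic difference from your route.
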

\begin{proof}We perform the proof by reducing it to the corresponding Theorem~\ref{thm:lgluing} of the linear theory. We first relate the amplitude maps of the affine setting explicitly with those of the linear setting. Decomposing the wave function of a state $\psi\in\cH_{\partial N}$ on the boundary of a region $N$ with respect to a base point $\eta\in A_N$ as in~(\ref{eq:wfdec}) we have
\begin{gather*}
\rho_N(\psi)=\exp\left(\im S_N(\eta)\right)\rho_N^{\mathrm{L}}(\chi^\eta) ,
\end{gather*}
as seen by inspection of equations (\ref{eq:linampl}) and (\ref{eq:affampl}). In particular, for coherent states of the type exhibited in (\ref{eq:deccohwf}) we obtain with $\xi\in L_{\partial N}$
\begin{gather*}
\rho_N\big(K^\eta_\xi\big)=\exp\left(\im S_N(\eta)\right) \rho_N^{\mathrm{L}}(K_\xi) .
\end{gather*}
In the context of the gluing data, upon choosing a base point $\eta\in A_{M_1}$ and using equation~(\ref{eq:relcohst}) we can rewrite the integrand of~(\ref{eq:afaf}) in terms of the linear amplitude map
\begin{gather*}
 \exp\big({-}\im S_M(\eta)+\tfrac{1}{2}g_\Sigma(\xi,\xi)\big)
 \rho_M\big(\hat{K}_{\eta_1}\tens\hat{K}_{\eta_{\Sigma}+\xi}\tens\iota_\Sigma\big(\hat{K}_{\eta_{\Sigma}+\xi}\big)\big) = \rho_M^{\mathrm{L}}(K_0\tens K_\xi\tens\iota_\Sigma(K_\xi)) . 
\end{gather*}
This shows that the claimed anomaly factor (\ref{eq:afaf}) is in fact precisely the anomaly factor (\ref{eq:lanom}) of the linear theory. This also shows that the integrability condition~(\ref{eq:aint}) for the affine theory is equivalent to the one~(\ref{eq:lint}) of the linear theory.

We proceed to reduce the gluing identity (\ref{eq:aglueid}) to that of the linear theory (\ref{eq:lglueid}). It will be convenient to use coherent states of the form~(\ref{eq:deccohwf}). We use a base point $\zeta\in A_{M_1}$, but $\phi\in L_{\partial {M_1}}$ arbitrary. With Theorem~\ref{thm:lgluing} we obtain, as desired,
\begin{gather*}
 \rho_{M_1}\big(K^{\zeta_1}_\phi\big)\cdot c(M;\Sigma,\overline{\Sigma'}) =\exp (\im S_{M_1}(\zeta) ) \rho_{M_1}^{\mathrm{L}}(K_\phi)\cdot c\big(M;\Sigma,\overline{\Sigma'}\big)\\
 \qquad{} =\exp\left(\im S_{M_1}(\zeta)\right) \int_{\hat{L}_\Sigma}\rho_M^{\mathrm{L}}(K_\phi\tens K_\xi\tens\iota_\Sigma(K_\xi))\, \xd\nu_\Sigma(\xi)\\
 \qquad{} =\int_{\hat{L}_\Sigma}\rho_M\big(K_\phi^{\zeta_1}\tens K_\xi^{\zeta_{\Sigma}}\tens\iota_\Sigma\big(K_\xi^{\zeta_{\Sigma}}\big)\big)\, \xd\nu_\Sigma(\xi)\\
 \qquad{} =\int_{\hat{L}_\Sigma}\rho_M\big(K_\phi^{\zeta_1}\tens \hat{K}_{\zeta_{\Sigma}+\xi}\tens\iota_\Sigma\big(\hat{K}_{\zeta_{\Sigma}+\xi}\big)\big)\exp\left(\tfrac{1}{2}g_\Sigma(\xi,\xi)\right) \xd\nu_\Sigma(\xi) .\tag*{\qed}
\end{gather*}\renewcommand{\qed}{}
\end{proof}

\section{Semiclassical abelian Yang--Mills fields}\label{sec:classdem}

In the present section we show in a constructive manner how abelian Yang--Mills theory gives rise to the data of semiclassical affine field theory in terms of the axiomatic system of Section~\ref{sec:classax}. Together with the quantization functor of Section~\ref{sec:quantization} this provides quantized abelian Yang--Mills theory as a GBQFT. We work on smooth manifolds equipped with a Riemannian metric. For the reader's convenience we review some of the geometric facts for YM fields in Appendix \ref{sec:minimal-YM}, see also \cite{Dia:gbclassab}. We rely on the results obtained in \cite{Dia:dtonopab}. We start with an overview and motivation of the construction in the more general context of affine field theory.

For the Yang--Mills model, manifolds carry additional structure arising from the metric and from the principal bundle over it. Hypersurfaces should be considered with their metric germs in the ambient manifold. During the gluing procedure we need to provide additional data that encodes the topology of the final principal bundle, see Appendix~\ref{sec:geomax}. In our case a cohomology class $c({\cal E})\in H^2_{\rm dR}(M,\mathbb{Z})$ associated to the final principal bundle carries this additional information.

\subsection{Overview}

Our starting point is affine classical field theory for an action $S_M$ on a spacetime region mo\-deled as a smooth oriented $n$-manifold $M$ with smooth boundary $\partial M\neq \varnothing$. We consider the compact abelian structure Lie group~${\rm U}(1)$. Moreover, for simplicity we fix a principal bundle~${\cal E}$, with fiber~${\rm U}(1)$. Each smooth connection $\eta\in \operatorname{Conn}({\cal E})$ has a closed curvature $2$-form on~$M$, \smash{$F^\eta\in \Omega^2(M)$}. For any other smooth connection~$\eta'$, $\eta-\eta'$ is in correspondence with certain $\gf\in \Omega^1(M)$, hence $F^{\eta}=F^{\eta'}+\xd\gf$, and the cohomology class
\begin{gather*}
 c({\cal E}):=\left[\tfrac{1}{2\pi}F^\eta\right]\in H^2_{\rm dR}(M,\mathbb{Z})
\end{gather*}
does not depend on $\eta\in \operatorname{Conn}({\cal E})$. For other (non-compact) abelian Lie groups without compact factor, there is no need to consider $c({\cal E})$. Thus the choice of ${\rm U}(1)$ forces considerations of the topology of the region $M$ and the bundle ${\cal E}$ in our calculations (expectation values).

We consider the Yang--Mills action
\begin{gather}\label{eqn:YM-action}
 S_M(\eta) = \tfrac{1}{2}\int_MF^{\eta}\wedge \star F^{\eta},\qquad \forall\, \eta\in\operatorname{Conn}({\cal E}),
\end{gather}
where $\star$ stands for the Hodge star operator in $M$.

We suppose that the Euler--Lagrange equations yield an affine space of solutions, $A_M$. For first-order Lagrangian densities, the space of first order boundary data, $A_{\partial M}$, is also an affine space describing both Dirichlet as well as Neumann boundary conditions. Among boundary data, those that can be extended as \emph{fields} in the bulk describe an affine subspace $A_{M,\partial M}\subseteq A_{\partial M}$, while we denote those that can be extended as Euler--Lagrange {\em solutions} in the bulk as $A_{\tilde{M}}\subseteq A_{M,\partial M}$. Since the general boundary value problem is uniquely solved, $A_{\partial M}$ may be regarded as the space of solutions on a small cylinder $\partial M\times[0,\varepsilon)$, while the particular {\em topological restrictions} imposed by the inclusion $\partial M\subseteq M$ may manifest as a proper inclusion $A_{M,\partial M}\subsetneq A_{\partial M}$. The linear spaces associated to the previous affine ones are denoted as $L_M$, $L_{\partial M}$, $L_{\tilde{M}}$ and $L_{M,\partial M}$, respectively.

Geometric quantization requires the presence of a linear presymplectic structure $\widetilde{\omega}_\Sigma$ defined in the Lagrangian setting, see \cite{Woo:geomquant}. According to Appendix~\ref{sec:geomax} for a spacetime system we consider $n$-dimensional spacetime regions $M$ as well as {\em thickened} $(n-1)$-dimensional hypersurfaces $\Sigma$. Thick in this situation means a suitable normal structure attached to $\Sigma$ when we regard it as a Riemannian submanifold. To the pair $(M,\partial M)$ we will associate another pair $(A_{\tilde{M}}, A_{M,\partial M})$ which would define a {\em Lagrangian relation}. We avoid regarding $A_{\tilde{M}}\subseteq A_{M,\partial M}$ as morphisms, i.e., canonical transformations in a ``would be category'' of symplectic linear vector spaces, see~\cite{Weinstein:SympCat}. We rather postulate how gluing rules for two spacetime regions lead to reconstruction of solutions in a new region obtained by gluing old regions along a boundary hypersurface.

Gauge symmetries in the boundary may be encoded in this context as the degeneracy linear subspace $\ker \widetilde{\omega}_{\partial M}$ of the presymplectic structure. This follows from the horizontal exactness of the multisymplectic current contracted with gauge variations. See~\cite{Lee:LocSymm, Zuckerman:ActionPr}, although the roots of this argument may be traced back to Dirac. Hence, in this particular case gauge reduction takes the simple form of a linear space quotient. Throughout this work we consider that all our linear and affine spaces are the final result of gauge reduction. We regard $L_{\partial M}$, $A_{\partial M}$ as linear and affine symplectic spaces.

A suitable polarization requires a tame complex structure $J_{\partial M}$ on $L_{\partial M}$. This is the main ingredient determining the resulting quantum theory apart from the purely classical data described so far. $J_{\partial M}$ can be constructed using the Dirichlet--Neumann operator. For the abelian Yang--Mills case the latter was obtained in~\cite{Dia:dtonopab}.

In this section we proceed to the definition of the main objects of the theory and validate the axiomatic framework of Section~\ref{sec:classax}.

\subsection{Hypersurfaces}

We proceed in this subsection to prove semiclassical Axioms (C1), (C2), (C3). For each $(n-1)$-dimensional hypersurface $\Sigma$ we define the space $L_\Sigma$ as well as bilinear maps, $\omega_\Sigma(\cdot,\cdot),$ $[\cdot,\cdot]_\Sigma,$ $g_\Sigma(\cdot,\cdot),$ $\{\cdot,\cdot\}_\Sigma,$ and a complex structure $J_\Sigma$ mentioned in Axioms (C1), (C2), (C3). The complete validation concerning affine spaces $A_\Sigma$ and an affine $1$-form $\theta_\Sigma$ will be given subsequently.

Take the bilinear map
\begin{gather}\label{eqn:[.,.]}
[\phi_1,\phi_2]_{\Sigma}:= \int_{\Sigma}\phi^D_2\wedge\star_{\Sigma}\phi^N_1,
\end{gather}
where ${\star_\Sigma}$ is the Hodge star operator in $\Sigma$ and where each $\phi_1$, $\phi_2$ is a couple $\phi_i=\big(\phi^D_i,\phi^N_i\big)\in \big(\Omega^1(\Sigma)\big)^{\oplus2}$. $\Omega^1(\Sigma)$ denotes the $1$-forms in $\Sigma$. At this stage indices $D$, $N$ are just labels. When applied to hypersurfaces $\Sigma$ embedded into a region $M$ with embedding $i_\Sigma\colon \Sigma\subseteq M$ and to $1$-forms $\gf\in\Omega^1(M)$, then~$\gf^D$ means the Dirichlet boundary data~$i^*_\Sigma\gf$ of~$\gf$, while $\gf^N=\star_\Sigma i^*_\Sigma(\star \xd\gf)$ refers to the Neumann boundary data of~$\gf$. Thus $\gf\mapsto \phi=\big(\gf^D,\gf^N\big)$ defines a linear map $r_M(\gf)=\phi$, $r_M\colon \Omega^1(M)\rightarrow \big(\Omega^1(\Sigma)\big)^{\oplus 2}$.

The presymplectic linear form $\widetilde{\omega}_\Sigma$ is described by the antisymmetric part of the bilinear form~$[\cdot,\cdot]$
\begin{gather*}
 \widetilde{\omega}_{\Sigma}(\phi_1,\phi_2)= \tfrac{1}{2}\big([\phi_1,\phi_2]_{\Sigma}-[\phi_2,\phi_1]_{\Sigma}\big).
\end{gather*}

Define the complex structure $J_\Sigma\colon \Omega^1(\Sigma)^{\oplus 2}\rightarrow \Omega^1(\Sigma)^{\oplus 2}$ as
\begin{gather}\label{eqn:J}
 J_{\Sigma}\big(\phi^D,\phi^N\big)= \big(\phi^N,-\phi^D\big).
\end{gather}
We can also define a symmetric bilinear part $g_{\Sigma}(\cdot,\cdot)=2\tilde{\omega}_{\Sigma}(\cdot,J_\Sigma\cdot)$. More explicitly
\begin{gather*}
 g_{\Sigma}({\phi_1},{\phi_2})= \int_{\Sigma}\phi_1^D\wedge\star_{\Sigma}\phi_2^D+ \phi_1^N\wedge\star_{\Sigma}\phi_2^N.
\end{gather*}

\begin{lma} The degeneracy space of the bilinear antisymmetric form $\widetilde{\omega}_\Sigma$ is
\begin{gather*}
 \ker\widetilde{\omega}_\Sigma= \big\{ \xd f\oplus \xd g\colon f,g\in \big(\Omega^0(\Sigma)\big)\big\}.
\end{gather*}
\end{lma}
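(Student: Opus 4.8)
The plan is to compute the kernel of $\widetilde{\omega}_\Sigma$ directly from its explicit formula. First I would write out
\[
\widetilde{\omega}_\Sigma(\phi_1,\phi_2) = \tfrac{1}{2}\left(\int_\Sigma \phi_2^D\wedge\star_\Sigma\phi_1^N - \int_\Sigma \phi_1^D\wedge\star_\Sigma\phi_2^N\right).
\]
Fixing $\phi_1=(\phi^D,\phi^N)$ in the kernel means this vanishes for all $\phi_2=(\phi_2^D,\phi_2^N)\in\Omega^1(\Sigma)^{\oplus 2}$. Since $\phi_2^D$ and $\phi_2^N$ range independently, I can separate the condition into two: $\int_\Sigma \phi_2^D\wedge\star_\Sigma\phi^N = 0$ for all $\phi_2^D\in\Omega^1(\Sigma)$, and $\int_\Sigma \phi^D\wedge\star_\Sigma\phi_2^N = 0$ for all $\phi_2^N\in\Omega^1(\Sigma)$. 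Each of these is a non-degeneracy statement for the $L^2$ pairing $\langle\alpha,\beta\rangle = \int_\Sigma\alpha\wedge\star_\Sigma\beta$ on $\Omega^1(\Sigma)$.

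Here is the catch, and the main obstacle: the $L^2$ inner product on $\Omega^1(\Sigma)$ is strictly non-degenerate, so naively these conditions force $\phi^N=0$ and $\phi^D=0$, which would give a \emph{trivial} kernel, not the claimed kernel of exact forms. So the separation must be more subtle — the test forms $\phi_2^D,\phi_2^N$ cannot in fact range over \emph{all} of $\Omega^1(\Sigma)$ independently. I expect the resolution is that the space $L_\Sigma$ (which appears in Axiom (C1) as the completion or as a specific Sobolev-type space, and which is only a certain subspace of $\Omega^1(\Sigma)^{\oplus 2}$ obtained as boundary data $r_M(\Omega^1(M))$ or its closure) is not the full $\Omega^1(\Sigma)^{\oplus 2}$. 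In particular, the Dirichlet and Neumann components satisfy a constraint reflecting that they arise from a $1$-form on the bulk. The honest approach is therefore: identify precisely which pairs $(\phi_2^D,\phi_2^N)$ are admissible test elements; observe that modulo this constraint the pairing $\int_\Sigma\phi_2^D\wedge\star_\Sigma\phi^N$ can only detect $\phi^N$ up to the orthogonal complement of the admissible Dirichlet data, which is exactly the space of exact $1$-forms $\xd g$ (by the Hodge decomposition on $\Sigma$, noting that $\int_\Sigma\xd g\wedge\star_\Sigma\beta = \pm\int_\Sigma g\,\xd\star_\Sigma\beta$ vanishes precisely when $\beta$ is coclosed, i.e., when $\beta$ ranges over the coexact-plus-harmonic part that represents admissible data after gauge reduction). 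Since we are told "throughout this work we consider that all our linear and affine spaces are the final result of gauge reduction," the relevant non-degeneracy is only up to gauge, and the residual ambiguity is exactly the claimed $\xd f\oplus\xd g$.

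So the steps in order are: (1) substitute the formula for $\widetilde{\omega}_\Sigma$ and use bilinearity to reduce the kernel condition to the two independent pairing-vanishing conditions above; (2) invoke the Hodge decomposition on the closed (or bordered) hypersurface $\Sigma$ to write $\phi^N = \xd g + \delta\beta + h$ and similarly for $\phi^D$, and compute that the $L^2$ pairing $\int_\Sigma(\cdot)\wedge\star_\Sigma(\cdot)$ against the admissible test space annihilates precisely the exact part; (3) conclude that $\phi^N\in\{\xd g\}$ and $\phi^D\in\{\xd f\}$, and conversely check that exact forms are genuinely in the kernel — for $\phi_1 = (\xd f, \xd g)$ we get $\int_\Sigma\phi_2^D\wedge\star_\Sigma\xd g = \pm\int_\Sigma \xd\phi_2^D\wedge\star_\Sigma g$-type manipulations (integration by parts, $\partial\Sigma=\varnothing$ or suitable decay) showing both terms vanish against all admissible $\phi_2$. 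The hard part is really step (2): pinning down the admissible test space and justifying that the orthogonal complement of its Dirichlet/Neumann projections is exactly the exact forms; this is where one must be careful about what $L_\Sigma$ actually is as a function space and about the gauge-reduction convention, rather than treating $\Omega^1(\Sigma)^{\oplus 2}$ as the literal domain.
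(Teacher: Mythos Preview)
Your approach is essentially the same as the paper's. Both arguments rest on the Hodge decomposition $\Omega^1(\Sigma)=\xd\Omega^0(\Sigma)\oplus\ker\xd^{\star_\Sigma}$ together with integration by parts: the paper first checks that $\widetilde{\omega}_\Sigma(\phi,\xd f_1\oplus\xd f_2)=0$ whenever $\phi\in(\ker\xd^{\star_\Sigma})^{\oplus2}$, and then, via $\int_\Sigma \xd f_i\wedge\star_\Sigma\phi^\bullet=\int_\Sigma f_i\,\xd^{\star_\Sigma}\phi^\bullet$, that $\widetilde{\omega}_\Sigma(\phi,\xd f_1\oplus\xd f_2)=0$ for all $f_1,f_2$ forces $\xd^{\star_\Sigma}\phi^D=\xd^{\star_\Sigma}\phi^N=0$. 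Your observation that on the \emph{full} space $(\Omega^1(\Sigma))^{\oplus2}$ the $L^2$ pairing is non-degenerate (so the literal kernel would be trivial) is correct and is exactly the subtlety the paper leaves implicit; what the paper's computation really establishes is that $(\xd\Omega^0)^{\oplus2}$ and $(\ker\xd^{\star_\Sigma})^{\oplus2}$ are mutual $\widetilde{\omega}_\Sigma$-complements, so that $\widetilde{\omega}_\Sigma$ becomes non-degenerate on the coclosed representatives --- which is precisely what is used immediately afterward in defining $L_\Sigma$.
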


\begin{proof}For the proof consider the $g_{\Sigma}$-orthogonal Hodge decomposition
\begin{gather*}
 \Omega^k(\Sigma) = \xd \Omega^{k-1}(\Sigma)\oplus \mathfrak{H}^k(\Sigma)\oplus \xd^{\star_\Sigma}\Omega^{k+1}(\Sigma),
\end{gather*}
where $ \mathfrak{H}^k(\Sigma) := \big\{\lambda\in\Omega^k(\Sigma)\colon \xd\lambda=0=\xd^{\star}\lambda\big\}$ stands for harmonic $k$-forms. Thus
\begin{gather*}
 \ker \xd^{\star_\Sigma}= \big\{ \phi\in \Omega^1(\Sigma) \colon \xd^{\star_{\Sigma}}\phi = 0 \big\}
 = \mathfrak{H}^k(\Sigma)\oplus \xd^{\star_\Sigma}\Omega^{k+1}(\Sigma) \simeq \Omega^k(\Sigma)/\xd\Omega^{k-1}(\Sigma) .
\end{gather*}
Hence $\widetilde{\omega}_\Sigma (\phi,\xd f_1\oplus \xd f_2 )=0$ for every $\phi=\phi^D\oplus \phi^N\in (\ker \xd^{\star_\Sigma})^{\oplus 2}$ and exact $1$-forms $\xd f_1,\xd f_2\in\Omega^1(\Sigma)$. On the other hand, the equality $\widetilde{\omega}_\Sigma (\phi,\xd f_1\oplus \xd f_2)=0$ for every $f_1\oplus f_2\in \Omega^0(\Sigma)^{\oplus 2}$ implies
\begin{gather*}
 \int_\Sigma \xd f_1\wedge\star_\Sigma \phi^N- \phi^D\wedge\star_\Sigma \xd f_2 = \int_\Sigma f_1 \xd^{\star_\Sigma}\phi^N -f_2\xd^{\star_\Sigma}\phi^D, \qquad \forall\, f_1,f_2\in\Omega^0(\Sigma),
\end{gather*}
hence, $\xd^{\star_\Sigma}\phi^N=0 =\xd^{\star_\Sigma}\phi^D$.
\end{proof}

The complex structure defines the scalar multiplication by $\im$, hence we may define a Hermitian product $\{\cdot, \cdot\}_\Sigma = g_\Sigma+2\im\omega_\Sigma $ which in the complex linear vector space $(\ker \xd^{\star_\Sigma})^{\oplus 2}$ is non-degenerate.

Since in our proposed axiomatic framework we need Hilbert spaces, we rather complete the spaces of boundary data, i.e., we consider the Hilbert space $L^2\Omega^1(\Sigma)$ obtained as the $g_{\Sigma}$-closure of $\Omega^1(\Sigma)$. In order to see that the closure space of the Dirichlet boundary data $\phi^D$ on $\Sigma$ is in fact in $ L^2\Omega^1(\Sigma)$ take the closure $\overline{\ker \xd^{\star_\Sigma}}$ in the Sobolev space $W^{1,2}\Omega^1(\Sigma)$. Then the quotient norm of the gauge quotient for Dirichlet conditions $\overline{\ker \xd^{\star_\Sigma}}/\overline{\ker \xd^{\star_\Sigma}}\cap \xd W^{2,2}\Omega^0(\Sigma)$ is isomorphic to the $g_\Sigma$-norm in $L^2\Omega^1(\Sigma)$. On the other hand, the Neumann datum $\phi^N$ is contained in $\overline{\ker \xd^{\star_\Sigma}}\subseteq L^2\Omega^1(\Sigma)$. Thus we consider the Hilbert space obtained by the $g_\Sigma$-completion of the first-order boundary data on $\Sigma$ as the space of pairs of $1$-forms
\begin{gather}\label{eqn:L_Sigma}
 L_\Sigma := \big(\overline{\ker \xd^{\star_\Sigma}} / \overline{\ker \xd^{\star_\Sigma}}\cap \xd W^{2,2}\Omega^0(\Sigma)\big)
 \oplus \overline{\ker \xd^{\star_\Sigma}} \subseteq L^2\Omega^1(\Sigma)\oplus L^2\Omega^1(\Sigma).
\end{gather}
In (\ref{eqn:L_Sigma}) we have used the fact that in $\Sigma$ the differential has the following domains and rank
\begin{gather*}
\xd\colon \ W^{1,2}\Omega^1(\Sigma_{\geps})\rightarrow L^{2}\Omega^1(\Sigma_{\geps}).
\end{gather*}
This follows from Gaffney's inequality, see \cite{Sc},
\begin{gather}\label{eqn:Gaffney}
 \|\omega\|_{W^{1,2}}^2 \leq C\big( \|\xd\omega\|_{L^2}^2 + \|\xd^\star\omega\|_{L^2}^2 + \|\omega\|^2_{L^2} \big),\qquad
 \forall\, \omega\in W^{s,p}\Omega^1(M).
\end{gather}

Whenever we consider an abstract hypersurface $\Sigma $ we consider it as an $(n-1)$-manifold together with an $n$-dimensional Riemannian metric defined in a cylinder
\begin{gather*}
 \Sigma_\geps\cong \Sigma\times [0,\geps].
\end{gather*}
We recall some definitions of Sobolev spaces. First recall Hodge--Morrey--Friedrichs (HMF) decompositions for any Riemannian manifold $\Sigma_\geps$, with boundary $\partial \Sigma_\geps=\Sigma\sqcup\Sigma'$ with inclusion of the bottom component $i_{\Sigma}\colon \Sigma\rightarrow \Sigma_\geps$, see~\cite{Sc},
\begin{gather}
 \Omega^k(\Sigma_\geps)=\xd\Omega_D^{k-1}(\Sigma_\geps)\oplus\mathfrak{H}^k_N(\Sigma_\geps)\oplus
 \big(\mathfrak{H}^k(\Sigma_\geps)\cap \xd\Omega^{k-1}{\Sigma_\geps}\big)\oplus \xd^{\star}\Omega_N^{k+1}(\Sigma_\geps),\nonumber\\
 \Omega^k(\Sigma_\geps)=\xd\Omega_D^{k-1}(\Sigma_\geps)\oplus\mathfrak{H}^k_D(\Sigma_\geps)\oplus
 \big(\mathfrak{H}^k(\Sigma_\geps)\cap \xd^\star\Omega^{k-1}{\Sigma_\geps}\big)\oplus \xd^{\star}\Omega_N^{k+1}(\Sigma_\geps),\label{eqn:HMF}
\end{gather}
where
\begin{gather*}
 \Omega_D^k(\Sigma_\geps) := \big\{\alpha\colon \alpha\in\Omega^{k}(\Sigma_\geps)\colon i^*_{\partial \Sigma_\geps}\alpha=0\big\} , \\
 \Omega^k_N(\Sigma_\geps) := \big\{\beta \colon \beta\in\Omega^{k}(\Sigma_\geps)\colon i^*_{\partial \Sigma_\geps} (\star\beta )=0\big\} , \\
 \mathfrak{H}_D^k(\Sigma_\geps) := \mathfrak{H}^k(\Sigma_\geps)\cap \Omega^k_D(\Sigma_\geps), \qquad
 \mathfrak{H}_N^k(\Sigma_\geps) := \mathfrak{H}^k(\Sigma_\geps)\cap \Omega^k_N(\Sigma_\geps).
\end{gather*}
$\star$ is the Hodge operator on $\Sigma_\geps$.

Recall also that the differential $\xd$ acts on the chain complex $ \Omega_D^k(\Sigma_\geps)$, meanwhile the co\-dif\-ferential $\xd^{\star}$ acts on the complex~$\Omega_N^k(\Sigma_\geps)$. The space of harmonic forms, $\mathfrak{H}^k(\Sigma_\geps)$, is infinite-dimensional. Nevertheless its boundary conditioned subspaces, $ \mathfrak{H}^k_N(\Sigma_\geps),\mathfrak{H}^k_D(\Sigma_\geps)\subset \mathfrak{H}^k(\Sigma_\geps)$ are finite-dimensional.
For Sobolev spaces in decomposition (\ref{eqn:HMF}) we substitute the orthogonal summands of $W^{s-1,p}\Omega^k(\Sigma_\geps)$ by
\begin{gather*}
\xd W^{s,p}\Omega_D^{k-1}(\Sigma_\geps) = \big\{ \alpha\in \operatorname{dom} \xd \subseteq W^{s,p}\Omega^{k-1}(\Sigma_\geps) \colon
 i^*_{\partial \Sigma_\geps}\alpha=0 \big\}, \\
 W^{s-1,p}\mathfrak{H}_N^k(\Sigma_\geps) = W^{s-1,p}\mathfrak{H}^k(\Sigma_\geps)\cap W^{s-1,p}\Omega^k_N(\Sigma_\geps),\\
 W^{s-1,p}\mathfrak{H}^k(\Sigma_\geps) = \big\{ \lambda \in \operatorname{dom}\xd\cap \operatorname{dom} \xd^\star \subseteq W^{s-1,p}\Omega^k(\Sigma_\geps)
 \colon \xd \lambda=0=\xd^{\star}\lambda \big\}, \\
 \xd^{\star}W^{s,p}\Omega^{k+1}_N(\Sigma_\geps) = \big\{ \beta\in \operatorname{dom}\xd^\star \subseteq W^{s,p}\Omega^{k+1}(\Sigma_\geps)\colon
 i^*_{\partial \Sigma_\geps} (\star\beta)=0 \big\} .
\end{gather*}
Notice that by finite-dimensionality $W^{s-1,p}{\mathfrak H}_N^k(\Sigma_\geps)={\mathfrak H}^k_N(\Sigma_\geps)$.

Consider a principal bundle ${\cal E}(\geps)=i^*_{\Sigma}{\cal E}$ where ${\cal E}$ is a principal bundle over $\Sigma_\geps$. We fix base points $\eta_{{\cal E}(\geps)}$ for the affine spaces $\operatorname{Conn}({\cal E})$ of connections $A_{\Sigma_\geps}$ modulo gauge, as follows.

 \begin{dfn}\label{Definition4.2}
We say that a fixed base point $\eta_{{\cal E}(\geps)}\in \operatorname{Conn}({\cal E})$ of the affine space of connections of~${\cal E}$, $\operatorname{Conn}({\cal E})$ is a {\em local minimum} if it satisfies the following conditions:
\begin{enumerate}\itemsep=0pt
\item 
$\eta_{{\cal E}(\geps)}$ is a solution of $\xd^\star F^{\eta_{{\cal E}(\geps)}}=0$. The {\em $($affine$)$ space of solutions with Lorentz gauge fixing} can be defined as
\begin{gather*}
 \big\{\eta\in \operatorname{Conn}({\cal E}(\geps))\colon \xd^\star F^\eta=0,\, \xd^\star \gf=0,\, \gf\in W^{3/2,2}\Omega^1 (\Sigma_\geps )
 \big\},
\end{gather*}
where we consider the decomposition $\eta=\eta_{{\cal E}(\geps)}+\xd\gf $. We consider this space as the space of representatives of gauge classes with the gauge defined by exact $1$-forms translations. The gauge quotient will be
\begin{gather*}
 A_{\Sigma_\geps}:=[\eta_{{\cal E}(\geps)}] + L_{\Sigma_\geps} ,
\end{gather*}
which is an affine space modeled over the corresponding linear space $L_{\Sigma_\geps}$ obtained as the topological quotient of
 \begin{gather*}
 \big\{\gf\in\Omega^1(\Sigma_\geps) \colon \xd^\star \gf=0,\, \gf\in W^{3/2,2}\Omega^1 (\Sigma_\geps )\big\},
 \end{gather*}
 by the exact translations $\xd W^{5/2,2}\Omega^0(\Sigma_\geps)$. For the choice of the Sobolev ratios $3/2$, $5/2$, see~(\ref{eqn:def C}) and the explanation below for the domains and range of Dirichlet and Neumann maps.

\item 
$\eta_{{\cal E}(\geps)}$ satisfies
\begin{gather*}
 \int_{\Sigma_\geps} F^{\eta_{{\cal E}(\geps)}}\wedge \star \xd\xi = 0
\end{gather*}
for every coexact $1$-form $\xi\in\Omega^1(\Sigma_\geps)$, $\xd^\star\xi=0$.
\end{enumerate}
\end{dfn}

This induces a base point
\begin{gather}\label{eqn:base-point1}
 \eta_{\Sigma,{\cal E}(\geps)} := \big( \big[\eta_{{\cal E}(\geps)}^D\big], \eta_{{\cal E}(\geps)}^N \big) \in A_\Sigma
\end{gather}
in the space of boundary conditions modulo gauge $A_\Sigma :=\big(\operatorname{Conn} ({\cal E}(\geps) ) /\xd W^{2,2}\Omega^{0}(\Sigma)\big)\times \overline{\ker \xd^{\star_\Sigma}}$. Define
\begin{gather}\label{eqn:theta_1}
 \overline{\theta}_\Sigma(\eta,\phi) := \int_\Sigma \phi^D\wedge i^*_\Sigma(\star F^{\eta}) .
\end{gather}
Then the following lemma holds.

\begin{lma}\label{lma:base-point1} Let $\eta_{{\cal E}(\geps)}$ be a local minimum. Then the following assertions hold.
\begin{enumerate}\itemsep=0pt
\item[$1.$]
$\overline{\theta}_{\Sigma} ({\eta}_{{\cal E}({\geps})},\gf )= \int_\Sigma \gf^D\wedge i^*_\Sigma(\star F^{\eta_{{\cal E}(\geps)}})$ vanishes for every $\gf\in \ker \xd^\star$.

\item[$2.$]
For the variation along the boundary $ \overline{\theta}_{\Sigma} (\eta,\gf )$ with $\eta={\eta}_{{\cal E}(\geps)}+\gf$, for $\gf\in\overline{\ker \xd^\star}$, we have
\begin{gather*}
 \overline{\theta}_{\Sigma}(\eta,\gf')= \int_{\Sigma}\gf^N \wedge\star_{\Sigma} (\gf')^D.
\end{gather*}

\item[$3.$]
$\overline{\theta}_{\Sigma}$ is gauge invariant, where the gauge action in $A_{\Sigma_\geps}$ is by exact translations $\xd f\in \xd W^{5/2,2}\Omega^2(\Sigma)$, $\tilde{\eta}=\eta+ \xd f=(\eta_{{\cal E}(\geps)}+\gf)+\xd f$
\begin{gather*}
 \overline{\theta}_{\Sigma} (\tilde{\eta}, \gf' )= \int_{\Sigma}\gf^N\wedge\star_{\Sigma} (\gf')^D,
 \qquad \forall\, \gf'\in\ker \xd^\star.
\end{gather*}
Therefore $\overline{\theta}_{\Sigma}$ induces a symplectic potential $\theta_\Sigma$ for the translation invariant symplectic structure $\omega_\Sigma$ in $A_{\Sigma}$.

\item[$4.$]
For any other local minimum base point $\eta'_{{\cal E}(\geps)}$, the assertions given above also hold.
\end{enumerate}
\end{lma}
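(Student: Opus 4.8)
The plan is to reduce all four assertions to the two defining properties of a local minimum in Definition~\ref{Definition4.2}, together with two elementary features of the abelian ($\mathrm U(1)$) theory: the curvature is additive under addition of connections, $F^{\eta+\gf}=F^{\eta}+\xd\gf$, and invariant under exact translations, $F^{\eta+\xd f}=F^{\eta}$; and one bookkeeping identity, read off the definitions, relating the boundary restriction of the bulk Hodge dual of $\xd\gf$ to the Neumann datum, $i^*_\Sigma(\star\xd\gf)=\pm\star_\Sigma\gf^N$. For part~1, I would write $\overline{\theta}_\Sigma(\eta_{\mathcal E(\geps)},\gf)=\int_\Sigma i^*_\Sigma(\gf\wedge\star F)$ with $F\defeq F^{\eta_{\mathcal E(\geps)}}$; condition~1 of Definition~\ref{Definition4.2} gives $\xd^\star F=0$, hence $\xd\star F=0$ and $\xd(\gf\wedge\star F)=\xd\gf\wedge\star F=F\wedge\star\xd\gf$. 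Stokes' theorem on $\Sigma_\geps$ then turns the integral over $\Sigma$ into the bulk integral $\int_{\Sigma_\geps}F\wedge\star\xd\gf$, and since $\gf$ is coclosed, $\xd\gf$ is the differential of a coexact form, so this vanishes by condition~2 of Definition~\ref{Definition4.2}. I expect the delicate point to be precisely here: the boundary of $\Sigma_\geps$ also contains a second component $\Sigma'$, whose contribution must be disposed of using the germ-based formulation of the spacetime system (Appendix~\ref{sec:geomax}), and one must control the Hodge decomposition of coclosed forms on the manifold-with-boundary $\Sigma_\geps$ in the Sobolev class of Definition~\ref{Definition4.2}, as prepared in \cite{Dia:dtonopab}.

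Parts~2 and~3 are then essentially algebraic consequences of part~1. For part~2, expanding $F^{\eta}=F^{\eta_{\mathcal E(\geps)}}+\xd\gf$ gives $\overline{\theta}_\Sigma(\eta,\gf')=\overline{\theta}_\Sigma(\eta_{\mathcal E(\geps)},\gf')+\int_\Sigma(\gf')^D\wedge i^*_\Sigma(\star\xd\gf)$; the first term vanishes by part~1 applied to the coclosed tangent vector $\gf'$, while the second, using $i^*_\Sigma(\star\xd\gf)=\pm\star_\Sigma\gf^N$ and the symmetry $\alpha\wedge\star_\Sigma\beta=\beta\wedge\star_\Sigma\alpha$, equals $\int_\Sigma\gf^N\wedge\star_\Sigma(\gf')^D$. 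For part~3, gauge invariance of the curvature gives $F^{\tilde\eta}=F^{\eta}$, hence $\overline{\theta}_\Sigma(\tilde\eta,\gf')=\overline{\theta}_\Sigma(\eta,\gf')$ and the formula of part~2 persists. That a well-defined symplectic potential $\theta_\Sigma$ on $A_\Sigma\times L_\Sigma$ is thereby induced is read off the part~2 formula: it depends on $\gf$ only through the gauge-invariant datum $\gf^N$ and on $\gf'$ only through $(\gf')^D$, and changing $(\gf')^D$ by an exact $1$-form $\xd h$ leaves it unchanged because $\gf^N\in\ker\xd^{\star_\Sigma}$ and $\Sigma$ is closed. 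Finally, parts~1 and~2 together read $\theta_\Sigma(\phi+\eta,\phi')-\theta_\Sigma(\eta,\phi')=[\phi,\phi']_\Sigma$, which is the compatibility demanded in Axiom~(C1); antisymmetrizing in $(\phi,\phi')$ recovers $\omega_\Sigma$, so $\theta_\Sigma$ is a symplectic potential for the translation-invariant symplectic form $\omega_\Sigma$.

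Part~4 is immediate: the proofs of parts~1--3 used $\eta_{\mathcal E(\geps)}$ only through conditions~1 and~2 of Definition~\ref{Definition4.2}, which any local minimum satisfies, so they apply verbatim to $\eta'_{\mathcal E(\geps)}$. In particular, comparing part~1 for $\eta'_{\mathcal E(\geps)}$ with part~2 for $\eta_{\mathcal E(\geps)}$ forces the Neumann datum of $\eta'_{\mathcal E(\geps)}-\eta_{\mathcal E(\geps)}$ to pair trivially with all admissible Dirichlet data, so the induced potential $\theta_\Sigma$ is in fact independent of the chosen local minimum.
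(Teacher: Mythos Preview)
Your overall strategy matches the paper's, and parts~2 and~3 are essentially identical to what is done there. There is, however, a genuine gap in your part~1, precisely at the point you yourself flag as delicate. Appealing to ``the germ-based formulation of the spacetime system'' does not dispose of the contribution from the second boundary component $\Sigma'$ of $\Sigma_\geps$: the cylinder $\Sigma_\geps$ is a fixed manifold with boundary $\Sigma\sqcup\Sigma'$, and Stokes' theorem applied to $\gf\wedge\star F$ produces an integral over both pieces. Nothing in the geometric axioms makes the $\Sigma'$-term vanish. The paper's device is concrete: invoke Lemma~\ref{lemma:Dirichlet-YM2} to replace $\gf$ by a coclosed YM solution $\gf_1\in\Omega^1(\Sigma_\geps)$ with the \emph{same} Dirichlet trace on $\Sigma$ but with $i^*_{\Sigma'}\gf_1=0$. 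Then the $\Sigma$-integral you want equals the full $\partial\Sigma_\geps$-integral for $\gf_1$, Stokes applies cleanly, and condition~2 of Definition~\ref{Definition4.2} finishes the argument since $\gf_1$ is coclosed.

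For part~4 your argument is correct but differs from the paper's. You observe that parts~1--3 use only the defining properties of a local minimum, so they hold verbatim for $\eta'_{\mathcal E(\geps)}$, and then deduce base-point independence of $\theta_\Sigma$ by a pairing argument. The paper instead shows directly that any two local minima have the \emph{same} curvature: writing $\eta'_{\mathcal E(\geps)}=\eta_{\mathcal E(\geps)}+\gf$ with $\gf$ coclosed, condition~2 applied to $\gf$ forces $\int_{\Sigma_\geps}(F^{\eta'}-F^{\eta})\wedge\star(F^{\eta'}-F^{\eta})=\int_{\Sigma_\geps}(F^{\eta'}-F^{\eta})\wedge\star\xd\gf=0$, hence $F^{\eta'}=F^{\eta}$ and $\overline\theta_\Sigma(\eta'_{\mathcal E(\geps)},\cdot)=\overline\theta_\Sigma(\eta_{\mathcal E(\geps)},\cdot)$. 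Both routes work; the paper's gives the slightly stronger statement that the curvature of a local minimum is uniquely determined.
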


\begin{proof}Let us consider condition~1. Recall that $\partial \Sigma_\geps=\Sigma\sqcup \Sigma'$ with $\overline{\Sigma'}\cong \Sigma$. Take $\gf\in \overline{\ker \xd^\star}$. By Lemma~\ref{lemma:Dirichlet-YM2} 
there exists a~$\gf_1\in\Omega^1(\Sigma_\geps)$ such that
\begin{gather*}
 \xd\star \xd\gf_1=0, \qquad \xd^\star\gf_1=0, \qquad i^*_\Sigma\gf_1=i^*_\Sigma\gf, \qquad i^*_{\Sigma'}\gf_1=0.
\end{gather*}
Then
\begin{gather*}
 \int_\Sigma i^*_\Sigma(\gf\wedge\star F^{\eta_{{\cal E}(\geps)}}) = \int_{\partial \Sigma_\geps}i^*_\Sigma \big( \gf_1
 \wedge \star F^{\eta_{{\cal E}(\geps)}} \big)\\
\hphantom{\int_\Sigma i^*_\Sigma(\gf\wedge\star F^{\eta_{{\cal E}(\geps)}})}{}=
 \int_{\Sigma_\geps}i^*_\Sigma\big( \xd \gf_1 \wedge \star F^{\eta_{{\cal E}(\geps)}} + \gf_1 \wedge \xd \star F^{\eta_{{\cal E}(\geps)}}
 \big) = \int_{\Sigma\times[0,\geps]}F^{\eta_{{\cal E}(\geps)}} \wedge \star \xd\gf_1 = 0 .
\end{gather*}
The last equality follows from the very definition of local minimum. Condition~2 follows from $
 F^{\eta}=F^{\eta_{\cal E}(\geps)}+ \xd \gf$ for $\eta=\eta_{{\cal E}(\geps)}+\gf$, $\xd^\star\gf=0$, and condition~1 so that
\begin{gather*}
\int_{\Sigma} i^*_{\Sigma} \big(\gf' \wedge \star F^{\eta}\big) = \int_{\Sigma} i^*_{\Sigma}\big( \gf' \wedge \star (F^{\eta_{{\cal E}(\geps)}}+ \xd \gf) \big) =\int_{\Sigma} i^*_{\Sigma} ( \gf' \wedge \star \xd \gf )\\
\hphantom{\int_{\Sigma} i^*_{\Sigma} \big(\gf' \wedge \star F^{\eta}\big)}{}=
\int_{\Sigma} i^*_{\Sigma} \gf' \wedge i^*_{\Sigma} (\star \xd \gf) =\int_{\Sigma} (\gf')^D \wedge \star_{\Sigma} \gf^N.
\end{gather*}
Condition~3 follows from the triviality of the gauge action in the Neumann component, $\tilde{\eta}^N= \eta^N$. To verify condition~4 take two local minima $\eta_{\cal E}$, $\eta_{\cal E}'$, with $\eta_{{\cal E}(\geps)}'=\eta_{{\cal E}(\geps)} +\gf$, $\xd^\star \gf=0.$ Then
\begin{gather*}
 \int_M \big( F^{\eta_{{\cal E}(\geps)}}-F^{\eta'_{{\cal E}(\geps)}} \big)\wedge \star \big( F^{\eta_{{\cal E}(\geps)}}-F^{\eta'_{{\cal E}(\geps)}} \big) = \int_M \big( F^{\eta_{{\cal E}(\geps)}}-F^{\eta'_{{\cal E}(\geps)}} \big) \wedge \star \xd\gf =0 ,
\end{gather*} by condition~2 of local minimum. Therefore $F^{\eta_{{\cal E}(\geps)}'}= F^{\eta_{{\cal E}(\geps)}}$. Hence
\begin{gather*}
 \overline{\theta}_{\Sigma}\big({\eta}'_{{\cal E}(\geps)}-{\eta}_{{\cal E}(\geps)},\xi\big) = 0
\end{gather*}
for every coclosed $\xi$ and
\begin{gather*}
 \overline{\theta}_{\Sigma}\big({\eta}_{{\cal E}(\geps)}', \xi\big) = \overline{\theta}_{\Sigma}\big({\eta}_{{\cal E}(\geps)}, \xi\big).\tag*{\qed}
\end{gather*}\renewcommand{\qed}{}
\end{proof}

Relation (\ref{eqn:[.,.]}) implies the following translation invariance condition
\begin{gather}\label{eqn:theta}
 \left[\phi_1,\phi_2\right]_{\Sigma} + \theta_{\Sigma}\left(\eta',\phi_2\right)= \theta_{\Sigma}\left(\phi_1+\eta',\phi_2\right),
 \qquad \forall\, \phi_1,\phi_2\in L_\Sigma,\quad \eta'\in A_\Sigma.
\end{gather}

Notice that here the notation $\eta'$ refers to gauge classes $[\eta]$ of boundary conditions $\eta=\big(\eta^D,\eta^N\big)$ in~$\Sigma$. This provides the definition of the affine spaces $A_\Sigma$ as well as the definition of~$\theta_\Sigma$ and completes the proof of Axioms~(C1),~(C3). A~change on orientation $\overline{\Sigma}$ in $\Sigma$ proves~(C2). Axiom~(C7) is just the formal definition of an infinitesimal cylinder with slice region.

\subsection{Regions and hypersurfaces}

In this subsection we prove the validity of semiclassical Axioms~(C4), (C5), (C6). Axiom~(C8) follows from the very definition of YM action~(\ref{eqn:YM-action}).

For a space-time region $M$ with boundary $\partial M$, the space of boundary conditions in the boundary {\em cylinder}, $L_{\partial M}$ is defined as in the hypersurface case as~(\ref{eqn:L_Sigma}). We now consider the subspace $L_{{M},\partial M}\subseteq L_{\partial M}$ of {\em topologically admissible boundary conditions}, as the image of the continuous linear map~$r_M$ of Dirichlet--Neumann boundary conditions. This verifies Axiom~(C5). In principle, $r_M$ should have domain in $W^{3/2,2}\Omega^1(M)\cap \overline{\ker \xd^\star}/ W$. Nevertheless, the normal vanishing components in the first HMF decomposition~(\ref{eqn:HMF}) yield the gauge fixing subspace
\begin{gather*}
 \overline{\ker \xd^\star} = \mathfrak{H}^1_N(M)\oplus \xd^{\star}W^{5/2,2}\Omega_N^{2}(M) \subseteq W^{3/2,2}\Omega_N^{1}(M) ,
\end{gather*}
that is orthogonal to exact forms and that has the same image $L_{M,\partial M}$. Hence we take
\begin{gather}\label{eqn:def C}
 r_M\colon \ \mathfrak{H}^1_N(M)\oplus \xd^{\star}W^{5/2,2}\Omega_N^{2}(M)  \rightarrow L_{\partial {M}}
\end{gather}
as the linear map of boundary conditions. Since we consider the (topological) gauge quotient of the $W^{3/2,2}$-closure of the gauge fixing space on the bulk~(\ref{eqn:Lorentz}) modulo translations by exact forms, $\xd W^{5/2,2}\Omega^0(M)$, then in the preimage we consider the quotient $W^{1/2,2}$-topology (coarser than the $W^{3/2,2}$-topology). Meanwhile on the codomain of~$r_M$ we consider the $L^2$-topology defined in~(\ref{eqn:L_Sigma}). Thus we verify Axiom~(C5).

More precisely, the Dirichlet and Neumann condition maps $\gf\mapsto \big(\gf^D,\gf^N\big)$ for any hypersurface $\Sigma\subseteq \partial M$, are defined by the inclusion $i_{\Sigma}\colon \Sigma\rightarrow M$ for linear spaces
\begin{align*}
 W^{3/2,2}\Omega^1(M) & \rightarrow W^{1,2}\Omega^1(\Sigma),\\
 \varphi &\mapsto \varphi^D=i_{\Sigma}^*\varphi ,\\
 W^{3/2,2}\Omega^1(M) & \rightarrow L^2\Omega^1(\Sigma),\\
\varphi &\mapsto \varphi^N=\star_{\partial M} i^*_{\Sigma}(\star \xd\varphi),
\end{align*}
restricted to $\mathfrak{H}^1_N(M)\oplus\xd^{\star}W^{5/2,2}\Omega_N^{2}(M)$. For the justification of the power $3/2$ in the Dirichlet map see \cite[Lemma~3.3.2]{Sc}. For the rank $L^2$-space in the Neumann condition map, we use a~condition generalizing Gaffney's inequality~(\ref{eqn:Gaffney}), see \cite[Lemma~2.4.10(iii)]{Sc}, namely
\begin{gather*}
 \|\omega\|_{W^{s,p}}\leq
 C\big(\|\xd\omega\|_{W^{s-1,p}} + \|\xd^\star \omega\|_{W^{s-1,p}} \big),\qquad \forall\, \omega\in W^{s,p}\Omega^1(M).
\end{gather*}
Hence when we take the gauge classes in $\Sigma$ we get $r_M\big(\big[\gf^D\big],\gf^N\big)\in L^2\Omega^1(\Sigma)^{\oplus 2}$ such that the following diagram of continuous linear maps commutes
\begin{gather*}
\xymatrix{
 \mathfrak{H}^1_N(M)\oplus \xd^{\star}W^{5/2,2}\Omega_N^{2}(M)
 \ar[r]^{\big(\gf^D,\gf^N\big)}
 \ar@{<->}[d]_{\cdot/\xd W^{2,2}\Omega^0(M)}
 \ar@{-->}[rd]^{r_M}
 &
 W^{1,2}\Omega^1(\Sigma)\oplus L^2\Omega^1(\Sigma)
 \ar@{->>}[d]^{\cdot/\xd W^{2,2}\Omega^0(\Sigma)}
 \\
 W^{1/2,2}\Omega^1(M)
 \ar[r]
 &
 L_{\Sigma}
 =
 L^2\Omega^1(\Sigma)^{\oplus 2} ,
}
\end{gather*}
where the downwards projections mean gauge quotients in the bulk and the boundary respectively.

On the other hand, the affine space $A_M$ consisting of solutions modulo gauge, is modeled over a linear space $L_M$ obtained as the {\em Lorentz gauge fixing on the bulk}
\begin{gather}\label{eqn:Lorentz}
 \big\{ \gf\in W^{3/2,2}\Omega^1(M) \colon \xd^\star \xd \gf =0 ,\, \xd^\star\gf=0 \big\},
\end{gather}
quotiented by exact $1$-forms acting by exact translations $\xd f\in \xd W^{5/2,2}\Omega^0(M)$.

For every first variation, $\xi \in W^{3/2,2}\Omega^1(M)$ (not necessarily a variation of solutions), for {Lorentz gauge fixing on the bulk}, $\xd^\star \xi=0$, take $\eta= \eta_{\cal E}+ t\xi$, $t\in(-\geps,\geps)$. Then
\begin{gather*}
 S_M(\eta)= S_M(\eta_{\cal E}) + t\int_MF^{\eta_{\cal E}}\wedge\star \xd\xi + \frac{t^2}{2} \int_M \xd\xi \wedge\star \xd\xi.
\end{gather*}
Condition 1 of Definition~\ref{Definition4.2}, integration by parts and Stokes' theorem together imply that the variation of the action in the boundary is
\begin{gather*}
 \xd S [\eta_{\cal} ](\xi) = \left.\frac{\xd}{\xd t}\right\vert_{t=0}S_M(\eta) =
 \int_MF^{\eta_{\cal E}}\wedge\star \xd\xi = \int_M \xd^\star F^{\eta_{\cal E}}\wedge\star\xi+
 \int_{\partial M} i^*_{\partial M} \xi \wedge i^*_{\partial M} \big(\star F^{\eta_{\cal E}}\big)\\
 \hphantom{\xd S [\eta_{\cal} ](\xi)}{} =
 \int_{\partial M} i^*_{\partial M} \xi \wedge i^*_{\partial M} \big(\star F^{\eta_{\cal E}}\big),
\end{gather*}
where $i_{\partial M}\colon \partial M\rightarrow M$ is the inclusion. Define
\begin{gather*}
 \theta_{\partial M} [\eta_{{\cal E}} ](\xi):= \int_{\partial M} i^*_{\partial M}\big( \xi \wedge \star F^{\eta_{\cal E}}\big).
\end{gather*}
See \cite{Woo:geomquant} to recall the definition of the boundary variation $\theta_{\partial M}$ for arbitrary field theories. If we consider Lemma~\ref{lma:base-point1} with $\Sigma=\partial M$ then condition~2 of Definition~\ref{Definition4.2} implies
\begin{gather*}
 \theta_{\partial M}\left[\eta_{{\cal E}}\right](\xi)=0
 ,\qquad \forall\, \xi \in \overline{\ker d^\star}\subseteq W^{3/2,2}\Omega^1(M) .
\end{gather*}
Notice that this coincides with (\ref{eqn:theta_1}) with $\Sigma=\partial M$ and that it actually depends on the ima\-ges~$a_M(\eta_{\cal E})$,~$r_M(\xi)$. Thus we have
\begin{gather*} \theta_{\partial M} (a_M(\eta_{\cal E}),r_M(\xi) )= \theta_{\partial M} [\eta_{{\cal E}} ](\xi).
\end{gather*}

Furthermore, $ F^{\eta}=F^{{\eta}_{{\cal E}}}+\xd\gf$ is the orthogonal decomposition, where $F^{{\eta}_{\cal E}}$ is orthogonal to every exact $2$-form $\xd\gf$, with $\xd^\star \gf=0$ for every $[\eta]=[\eta_{\cal E}]+[\gf]\in A_M$. That is why $[\eta_{\cal E}]$ is a~local minimum. The local minimum representative $\eta_{\cal E}$ does not have to be unique in general. If $[\eta'_{\cal E}]\in A_M $ solves $F^{\eta_{\cal E}'}=F^{\eta_{\cal E}}$ then $\eta'_{\cal E} $ is also a local minimum representative.

Recall that when we consider boundary conditions consisting of pairs of $1$-forms $\big(\phi^D,\phi^N\big)$, we take the gauge quotient of the \emph{axial gauge fixing space on the boundary}~$\partial M$. This topological quotient yields the linear space $L_{\partial M}$ obtained as the closure of the pairs of coclosed $1$-forms, $\big(\big[\phi^D\big],\phi^N\big)\in L^2\Omega^1(\partial M)\oplus L^2\Omega^1(\partial M)$ just as in~(\ref{eqn:L_Sigma}). If we consider the closure for the $W^{1/2,2}$-topology (coarser than the $W^{3/2,2}$-topology) of the Lorentz gauge-fixing for solutions intersected with suitable summands in the second HMF decomposition~(\ref{eqn:HMF})
\begin{gather}\label{eqn:Lorentz-2}
 \{\xd^\star \xd\gf=0\}\cap \big(\mathfrak{H}^1_N(M)\oplus \xd^{\star}\Omega_N^{2}(M)\big),
\end{gather}
then (\ref{eqn:Lorentz-2}) is isomorphic to the space $L_M$ of solutions modulo gauge. Therefore to obtain the boundary conditions $L_{\tilde{M}}=r_M(L_M)\subseteq L_{M,\partial M}$, we can take just the $r_M$-image of (\ref{eqn:Lorentz-2}). For the linear map $r_M\colon L_M\rightarrow L_{\partial M}$, $ r_M(\varphi) = \big(\big[\varphi^D\big], \varphi^N\big)$, where $\gf^D$ and $\gf^N$ are the Dirichlet and Neumann boundary conditions respectively, we consider the corresponding affine map $a_M\colon A_M\rightarrow A_{\partial M}$, with image $A_{\tilde{M}}$.

We summarize the results we have obtained so far. The space ${L}_{\partial M}$ depends just on $\partial M$ and the Riemannian metric of the cylinder $\partial M\times [0,\geps]$ and does \emph{not} depend on the topology of~$M$. The subspace ${L}_{M,\partial M}$ depends on the germ of the metric on the boundary, {\em and} on the relative topology of $M$ and $\partial M$. For YM fields the inclusion $A_{M,\partial M}\subseteq A_{\partial M}$ is proper in general. More explicitly, ${L}_{\tilde{M}}\subseteq L_{M,\partial M}$ is the Lagrangian subspace obtained by the closure of
\begin{gather*}
 \big\{ \big(\big[\gf^D\big], \gf^N\big)\in {L}_{M,\partial M}\colon \gf\in \big(\mathfrak{H}_N^1(M)\oplus \xd^{\star}W^{5/2,2}\Omega^2_N(M)\big)\cap \{\xd^\star \xd\gf\} \big\}.
\end{gather*}
Recall that the harmonic components of $\gf^D$ for all solutions $\gf$ define a finite codimension space of ${\mathfrak H}^1(\partial M)\simeq H^1_{\rm dR}(\partial M)$.

Let ${\cal E}_{\partial M}=i^*_{\partial M} \cal E$ be the induced principal bundle on $\partial M\subseteq M$. Take the fixed base point
\begin{gather}\label{eqn:base-point0}
 \eta_{M,{\cal E}}:=\big(\eta^D_{\cal E},\eta^N_{\cal E}\big)
 \in \operatorname{Conn} ({\cal E}_{\partial M} )\times \overline{\ker \xd^{\star_{\partial M}}}.
\end{gather}
Its gauge class $a_{M} ({\eta}_{{\cal E}} )=\big(\big[\eta^D_{\cal E}\big],\eta^N_{\cal E}\big)$ is also well defined in
\begin{gather*}
 A_{\partial M}:= \big(\operatorname{Conn} ({\cal E}_{\partial M} ) /\xd W^{2,2}\Omega^0(\partial M)\big) \times
 \overline{\ker \xd^{\star_{\partial M}}}.
\end{gather*}
The Dirichlet condition, $\eta^D_{\cal E}\in\operatorname{Conn} ({\cal E}_{\partial M} )$ is given by the connection in the induced bundle on $\operatorname{Conn} ({\cal E}_{\partial M} )$ defined by ${\eta}_{{\cal E}}\in \operatorname{Conn}({\cal E})$. Here $\star_{\partial M}$ denotes the Hodge star for the induced metric on~$\partial M$. On the other hand, the Neumann condition ${\eta}_{{\cal E}}^N \in \overline{\ker \xd^{\star_{\partial M}}}\subseteq L^2\Omega^1(\partial M)$ is defined as the derivative $\dot{{\eta}^\tau}(0)$ of a one parameter family
\begin{gather*}
 \overline{\eta}^\tau = \eta_{\cal E}^D+ \overline{\gf}^\tau\in \operatorname{Conn} ({\cal E}_{\partial M} ),
 \qquad \overline{\eta}^0=\eta^D_{\cal E}.
\end{gather*}
That is,
\begin{gather*}
 \eta^N_{\cal E}:=\left.\frac{\xd}{\xd\tau}\right\vert_{\tau=0}\overline{\gf}^\tau\in \overline{\ker \xd^{\star_{\partial M}}}\subseteq L^2\Omega^1(\partial M).
\end{gather*}
See Lemma~\ref{lemma:Dirichlet-YM} for further details.

For $\Sigma=\partial M$ and $\phi_i=r_M(\gf_i)$, $i=1,2$, we have (\ref{eqn:[.,.]}) and the translation invariance condition~(\ref{eqn:theta}). Notice that we have two base points for~$A_{\partial M}$. On the one hand, a~base point is defined for $\partial M$ as boundary component of a cylinder $\partial M_\geps$, $\eta_{\partial M, {\cal E}(\geps)}$, as in~(\ref{eqn:base-point1}). On the other hand one is defined on the boundary of~$M$, $\eta_{M,{\cal E}}$, as in~(\ref{eqn:base-point0}). Then
\begin{gather*}
 {\eta_{ {\cal E}(\geps)}} - {\eta_{{\cal E}}}=\gf, \qquad F^{\eta_{ {\cal E}(\geps)}} - F^{\eta_{{\cal E}}}=\xd\gf,
 \qquad \xd^\star\gf=0,\\
 \int_{(\partial M)_\geps} \big(F^{\eta_{ {\cal E}(\geps)}} - F^{\eta_{{\cal E}}}\big) \wedge \star \big(F^{\eta_{ {\cal E}(\geps)}} - F^{\eta_{{\cal E}}}\big)
 = \int_{(\partial M)_\geps} \big(F^{\eta_{ {\cal E}(\geps)}} - F^{\eta_{{\cal E}}}\big) \wedge \star \xd\gf = 0.
\end{gather*}
Thus $F^{\eta_{{\cal E}}}\vert_{(\partial M)_\geps}=F^{\eta_{ {\cal E}(\geps)}}$. Hence both $\eta_{{\cal E}(\geps)}$ as well as $\eta_{\cal E}$ are local minima for the YM action for the cylinder $(\partial M)_\geps$. The translation invariance condition~(\ref{eqn:theta}) holds for $\Sigma=\partial M$.

We verify Axiom (C6). Notice that
\begin{gather*} \int_M\xd^\star\gf\wedge\star \xd^\star\gf = \int_M \xd\xd^\star\gf\wedge\star\gf +
 \int_{\partial M}\xd^\star\wedge\star(\nu\lrcorner\gf\vert_{\partial M})\\
 \hphantom{\int_M\xd^\star\gf\wedge\star \xd^\star\gf}{}= \int_M\gf\wedge\star \xd\xd^\star\gf - \int_{\partial M}\gf^D\wedge\star(\nu\lrcorner \xd\gf\vert_{\partial M}) =
 - \int_{\partial M}\gf^D\wedge\star_{\partial M}\gf^N,
\end{gather*}
where $\nu$ is the normal vector to~$\partial M$, and $\nu\lrcorner \xd\gf\vert_{\partial M}=(-1)\star_{\partial M}i^*_{\partial M}(\star \xd\gf)$.

Let ${\eta}_{{\cal E}}$ be a fixed point so that for every connection $\eta$, its curvature is $F^\eta=F^{{\eta}_{{\cal E}}}+\xd\gf$ with $\gf\in\Omega^1(M)$ and $\xd\gf$ orthogonal to $F^{{\eta}_{{\cal E}}}$.

Let $\eta-{\eta}_{{\cal E}}=\gf$, $\eta'-{\eta}_{{\cal E}}=\gf'$, then for the YM action
\begin{gather*}
 S_M(\eta)-S_M(\eta') = \frac{1}{2}\int_M \xd\gf\wedge\star \xd\gf -\frac{1}{2}\int_M \xd\gf'\wedge\star \xd\gf' \\
\hphantom{S_M(\eta)-S_M(\eta')}{}= - \frac{1}{2}\int_{\partial M}\gf^D\wedge\star_{\partial M}\gf^N + (\gf')^D\wedge\star_{\partial M}(\gf')^N.
\end{gather*}
On the other hand
\begin{gather*}
 \theta_{\partial M}(a_M(\eta),r_M(\eta-\eta'))+ \theta_{\partial M}(a_M(\eta'),r_M(\eta-\eta'))\\
\qquad{}=
 \int_{\partial M}\gf^D\wedge\star_{\partial M}(\gf-\gf')^N + \int_{\partial M}(\gf')^D\wedge\star_{\partial M}(\gf-\gf')^N.
\end{gather*}
Since $\gf$, $\gf'$ are tangent to a Lagrangian subspace,
\begin{gather*}
 -\int_{\partial M}\gf^D\wedge\star_{\partial M}(\gf')^N +\int_{\partial M}(\gf')^D\wedge\star_{\partial M}\gf^N=0.
\end{gather*}
Therefore,
\begin{gather*}
 S_M(\eta)-S_M(\eta') = -\tfrac{1}{2}\theta_{\partial M}(a_M(\eta),r_M(\eta-\eta')) - \tfrac{1}{2}\theta_{\partial M}(a_M(\eta'),r_M(\eta-\eta'))
\end{gather*}
holds $\forall\, \eta,\eta'\in {A}_M$. This proves (\ref{eq:actsympot}) and completes the verification of Axiom~(C6).

\subsection{Gluing}\label{sec:3}

For the process described in the semiclassical Axiom (C9), suppose that a region $M_1$ is obtained from a region $M$ by gluing along $\Sigma,\Sigma'\subset \partial M$. We also suppose that the principal bundle ${\cal E}=p^*_{MM_1}{\cal E}_1$ on $M$ is obtained from the principal bundle ${\cal E}_1$ on $M_1$. From the projection map $p_{MM1}\colon M\rightarrow M_1$ we get the inclusion $p^*_{MM_1}\colon \Omega^1(M_1)\rightarrow \Omega^1(M)$. We obtain maps
\begin{gather*}
 r_{M;\Sigma\overline{\Sigma'}}=p^*_{MM_1}\vert_{{L}_{M_1}}\colon \ {L}_{M_1} \rightarrow {L}_{M} .
\end{gather*}
For the affine spaces $A_M=[\eta_{\cal E}] + L_M$ and $A_{M_1} = [\eta_{{\cal E}_1}]+ L_{M_1}$, with local minima base points $\eta_{\cal E}$, $\eta_{{\cal E}_1}$ respectively. The corresponding affine linear map
$ a_{M;\Sigma\overline{\Sigma'}}\colon {A}_{M_1} \rightarrow {A}_{M}$, yields $a_{M;\Sigma\overline{\Sigma'}}([\eta_{{\cal E}_1}])=[\eta_{\cal E}]$.

From the decomposition
\begin{gather*}
 L_{\partial M} = L_{\Sigma_1} \oplus L_\Sigma \oplus L_{\overline{\Sigma'}} ,
\end{gather*}
we get projections from $L_{\partial M}$ onto $L_\Sigma$, $L_{\overline{\Sigma'}}$. From the gluing isometry $f_\Sigma\colon \Sigma\rightarrow \overline{\Sigma'}$, there is a~commuting diagram
\begin{gather*}\xymatrix{
 L_{M_1}
 \ar@{^{(}->}[r]^{ r_{M;\Sigma\overline{\Sigma'}}}
 &
 L_{ M}
 \ar[r]^{r_{M;\Sigma}} \ar[rd]_{r_{M;\overline{\Sigma'}}}
 &
 L_{\Sigma}
 \\
 &&
 L_{\overline{\Sigma'}}.
 \ar[u]_{f_\Sigma^*}
} \end{gather*}
Here
\begin{gather*}
 f_\Sigma^* \circ {r_{M;\overline{\Sigma'}}} ([\gf_1]) = \big(f^*_\Sigma \gf^D_1,-f^*_\Sigma \gf^N_1\big) = {r_{M;\Sigma}}([\gf_1]) ,
\end{gather*}
for $[\gf_1]\in L_{M_1}$. Then by the uniqueness of boundary conditions $r_{M_1;\Sigma_1}([\gf_1])$, associated to solutions $[\gf_1] \in L_{M_1}$ we have that
\begin{gather*}
 r_{M;\Sigma} \circ r_{M;\Sigma\overline{\Sigma'}}([\gf_1])=r_{M;\Sigma}([\gf_1]),\qquad
 r_{M;\overline{\Sigma'}} \circ r_{M;\Sigma\overline{\Sigma'}}(\gf_1)=r_{M;\overline{\Sigma'}}([\gf_1]).
\end{gather*}

Consider the map, $\gf\mapsto i_{\partial M_1}^*\gf$, $i_{\partial M_1}\colon \partial M_1\rightarrow \partial M$, arising from the inclusion
 $\partial M_1\subset \partial M$. It induces a linear map $\lambda_1 \colon {L}_{M,\partial M}\rightarrow {L}_{M_1,\partial M_1}$,
 \begin{gather*}
 \lambda_1\big( \big[\gf^D\big]\oplus \gf^N \big) = \big(i_{\partial M_1}^*\big[\gf^D\big]\oplus i_{\partial M_1}^*\gf^N\big) ,
 \end{gather*}
 and a commuting diagram of linear maps
\begin{gather*}\xymatrix{
 {L}_{M_1}
 \ar[rrr]^{r_{M;\Sigma\overline{\Sigma'}}} \ar[dr] \ar@{-->}[dd]_{r_{M_1}}
 &&&
 {L}_{M}
 \ar[dl] \ar@{-->}[dd]^{r_M}
 \\
 &
 {L}_{M_1,\partial M_1}
 \ar@^{{(}->}[dl]
 &
 {L}_{M,\partial M}
 \ar@^{{(}->}[dr] \ar@{->>}[l]^{\lambda_1}
 &
 \\
 {L}_{\partial M_1}
 &&&
 {L}_{\partial M},
 \ar@{^{(}->}[lll]^{i^*_{\partial M_1}}
}\end{gather*}
while for the corresponding affine spaces ${A}_{M,\partial M}=\big(\big[\eta_{\cal E}^D\big],\eta_{\cal E}^N\big)+{L}_{M,\partial M}$, ${A}_{\partial M}=\big(\big[\eta_{\cal E}^D\big],\eta_{\cal E}^N\big)+{L}_{\partial M}$ we have
\begin{gather*}\xymatrix{
 {A}_{M_1}
 \ar[rrr]^{a_{M;\Sigma\overline{\Sigma'}}} \ar[dr] \ar@{-->}[dd]_{r_{M_1}}
 &&&
 {A}_{M}
 \ar[dl] \ar@{-->}[dd]^{r_M}
 \\
 &
 {A}_{M_1,\partial M_1}
 \ar@^{{(}->}[dl]
 &
 {A}_{M,\partial M}
 \ar@^{{(}->}[dr] \ar@{->>}[l]^{\alpha_1}
 &
 \\
 {A}_{\partial M_1}
 &&&
 {A}_{\partial M}.
 \ar@{^{(}->}[lll]^{i^*_{\partial M_1}}
}\end{gather*}
Recall that $[\eta_{\cal E}]$ is a local minimum base point in $A_M$ while $\eta^D_{\cal E},$ and $\eta^N_{\cal E}$ are its Dirichlet and Neumann boundary data respectively. This yields a fixed point $\big(\big[\eta^D_{\cal E}\big],\eta^N_{\cal E}\big)$ for $A_{M,\partial M}$.

Recall that $J_{\partial M}$ is densely defined in $L_{\partial M}=L^2\Omega^1(\partial M)$. By taking the $L^2$-closure we have a continuous linear map $J_{\partial M}\colon L_{\partial M}\rightarrow L_{\partial M}$. As observed in~\cite{Dia:dtonopab} there is a densely $J_{\partial M}$ linear subspace consisting of smooth topological admissible boundary conditions. By taking the $L^2$-closure we induce a continuous linear map $J_{\partial M}\colon {L}_{M,\partial M}\rightarrow{L}_{M,\partial M}$. Thus, ${L}_{M,\partial M}$ is a complex subspace of $({L}_{\partial M},J_{\partial M})$, hence symplectic. Notice that the general Dirichlet--Neumann map (D-N) for classical Hodge Laplacian BVP on functions is a map with domain $W^{s,2}(\partial M)$ and rank $W^{s-1,2}(\partial M)$. Nevertheless, we used the D-N operator for $1$-forms \emph{modulo gauge actions}. This yields an operator continuous with respect to $L^2$-topologies both in the domain and the rank. This gauge freedom is an important difference between the treatment of BVP for $1$-forms with respect to the BVP for functions.

\section{Special cases}\label{sec:examples}

In this section we specialize to the case where regions are $2$-manifolds. Then, hypersurfaces are disjoint unions of circles. What is more, the data $L_{\Sigma}$ for a circle $\Sigma$ consists of just a~pair of real numbers $\phi=\big(\phi^D_\Sigma,\phi^N_\Sigma\big)$. Recall that Yang--Mills solutions on a region~$M$ satisfy that their curvature~$F^\eta$ is $f\cdot \xd\mu_{M}$ where $f$ is a constant and $\xd\mu_{M}$ the area form on~$M$, see~\cite{Atiyah-Bott-1983}.

\subsection{The disc}
Take a Riemannian $2$-dimensional region $B$ which is the injective image of a ball of radius $\delta>0$ under the exponential map with center $P\in B$. We call this region the \emph{disc} $B$ of center $P$ and geodesic radius $\delta>0$.

We now describe the space $A_B$ of YM solutions in a disc $B$ of radius $\delta>0$ and with boundary component $\partial B=\Sigma$. For any YM solution, $\eta\in A_B$ may be explicitly described in polar coordinates $(r, \theta)$, $r\in [0,\delta]$, $\theta\in [0,2\pi)$, as $\eta(\theta,r)=\frac{\eta^D_{\Sigma}}{\delta^2} r^2\xd\theta$. Along $\Sigma$ the Dirichlet condition is $\eta(\theta,\delta)=\eta^D_{\Sigma}\xd\theta$. For the Neumann condition notice that $\frac{\partial }{\partial r}\eta(\geps,\theta)=\frac{2\eta^D_{\Sigma}}{\delta}\xd\theta$ implies $\eta^N_{\Sigma}=2\eta^D_{\Sigma}/\delta$ and $\xd\eta=2\eta^D/\delta^2 r \xd r \xd\theta$ is proportional to the usual area form $\xd\mu_{B}=r\xd r \xd\theta$.

For the general Riemannian case we consider a disc $B$ of geodesic radius $\delta>0$ and center of the exponential map $P$, see for instance~\cite{Gallot-2004}. Take the coordinates $(x,\tau)$ of $B-\{P\}$ obtained from the collar along $\Sigma=\partial B$, $X_{\Sigma}\colon \Sigma\times[0,\delta)\rightarrow B$. With these coordinates the center is $P=\lim\limits_{\tau\rightarrow\delta}X_{\Sigma}(x,\tau)$ while $\Sigma^\tau=X_{\Sigma}(\Sigma,\tau)$ is the image under the exponential map of the circle of radius $\delta-\tau>0$. The area form in the disc can be then written as $\xd\mu_{B} = {\rm length}\left(\Sigma^{\tau}\right) \xd\tau\wedge \xd x$. Here $x\in[0,1]$ is a coordinate in $\Sigma^{\tau}$ proportional to the arc-length coordinate, with proportionality constant
\begin{gather*}
{\rm length}\big(\Sigma^{\tau}\big)= 2\pi (\delta-\tau)\left(1-\frac{K(P)}{6}(\delta-\tau)^2+o\big((\delta-\tau)^2\big)\right),
\end{gather*}
where $K(P)$ means curvature. Hence, we have the explicit solution
\begin{gather*}
 \eta(x,\tau)\vert_{B} = \eta^D_{\Sigma} \left( \frac{ \int^{\tau}_\delta {\rm length}(\Sigma^\tau)\xd\tau }{\int^{0}_\delta {\rm length}(\Sigma^\tau)\xd\tau } \right) \xd x,\\
 \frac{\xd}{\xd \tau}\eta(x,\tau)= \eta^D_{\Sigma}\frac{{\rm length}(\Sigma^\tau)}{\int^{0}_\delta {\rm length} (\Sigma^\tau)\xd\tau}\xd x,
\end{gather*}
such that $\xd \eta = \frac{\eta^D}{\int_\delta^0 {\rm length}(\Sigma^\tau) \xd\tau} \xd \mu_B$, and the boundary conditions
\begin{gather*}
 \eta(x,0)\vert_{B}=\eta^D_{\Sigma}\xd x, \qquad \frac{\xd}{\xd\tau}\eta(x,0)\vert_{B}=\eta^N_{\Sigma}\xd x
\end{gather*}
satisfy $\eta^N_{\Sigma}=\eta^D_{\Sigma}\frac{{\rm length}(\Sigma)}{\int^{0}_\delta {\rm length} (\Sigma^\tau)\xd\tau}$. Therefore, the condition required for every YM solution holds, namely $ F^\eta = f\xd\mu_{B}$ for a constant~$f$ with
\begin{gather*}
 f= \eta^D_{\Sigma} / \int_\delta^0 {\rm length}(\Sigma^\tau) \xd\tau =
 \eta^N_{\Sigma} / {\rm length}(\Sigma).
\end{gather*}
This completes the description of the space $A_B$ of all YM solutions in the disc~$B$. Notice that for the disc~$B$, since any principal bundle ${\cal E}$ is trivialized by the choice of a connection (by parallel transport along geodesics stemming out of the center of the disc), we have a natural identification of linear and affine spaces
\begin{gather*}
 A_B=L_B,\qquad A_{\partial B }=L_{\partial B}.
\end{gather*}
Hence, for $L_{\tilde B}\subseteq L_\Sigma=L_{B,\partial B}$ we consider the subspace of pairs $\big(\phi_\Sigma^D,\phi^N_\Sigma\big)\in L_\Sigma$ such that
\begin{gather*}
 \phi^N_\Sigma\cdot {\rm length}(\Sigma)= \phi^D_\Sigma\cdot \int_\delta^0{\rm length}(\Sigma^\tau) \xd\tau.
\end{gather*}

\subsection[Surface of genus ${\rm g}\geq 2$ with boundary]{Surface of genus $\boldsymbol{{\rm g}\geq 2}$ with boundary}
Let $M$ be a connected $2$-manifold of genus ${\rm g}\geq 2$ and with $m\geq 1$ boundary components $\Sigma_1,\dots,\Sigma_m$. In this case
\begin{gather*}
 L_{\partial M}=L_{\Sigma_1}\oplus\cdots\oplus L_{\Sigma_m}\simeq \mathbb{R}^{2m}
\end{gather*}
 is a $2m$-dimensional linear space consisting of constant boundary data, $\phi=\big(\phi^D_{1},\phi^N_{1}\big)\oplus \cdots \oplus\big(\phi^D_{m},\phi^N_{m}\big)$. It has symplectic form
\begin{gather*}
 \omega_{\partial M}(\phi,\xi) =\sum_{i=1}^m\big(\phi^D_{i} \cdot \xi^N_i - \phi_{i}^N\cdot \xi^D_i\big)\cdot{\rm length}(\Sigma_j),
 \qquad \forall\, \phi,\xi\in L_{\partial M},
\end{gather*}
where ${\rm length}(\Sigma_i)$ is the total length of each boundary component $\Sigma_i$.

Take a principal bundle ${\cal E}$ over $M$. Recall that the base point $ \eta_{{\cal E}}\in A_{M}$ is chosen in such a~way that $F^{\eta_{{\cal E}}}$ is orthogonal to every exact one form $\xd\gf $ for every curvature form $F^\eta=F^{\eta_{\cal E}}+\xd\gf$ of a connection $\eta=\eta_{{\cal E}}+\gf \in A_{M}$, $\gf\in\Omega^1(M)$. Since the space of closed $2$-forms is one-dimensional, the curvature $F^\eta$ is already contained in the one-dimensional space generated by~$F^{\eta_{{\cal E}}}$, the curvature of the base point~$\eta_{{\cal E}}$. Therefore~$\gf$ is closed. This imposes a necessary condition for vectors in the $(2m-2)$-dimensional linear subspace~$L_{M,\partial M}$, namely
\begin{gather}\label{eqn:L_dM}
 \sum_{i=1}^m\phi^D_{\Sigma_i}\cdot {\rm length}(\Sigma_i)=0 ,\qquad \forall\, \big(\phi^D_{\partial M},\phi^N_{\partial M}\big)\in W.
\end{gather}
In fact, $L_{M,\partial M}$ consists of the maximal symplectic subspace of $L_{\partial M}$ contained in the codimension one (non-symplectic and coisotropic) subspace $W\subseteq L_{\partial M}$ defined by~(\ref{eqn:L_dM}). More precisely, $L_{M,\partial M}$ is isomorphic to $W /W^{\bot}$ where $W^{\bot}$ is the one-dimensional symplectic complement of $W$, with $W^{\bot}\subseteq W$. This ends the description of linear spaces $L_{M,\partial M}$, $L_{\partial M}$.

For the space $L_{M}$ we consider the space of flat connections which is parametrized by a linear space of dimension $\dim L_{M}=(2{\rm g}+m-1)$, see~\cite{Sengupta2002}. In fact, $L_{M}$ consists of two types of boundary data: on the one hand the data of the integral of the connection along the boundary components, $\phi_{\Sigma_1}^D,\dots,\phi^D_{\Sigma_m},$ satisfying condition~(\ref{eqn:L_dM}). On the other hand, the data of the connection integral along the $2{\rm g}$ non-contractible cycles in the interior of~$M$ generating~$H_1(M)$. Hence we have the Lagrangian
$L_{\tilde{M}} $ consisting of $ \big(\phi_{\Sigma_1}^D,0\big)\oplus\cdots \oplus\big(\phi^D_{\Sigma_m},0\big) \in L_{M,\partial M}$, satisfying~(\ref{eqn:L_dM}).

\subsection[Surface of genus ${\rm g}\geq 2$ without boundary]{Surface of genus $\boldsymbol{{\rm g}\geq 2}$ without boundary}
Let $M_1$ be a closed $2$-dimensional surface of genus ${\rm g}\geq 2$, obtained by gluing $m\geq 1$ discs of certain suitable small radii $\delta_i>0$, $i=1,\dots,m,$ along the boundary components of a surface~$M'$ of genus ${\rm g}\geq 2$ and $m$ boundary components. Thus we glue regions
\begin{gather*}
 M_1=M'\cup_\Sigma M'',\qquad M=M'\sqcup M'', \qquad M''=B_1\sqcup\cdots\sqcup B_m ,
\end{gather*}
along the boundaries
\begin{gather*}
 \Sigma=\partial M'=\sqcup_{i=1}^m \Sigma_i, \qquad \Sigma'=\partial M''= \sqcup_{i=1}^m \Sigma_i',\qquad \Sigma'_i=\partial B_i.
\end{gather*}
Let ${\cal E}_1$ be a fixed principal bundle on $M_1$ inducing the bundle ${\cal E}=p^*_{MM_1}{\cal E}_1$ in $M$. This means that we consider connections
\begin{gather*}
 \eta={\eta}_{\cal E}+\gf= a_{M;\Sigma,\overline{\Sigma'}}(\eta_1) \in A_M\subseteq \operatorname{Conn}({\cal E})
\end{gather*}
that are YM solutions induced by solutions $\eta_1=\eta_{{\cal E}_1}+\gf\in A_{M_1}$, $\gf\in\Omega^1(M_1)$. Since region~$M_1$ is a surface without boundary, hence $L_{{\partial M}_1}=L_{M_1,{\partial M}_1}=L_{\tilde{M}_1}=0$. From condition $\xd\gf =0$ (which implies the YM condition $\xd^\star \xd\gf=0$) it follows that $\dim L_{M_1}= 2{\rm g}=\dim H^1_{\rm dR}(M_1)$. Notice that the gauge condition is already incorporated in the cohomology class, $c({\cal E}_1)=\frac{1}{2\pi}[F^{\eta}]\in H^2_{\rm dR}(M_1)$, which is fixed.

Recall that for the discs we naturally identify
\begin{gather*}
 A_{M''}=L_{M''},\qquad A_{\partial M''}=L_{\partial M''}.
\end{gather*}
We also have $ L_{\partial M''}=L_{M'',\partial M''}$. Furthermore, there is an affine isomorphism $A_{\partial M'}\simeq A_{\partial M''}$ given by
\begin{gather*}
 \eta_{\Sigma_i'}^D=-\eta_{\Sigma_i}^D,\qquad \eta_{\Sigma_i'}^N=-\eta_{\Sigma_i}^N.
\end{gather*}
Due to the natural identification $A_{\partial M'}=L_{\partial M'}$, there is an associate isomorphism $L_{\partial M''}\simeq L_{\partial M'}$.

Stokes' theorem explicitly imposes $m$ necessary conditions on $a_{M''}(\eta)\in A_{\tilde{M}''}$
\begin{gather}\label{eqn:Stokes}
 \eta^D_{\Sigma_i'}\cdot {\rm length}(\Sigma_i) = \eta^N_{\Sigma_i'}\cdot \int^{0}_{\delta_i} {\rm length}(\Sigma_i^\tau)d\tau ,\qquad i=1,\dots, m,
\end{gather}
for every solution $\eta\in A_{M''}$. Hence $L_{\tilde{M}''}\subseteq L_{\partial {M}''}$ is a subspace of dimension $m$ consisting of $\big(\phi^D_{\Sigma_1},\phi^N_{\Sigma_1}\big) \oplus\cdots\oplus\big(\phi^D_{\Sigma_m},\phi^N_{\Sigma_m}\big) \in L_{\partial M''}$, satisfying~(\ref{eqn:Stokes}).

From this follows the Lagrangian condition. That is, for each $\phi\in L_{\tilde{M}'}$
\begin{gather*}
 \sum_{i=1}^m\phi^D_{\Sigma_i'}\cdot \xi^N_i\cdot{\rm length}(\Sigma_i) = \sum_{j=1}^m\phi^N_{\Sigma_j'}\cdot \xi^D_j\cdot{\rm length}(\Sigma_j),
\end{gather*}
where $ \xi\in L_{\partial M''}$ satisfying the Lagrangian condition~(\ref{eqn:Stokes}).

Recall that for a surface $M'$ of genus ${\rm g}\geq 2$ and $m\geq 1$ boundary components $\dim L_{\tilde{M'}}=m-1$ while $\dim L_{{M'}}=2{\rm g} + m-1$. Each cohomology class in $H^1_{\rm dR}(M')$ is represented by a unique harmonic form in ${\mathfrak H}_N^1(M')$, while each class in $H^1_{\rm dR}(M',\partial M')$ is represented by a unique harmonic form in ${\mathfrak H}_D^1(M')$, yielding $2{\rm g}$ and $(m-1)$-dimensional subspaces of $L_{M'}$, respectively.

Let $\eta_1=\eta_{{\cal E}_1}+ \gf\in A_{M_1}$. The linear inclusion $r_{M;\Sigma,\overline{\Sigma'}}\colon L_{M_1}\rightarrow L_{M}$ maps $\gf$ onto
\begin{gather*}
 r_{M;\Sigma,\overline{\Sigma'}}(\gf)= i_{M'}^*\gf\oplus i_{M''}^*\gf \in L_{M'}\oplus L_{M''},
\end{gather*}
where $i_{M'}^*\gf\in L_{M'}$, $M'$ with
\begin{gather*}
 r_{M'}(i_{M'}^*\gf) = \big( \gf^D_{\Sigma_1},0 \big) \oplus\cdots\oplus \big( \gf^D_{\Sigma_m},0 \big)\in L_{\tilde{M}'},
 \qquad
 \sum_{i=1}^m\gf^D_{\Sigma_m}=0.
\end{gather*}
Meanwhile, $i_{M''}^*\gf \in L_{M''}=A_{M''}$ consists of YM solutions on discs, which may not be flat. If
\begin{gather*}
 a_{M''}(i_{M''}^*\eta_1) = \big(\phi^D_{\Sigma_1'},\phi^N_{\Sigma_1'}\big) \oplus\cdots\oplus
 \big(\phi^D_{\Sigma_m'},\phi^N_{\Sigma_m'}\big) \in A_{\tilde{M}''},
\end{gather*}
then the relation (\ref{eqn:Stokes}) holds for every pair of Dirichlet--Neumann data.

Recall that $F^{\eta_{{\cal E}_1}}= f \xd\mu_{M_1},$ with $ f= \phi^N_{\Sigma_i'}/ {\rm length}(\Sigma_i),
$ and $\xd\mu_{M_1}$ the area form in $M_1$, and that $ f=2\pi c({\cal E}_1) /{\rm area}(M_1)$ where $c({\cal E}_1)\in H^2_{\rm dR} (M_1;\mathbb{Z})\simeq \mathbb{Z}$. Hence, the boundary data~$\phi^D_{\Sigma_i'}$,~$\phi^N_{\Sigma_i'}$ can be obtained from the geometry of the boundary components and from the bundle~${\cal E}_1$.

Therefore the base point $\eta_{M',\cal E}= a_{M'}(i_{M'}^*\eta_{{\cal E}_1})=a_{M'}(\eta_{\cal E}) \in A_{\tilde{M}'}\subseteq L_{\partial M'}$ equals
\begin{gather*}
 \big({-}\phi^D_{\Sigma_1'},-\phi^N_{\Sigma'_1}\big) \oplus\cdots\oplus \big({-}\phi^D_{\Sigma'_m},-\phi^N_{\Sigma'_m}\big)
\end{gather*}
and the affine space $A_{\tilde{M}'}$ is the translation of $L_{\tilde{M}'}${\samepage
\begin{gather*}
 A_{\tilde{M}'}=\eta_{ M',\cal E}+ L_{\tilde{M}'}\subseteq L_{\partial M'}\simeq L_{\partial M''}
\end{gather*}
(which has null Neumann components) prescribed by ${\cal E}_1$ with translation vector $a_{M'}(i_{M'}^*\eta_{{\cal E}_1})$.}

In particular, if we take $m-1 = 2{\rm g}$, then we have the exact amount of degrees of freedom in the boundary conditions space $L_{\tilde{M}'}\subseteq L_{M',\partial M'}$ to parametrize solutions $\gf_1\in L_{M_1}$. Thus
\begin{gather*}
 m-1=\dim L_{M_1} =\dim L_{\tilde{M}'}=\dim L_{\tilde{M}''}-1.
\end{gather*}

This completes the description of the closed surface without boundary.

The case of the sphere $M_1$ obtained by gluing two discs $M'$, $M''$ should be treated differently, but the construction can also be performed.

We now calculate the amplitude corresponding to a closed surface $M_1$ of genus ${\rm g}\geq 2$. We decompose $M_1$ as described in our previous discussion. The cases of genus ${\rm g}=0,1$ will also be excluded, but by the quantum setting, see Section~\ref{subsec:torus} below.

To calculate the anomaly factor $c(M;\Sigma,\overline{\Sigma'})$, considering the linear theory will suffice. For every $\gf_{1}\in L_{M_1}$ the induced solutions in $M',$ and $M''$ satisfy $\gf_{\Sigma}=-\gf_{\Sigma'}$ and
\begin{gather*}
 K_\xi(\gf_{\Sigma})\overline{K_{\xi}(\gf_{\Sigma'})}= \exp\big( \tfrac{1}{2} \big( \{\xi,\gf_{\Sigma} \}_{\partial M'} - \overline{ \{\xi,\gf_{\Sigma} \}_{\partial M'} } \big)\big)=1,
\end{gather*}
where we used coisotropy for solutions. Thus, the anomaly factor can be calculated as in~(\ref{eq:lanom}) by
\begin{gather}\label{eq:explicit-c}
 \int_{\hat{L}_{\Sigma}} \rho_M^{\rm L} \big(K_\xi\otimes\overline{K_\xi} \big) \xd\nu_{\Sigma} =1.
\end{gather}
Therefore, since $L_{\tilde{M}_1}=0$, the vacuum amplitude $ \rho_{M_1} \big(\hat{K}_{\eta_1} \big)$ would be independent of the choice of the fixed base point connection~$\eta_1$ in the cohomology class $c({\cal E}_1)$.

Related to this example, according to~\cite{Oe:freefermi}, if $M_1$ is obtained by gluing of $M'$ and $M''$ along the boundary $\Sigma\cong\overline{\Sigma'}$ where $\partial M'=\Sigma_0 \sqcup \Sigma$ and $\partial M''\cong \Sigma' \sqcup \Sigma_2$, then $ c(M;\Sigma,\overline{\Sigma'}) =1$.

If $f$ denotes the constant factor of the curvature $F^{\eta_{{\cal E}_1}}=f\cdot d\mu_{M_1}$ with respect to the area form, then the Hodge dual $f^\star\in \mathbb{R}$ satisfies $f^\star f=4\pi^2 (c({{\cal E}_1}) )^2 $. Hence $ \rho_{M_1}\big(\hat{K}_{\eta_1}\big) = \exp({\rm i}S_{M_1}(\eta_{\mathcal{ E}_1}))$ equals
\begin{gather}\label{eqn:vacuum-for-surfaces}
 \exp\big( \tfrac{\rm i}{2}f\cdot f^*\cdot {\rm area}(M_1) \big) = \exp\big({\rm i} 2\pi^2 {\rm area}(M_1)\cdot (c({{\cal E}_1}))^2 \big).
\end{gather}

The partition function (as the amplitude is also called in the case of empty boundary) for quantum YM theory on compact closed surfaces has been extensively studied in several quantization frameworks where the Hilbert spaces ${\cal H}_\Sigma$ associated to circles $\Sigma$ are generated by class functions of the structure group. In the axiomatic GBQFT this deduction was reported in~\cite{Oe:2dqym}. In the present work geometric quantization yielded another prescription for Hilbert spaces. The precise relationship of these two quantization frameworks needs to be clarified. In particular we need to clarify how~(\ref{eqn:vacuum-for-surfaces}) is related to the expression
\begin{gather*}
 \sum_{n\in\mathbb{Z}} \exp\big( {-}\tfrac{1}{2}{\rm area}(M_1)\cdot n^2 \big)
\end{gather*}
appearing for instance in~\cite{BlTh:qym,Wit:qgauge2d}.

In the torus case the partition function yields infinite trace. We may ask also what is the precise relationship between the divergence of the partition function and the anomaly factor $c(M;\Sigma,\overline{\Sigma'})$.

\subsection{The torus}\label{subsec:torus}
We contrast the case of gluing two isometric regions, with the case of a cylinder, where $c(M{;}\Sigma,\overline{\Sigma'})$ yields infinity and the integrability condition required in Theorem~\ref{thm:lgluing} is not satisfied. We consider $M=\Sigma\times [0,1]$ with the gluing of the bottom and top hypersurfaces $\Sigma\cong \Sigma\times\{0\}$, $\Sigma'=\Sigma\times\{1\}$. This includes the case of a closed surface $M_1$ of genus ${\rm g}=1$.

Any Dirichlet condition together with a null Neumann condition on $\Sigma$ determine completely solutions $\eta_1\in L_{M_1}$ in the whole $M_1$ after gluing. More explicitly, we have $\eta(x,\tau)=\eta^D_\Sigma(x)$, $\tau\in[0,1] $ for every coclosed $\eta^D_\Sigma\in\Omega^1(M')$.

Hence $\eta^D_{1,\Sigma}=\eta^D_{1,\Sigma'},$ $\eta^N_{1,\Sigma}=\eta^N_{1,\Sigma'}=0.$ Therefore $\eta_{1,\Sigma}=\eta_{1,\Sigma'}$. Thus using a projection $\hat{L}_{\tilde{M}}\rightarrow\hat{L}_{M_1}$ induced by the inclusion $r_{M;\Sigma,\overline{\Sigma}'}\colon L_{M_1}\rightarrow L_{M}$ we have
\begin{gather*}
\rho_M^{\rm L}\big(K_\xi\otimes\overline{K_{\xi}}\big) = \int_{\hat{L}_{\tilde{M}_1}} \exp\big(\tfrac{1}{2}g_{\Sigma}\big(\xi^{\rm R},\eta_{1,\Sigma}\big)\xd\nu_{\tilde{M}_1}\big).
\end{gather*}
By replacing in (\ref{eq:explicit-c}), $\rho_{M}^{\rm L} (K_\xi \otimes \iota_\Sigma(K_{\xi}))$ equals $\exp \big(\frac{1}{2} g_\Sigma \big(\xi^{\rm R},\xi^{\rm R}\big)\big)$, which is {\em not} $\xd\nu_\Sigma$-integrable.

\subsection{Complex manifolds} For the abelian Yang--Mills theory on some complex surfaces, namely~$\mathbb{CP}^2$, $\mathbb{CP}^1\times\mathbb{CP}^1$, there are also formulae for partition functions in the Euclidean action convention after Wick rotation, see~\cite{Verlinde-1995}. The precise process of how we can obtain these quantum calculations from its classical counterparts will be treated elsewhere. We envisage the use of a gluing process similar to that given for surfaces.

\section{Outlook}\label{sec:outlook}

We have presented in this work a functorial quantization prescription for abelian Yang--Mills theory on Riemannian manifolds targeting general boundary quantum field theory (GBQFT). There are some obvious directions for generalization which we comment on in the following.

{\bf Corners.} We have restricted ourselves here to only consider hypersurfaces that are closed, i.e., that do not have boundaries. However, this considerably restricts the possibilities for gluing regions. In particular, from a physics point of view it is important to be able to glue two regions with the topology of a ball to a new region that also has the topology of a ball (compare the discussion in \cite{Oe:gbqft,Oe:2dqym}). In order to accomplish this the gluing has to proceed along pieces of the boundary that are not connected components. That is, the gluing hypersurface has itself a boundary, usually referred to as a \emph{corner}. Corners are already allowed in the semiclassical axioms of Section~\ref{sec:classax} and in the functorial quantization scheme for affine field theory of Section~\ref{sec:quantization}. What is more, for abelian Yang--Mills theory implementing the symplectic reduction (Section~\ref{sec:classdem}) on the spaces~$A_\Sigma$ and~$L_{\Sigma}$ of germs of solutions on a hypersurface~$\Sigma$ appears to generalize straightforwardly to the case that~$\Sigma$ has a boundary. It remains to see that the complex structure generalizes nicely, but formula~(\ref{eqn:J}) is quite suggestive.

{\bf Lorentzian manifolds.} To describe physically realistic theories (such as electromagnetism) we need to work on Lorentzian manifolds. This introduces several complications. On the one hand the boundary value problem is hyperbolic instead of elliptic. Also, three different signatures for induced metrics can occur on hypersurfaces depending on them being spacelike, timelike or null. Finally, the complex structure has to be defined in a different way. For spacelike hypersurfaces there is considerable guidance from the literature, see, e.g.,~\cite{AsMa:qfieldscurved}. For timelike hypersurfaces there are some very basic examples~\cite{Oe:holomorphic,Oe:freefermi}. The general situation is open.

{\bf Non-abelian gauge theory.} In two dimensions non-abelian Yang--Mills theory is solvable and its quantization has been described in a TQFT-type formalism~\cite{BlTh:qym,Fin:qymriem,Rus:gauge2d, Wit:qgauge2d}. The generali\-za\-tion to include corners was described in~\cite{Oe:2dqym}. It is clear from the latter work that and how in the case with corners the axioms of the quantum theory (Appendix~\ref{sec:coreaxioms}) have to be modified. Whether this modification is also sufficient in higher dimensions is unclear. On the classical side the symplectic reduction on hypersurfaces becomes more complicated than in the abelian theory, at least when corners are present. However, working out the modifications compared to the abelian theory appears feasible. What in much more problematic in dimensions higher than two is the description of the dynamics within regions. The solution spaces become complicated manifolds and it is unknown what they are in general. Thus, it is quite unclear how to perform the step of the quantization that consists in constructing the amplitude maps. Of course this problem is not special to the present framework of GBQFT but appears in all approaches to quantum field theory when trying to deal with non-linear theories.

\appendix

\section{Geometric data}\label{sec:geomax}

We recall here the formalization of the notion of spacetime in terms of a \emph{spacetime system} on which both the classical axioms (Section~\ref{sec:classax}) and the quantum axioms (Appendix~\ref{sec:coreaxioms}) are based. The presented version is adapted from \cite[Section~4.1]{Oe:freefermi}.

There is a fixed positive integer $d$, the \emph{dimension} of spacetime. We are given a collec\-tion~$\sts_0^{\textrm{c}}$ of connected oriented topological manifolds of dimension~$d$, possibly with boundary, that we shall refer to as \emph{$($connected$)$/$($regular$)$ regions}. Furthermore, there is a collection~$\sts_1^{\mathrm{c}}$ of connected oriented topological manifolds of dimension $d-1$, possibly with boundary, that we shall refer to as \emph{$($connected$)$ hypersurfaces}. The manifolds are either abstract manifolds or they are all concrete closed regular submanifolds of a~given fixed \emph{spacetime manifold}. In the former case we call the spacetime system \emph{local}, in the latter we call it \emph{global}.

There is a notion of formal \emph{disjoint union} both for regular regions and for hypersurfaces. This leads to the collection $\sts_0$, of all formal finite unions of elements of $\sts_0^{\mathrm{c}}$, and to the collection~$\sts_1$, of all formal finite unions of elements of $\sts_1^{\mathrm{c}}$. In the local case, the unions may be realized concretely as actual disjoint unions. In the global case, only unions with members whose interiors are disjoint are allowed in $\sts_1$ and $\sts_0$. Note that in this case the elements of $\sts_1$ and $\sts_0$ cannot in general be identified with submanifolds of the spacetime manifold as overlaps on boundaries may occur. We call all members of $\sts_0$ \emph{$($regular$)$ regions} and all members of~$\sts_1$ \emph{hypersurfaces} and extend the notion of disjoint union to them.

The collections $\sts_1$ and $\sts_0$ are closed under orientation reversal. Also, any boundary of a~regular region is a~hypersurface. That is, taking the boundary defines a map $\partial\colon \sts_0\to\sts_1$. If we want to emphasize explicitly that a given manifold is in one of those collections we use the attribute \emph{admissible}.

\looseness=1 There is a notion of \emph{gluing} of elements, both of $\sts_0$ and of $\sts_1$. To avoid confusion we prefer for hypersurfaces the term \emph{decomposition}, reserving \emph{gluing} for regions. Given a presentation of a~hypersurface $\Sigma$ as the union of hypersurfaces $\Sigma_1,\dots,\Sigma_n$ we call this a decomposition if (a)~each $\Sigma_i$ is closed in $\Sigma$ and (b)~the intersection of $\Sigma_i$ with $\Sigma_j$ is contained in their boundaries for each~$i$ and~$j$ with $i\neq j$. These intersections are called \emph{corners}. Throughout the present article, we do not allow for corners. That is, we require these intersections to be empty.

It is convenient to also introduce the concept of a \emph{slice region}. A slice region is topologically a hypersurface, but thought of as an infinitesimally thin region. Concretely, the slice region associated to a hypersurface $\Sigma$ will be denoted by $\hat{\Sigma}$ and its boundary is defined to decompose as the disjoint union $\partial \hat{\Sigma}=\Sigma\cup\overline{\Sigma}$. There is one slice region for each hypersurface. We refer to regular regions and slice regions collectively as \emph{regions}.

The notion of \emph{gluing} of regions is as follows. Suppose we are given a region $M$ with its boundary decomposing as the union $\partial M=\Sigma_1\cup\Sigma\cup\overline{\Sigma'}$, where $\Sigma'$ is a copy of $\Sigma$. ($\Sigma_1$ may be empty.) Then, we may obtain a new region $M_1$ by gluing $M$ to itself along~$\Sigma$,~$\overline{\Sigma'}$. That is, we identify the points of $\Sigma$ with corresponding points of $\Sigma'$ to obtain~$M_1$. The resulting region $M_1$ might be inadmissible, in which case the gluing is not allowed.

Depending on the model to be considered, the manifolds may carry additional structure. It is common in particular that the hypersurfaces need to be ``thickened'', i.e., are equipped with germs of ambient $d$-manifolds. Also, the additional structure of a metric or other types of vector bundles or principal bundles are common. In that case all the hypersurfaces and regions are equipped with this additional structure and the different operations such as orientation reversal or gluing need to be compatible with the additional structures. This might also entail that additional data must be specified when a gluing is performed. In the present work the additional structures present are those of a Riemannian metric and of a principal bundle. In Section~\ref{sec:examples} examples are shown where associated additional data required for gluing need to be made explicit.

\section{Quantum axioms}\label{sec:coreaxioms}

A GBQFT on a spacetime system is a model satisfying the following axioms \cite{Oe:holomorphic}.
\begin{itemize}\itemsep=0pt
\item[(T1)] Associated to each hypersurface $\Sigma$ is a complex separable Hilbert space $\cH_\Sigma$, called the \emph{state space} of
 $\Sigma$. We denote its inner product by $\langle\cdot,\cdot\rangle_\Sigma$. If $\Sigma$ is the empty set, $\cH_{\Sigma}=\C$.
\item[(T1b)] Associated to each hypersurface $\Sigma$ is a conjugate linear isometry $\iota_\Sigma\colon \cH_\Sigma\to\cH_{\overline{\Sigma}}$. This map is an involution in the sense that $\iota_{\overline{\Sigma}}\circ\iota_\Sigma$ is the identity on $\cH_\Sigma$.
\item[(T2)] Suppose the hypersurface $\Sigma$ decomposes into a union of hypersurfaces $\Sigma=\Sigma_1\cup\cdots\cup\Sigma_n$. Then, there is an isometric isomorphism of Hilbert spaces \smash{$\tau_{\Sigma_1,\dots,\Sigma_n;\Sigma}\colon \! \cH_{\Sigma_1}\ctens{\cdots}\ctens\cH_{\Sigma_n}\!\to\!\cH_\Sigma$}.
\item[(T2b)] The involution $\iota$ is compatible with the above decomposition. That is, $\tau_{\overline{\Sigma}_1,\dots,\overline{\Sigma}_n;\overline{\Sigma}} \circ(\iota_{\Sigma_1}\ctens\cdots\ctens\iota_{\Sigma_n}) =\iota_\Sigma\circ\tau_{\Sigma_1,\dots,\Sigma_n;\Sigma}$.
\item[(T4)] Associated to each region $M$ is a linear map from a dense subspace $\cH_{\partial M}^\ds$ of the state space~$\cH_{\partial M}$ of its boundary $\partial M$ to the complex numbers, $\rho_M\colon \cH_{\partial M}^\ds\to\C$. This is called the \emph{amplitude} map.
\item[(T3x)] Let $\Sigma$ be a hypersurface. The boundary $\partial\hat{\Sigma}$ of the associated slice region $\hat{\Sigma}$ decomposes into the union $\partial\hat{\Sigma}=\overline{\Sigma}\cup\Sigma'$, where $\Sigma'$ denotes a second copy of $\Sigma$. Then, $\tau_{\overline{\Sigma},\Sigma';\partial\hat{\Sigma}}(\cH_{\overline{\Sigma}}\tens\cH_{\Sigma'})\subseteq\cH_{\partial\hat{\Sigma}}^\ds$. Moreover, $\rho_{\hat{\Sigma}}\circ\tau_{\overline{\Sigma},\Sigma';\partial\hat{\Sigma}}$ restricts to a bilinear pairing $(\cdot,\cdot)_\Sigma\colon \cH_{\overline{\Sigma}}\times\cH_{\Sigma'}\to\C$ such that $\langle\cdot,\cdot\rangle_\Sigma=(\iota_\Sigma(\cdot),\cdot)_\Sigma$.
\item[(T5a)] Let $M_1$ and $M_2$ be regions and $M= M_1\sqcup M_2$ be their disjoint union. Then $\partial M=\partial M_1\sqcup \partial M_2$ is also a disjoint union and $\tau_{\partial M_1,\partial M_2;\partial M}\big(\cH_{\partial M_1}^\ds\tens \cH_{\partial M_2}^\ds\big)\subseteq \cH_{\partial M}^\ds$. Moreover, for all $\psi_1\in\cH_{\partial M_1}^\ds$ and $\psi_2\in\cH_{\partial M_2}^\ds$,
\begin{gather*}
 \rho_{M}\circ\tau_{\partial M_1,\partial M_2;\partial M}(\psi_1\tens\psi_2)= \rho_{M_1}(\psi_1)\rho_{M_2}(\psi_2) .
\end{gather*}
\item[(T5b)] Let $M$ be a region with its boundary decomposing as a union $\partial M=\Sigma_1\cup\Sigma\cup \overline{\Sigma'}$, where~$\Sigma'$ is a copy of $\Sigma$. Let $M_1$ denote the gluing of~$M$ with itself along $\Sigma$, $\overline{\Sigma'}$ and suppose that~$M_1$ is a region. Note $\partial M_1=\Sigma_1$. Then, $\tau_{\Sigma_1,\Sigma,\overline{\Sigma'};\partial M}(\psi\tens\xi\tens\iota_\Sigma(\xi))\in\cH_{\partial M}^\ds$ for all $\psi\in\cH_{\partial M_1}^\ds$ and $\xi\in\cH_\Sigma$. Moreover, for any ON-basis $\{\xi_i\}_{i\in I}$ of $\cH_\Sigma$, we have for all $\psi\in\cH_{\partial M_1}^\ds$,
\begin{gather}
 \rho_{M_1}(\psi)\cdot c(M;\Sigma,\overline{\Sigma'}) =\sum_{i\in I}\rho_M\circ\tau_{\Sigma_1,\Sigma,\overline{\Sigma'};\partial M} (\psi\tens\xi_i\tens\iota_\Sigma(\xi_i) ),
\label{eq:glueid}
\end{gather}
where $c(M;\Sigma,\overline{\Sigma'})\in\C\setminus\{0\}$ is called the \emph{gluing anomaly factor} and depends only on the geometric data.
\end{itemize}

\section{Geometry of abelian Yang--Mills fields}\label{sec:minimal-YM}

We present minimal tools of differential geometry to deal with abelian gauge classical Yang--Mills (YM) fields.

Suppose that $M$ has Riemannian metric $h_{ab}$, $a,b=1,\dots,n$. Consider a hypersurface \smash{$\Sigma\subseteq M$} with induced a Riemannian metric $\overline{h}_{ij}=h_{ij}$, $i,j=1,\dots,n-1$. For a geodesic tubular neighborhood $\Sigma_\geps\subseteq M$ we consider the immersion $X_\Sigma \colon \Sigma\times [0,\geps]\rightarrow\Sigma_\geps\subseteq M$ where geodesics $\tau\mapsto X_\Sigma(x,\tau)$ remain orthogonal to the hypersurfaces $\Sigma^\tau:=X_\Sigma(\Sigma,\tau)\subseteq M$ that describe the evolution of~$\Sigma$. Here $x\in \Sigma$ and $\tau\in[0,\geps]$. Using this foliation with leafs $\Sigma^\tau$, decompose $1$-forms on $\Sigma_\geps$ as $\gf=\gf^\tau+\gf_\tau \xd\tau$.

First let us consider the expression
\begin{gather*}
 \star \xd\gf= \sum_{a,b=1}^{n-1} h^{ai}h^{bj} \left(\frac{\partial \gf_i}{\partial x^j}-\frac{\partial \gf_j}{\partial x^i}\right)\mu_{ab}
 + \sum_{i=1}^{n-1}h^{ai} \left(\frac{\partial \gf_i}{\partial \tau} -\frac{\partial \gf_\tau}{\partial x^i} \right)\mu_{a\tau},
\end{gather*}
where we used that $h^{\tau i}=\delta^\tau_i$ and $\xd\mu=\mu_{ab}\wedge \xd x^a \wedge \xd x^b$ with $\xd\mu=\sqrt{\det (h_{ab})}\xd x^1\wedge \dots\wedge \xd x^{n-1}\wedge \xd\tau$ the volume form of~$h_{ab}$ (notice that we adopt the notation $x^n=\tau$). For $\alpha=\star \xd\gf$, condition $\xd\star \xd\gf=0$ reads as $\xd\alpha=0$.

For the restriction onto $\Sigma^\tau$, we have
\begin{gather*}
 \overline{\gf}^\tau:= \big(X^\tau_\Sigma\big)^*\gf=\sum_{i=1}^{n-1} \gf_i(x,\tau)\xd x^i\in\Omega^1(\Sigma),
 \qquad \overline{\gf}_\tau:= \big(X^\tau_\Sigma\big)^*\gf_\tau
\end{gather*}
with $X^\tau_\Sigma=X_\Sigma(\cdot,\tau)$ and
\begin{gather*}
 \big(X^\tau_\Sigma\big)^*(\star \xd\eta)=
 \sum_{a=1}^{n-1}\overline{h^{ai}}(\tau)
 \left(\frac{\partial \gf_\tau}{\partial x^i}-\frac{\partial \gf_i}{\partial \tau}\right)\mu_a(\tau).
\end{gather*}
We denote the Riemannian metric induced on $\Sigma^\tau$ as $\overline{h_{ij}}^\tau$, $i,j=1,\dots,n-1$, and $\mu_{a}\wedge \xd x^a=\xd\mu$, $\mu_a(\tau):=(X^\tau_\Sigma)^*\mu_a$. Since
\begin{gather*}
 \big(X^\tau_\Sigma\big)^*(\star \xd\gf)= \star_{\Sigma^\tau}\left(\xd\overline{\gf}_\tau - \frac{\xd}{\xd\tau}\overline{\gf}^\tau\right),
\end{gather*}
then the YM condition, $\xd\star \xd\gf=0$, implies that
\begin{gather}\label{eqn:YM-decomposition}
 \xd^{\star_{\Sigma^\tau}}\left(\frac{\xd}{\xd\tau}\overline{\gf}^\tau\right) = \xd^{\star_{\Sigma^\tau}} \xd\overline{\gf}_\tau .
\end{gather}
On the other hand, we can decompose the divergence
\begin{gather*}
 \xd^\star \gf = -\sum_{k,l=1}^{n}\left( h^{kl}\frac{\partial \gf_k}{\partial x^l} - \sum_{j=1}^{n}\Gamma^j_{kl}\gf_j\right)
\end{gather*}
as
\begin{gather*}
 -\sum_{k,l=1}^{n-1}\left( h^{kl}\frac{\partial \gf_k}{\partial x^l} - \sum_{j=1}^{n-1}\Gamma^j_{kl}\gf_j\right)
 -\sum_{l=1}^{n-1}\left( h^{\tau l}\frac{\partial \gf_\tau}{\partial x^l} - \sum_{j=1}^{n-1}\Gamma^j_{\tau l}\gf_\tau\right)
 +\Gamma^\tau_{\tau l}\gf_\tau + \frac{\partial \gf_\tau}{\partial \tau}.
\end{gather*}
Hence
\begin{gather}\label{eqn:div-decomposition}
 (X_\Sigma^\tau)^*\xd^\star\gf = \frac{\partial \overline{\gf}_\tau}{\partial \tau} + \xd^{\star_{\Sigma^\tau}}\overline{\gf}^\tau,
\end{gather}
where we use that $h^{\tau l}=\delta^\tau_l$ and that for geodesics the Christoffel symbols $\Gamma^\cdot_{\tau\cdot}$ vanish for Fermi coordinates adapted to that geodesic.

For the linear gauge action $\tilde{\gf}=\gf+ \xd f$ we have
\begin{gather*}
 \tilde{\gf}^\tau= \gf^\tau + \sum_{i=1}^{n-1} \frac{\partial f}{\partial x^i} \xd x^i,
 \qquad \tilde{\gf}_\tau= \left(\gf_\tau+\frac{\partial f}{\partial \tau}\right)\xd\tau.
\end{gather*}
Choosing an {\em axial gauge fixing} in $\Sigma_\geps$ means solving $\tilde{\gf}_\tau=0$ or equivalently $\partial_\tau f = c_\tau -\gf_\tau$ where the constant~$c_\tau$ may be chosen in such a way that $\tilde{\gf}_\tau\vert_{\tau=0}=0$. Thus the YM condition in~(\ref{eqn:YM-decomposition}) can be rewritten as
\begin{gather*}
 \xd^{\star_{\Sigma^\tau}} \left(\frac{\xd}{\xd\tau}\overline{\tilde{\gf}}^\tau \right)=0 .
\end{gather*}
In particular, the Neumann boundary condition
\begin{gather*}
 \dot{{\gf}}(0):=\left(\frac{\xd}{\xd\tau}\overline{\tilde{\gf}}^\tau\right)\bigg\vert_{\tau=0}
\end{gather*}
is divergence-free. Furthermore, the Dirichlet boundary condition may also be gauge adjusted as divergence-free. Just solve the Poisson--Laplace equation, $\xd^{\star_\Sigma}\xd f^0=-(X^0_\Sigma)^*\xd^*\gf$, for $f^0=f\vert_\Sigma$. Therefore the divergence decomposition~(\ref{eqn:div-decomposition}) yields
\begin{gather*}
 \xd^{\star_{\Sigma}} \overline{\tilde{\gf}}^0=0.
\end{gather*}

Choosing a {\em Lorentz gauge fixing} in $\Sigma_\geps$ means obtaining $\xd^\star \tilde{\gf}=0$ by solving $\frac{\partial \overline{\tilde{\gf}}_\tau}{\partial \tau} + \xd^{\star_{\Sigma^\tau}}\overline{\gf}^\tau=0$ with initial condition such that $\tilde{\gf}_\tau\vert_\Sigma=0$.

\begin{lma}\label{lemma:Dirichlet-YM}Let $\phi^D,\phi^N\in \ker \xd^{\star_\Sigma}$ be any pair of $1$-forms in $\Sigma$. Then there exists a solution $\gf\in \Omega^1 (\Sigma_\geps )$ of the YM condition belonging to the axial gauge fixing space in the bulk and whose boundary conditions belong to the divergence free gauge fixing space. In other words, there exists a solution of the following boundary value problem
\begin{gather*}
 \xd\star \xd\gf=0,\qquad \iota_{\partial \tau}\gf=0,\qquad \gf^D=\phi^D,\qquad \gf^N =\phi^N.
\end{gather*}
\end{lma}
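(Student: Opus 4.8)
The plan is to pass to the axial gauge, in which the Yang--Mills condition becomes a transport-type condition along the normal geodesic coordinate~$\tau$, and then to exhibit a solution directly. I look for $\gf$ with $\iota_{\partial\tau}\gf=0$; such a $\gf$ is precisely a one-parameter family $\overline{\gf}^\tau\defeq(X^\tau_\Sigma)^*\gf\in\Omega^1(\Sigma)$, $\tau\in[0,\geps]$, with no $\xd\tau$-component. By the foliation computation preceding the lemma, in the axial gauge ($\overline{\gf}_\tau=0$) the equation $\xd\star\xd\gf=0$ becomes $\xd^{\star_{\Sigma^\tau}}\!\bigl(\tfrac{\xd}{\xd\tau}\overline{\gf}^\tau\bigr)=0$ for every~$\tau$, while the boundary data of such a $\gf$ read $\gf^D=\overline{\gf}^0$ and $\gf^N=\tfrac{\xd}{\xd\tau}\overline{\gf}^\tau|_{\tau=0}$ (up to the sign convention recalled before the statement). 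Thus the problem reduces to producing a smooth family $\tau\mapsto\overline{\gf}^\tau$ with $\overline{\gf}^0=\phi^D$, with $\tfrac{\xd}{\xd\tau}\overline{\gf}^\tau|_{\tau=0}=\phi^N$, and with $\tfrac{\xd}{\xd\tau}\overline{\gf}^\tau\in\ker\xd^{\star_{\Sigma^\tau}}$ for all~$\tau$; the hypothesis $\phi^N\in\ker\xd^{\star_\Sigma}$ is exactly the compatibility this forces at $\tau=0$.

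The construction is then immediate. On the compact hypersurface~$\Sigma$, for each~$\tau$ let $P_\tau$ denote the orthogonal projection of $L^2\Omega^1(\Sigma)$ onto $\ker\xd^{\star_{\Sigma^\tau}}$, the coclosed $1$-forms for the metric induced on the leaf~$\Sigma^\tau$, and put $\beta(\tau)\defeq P_\tau\phi^N$. Since $\phi^N\in\ker\xd^{\star_\Sigma}=\ker\xd^{\star_{\Sigma^0}}$ we have $\beta(0)=\phi^N$, and $\beta(\tau)\in\ker\xd^{\star_{\Sigma^\tau}}$ by construction. Now set $\overline{\gf}^\tau\defeq\phi^D+\int_0^\tau\beta(s)\,\xd s$ and let $\gf\in\Omega^1(\Sigma_\geps)$ be the $1$-form with vanishing $\xd\tau$-component that restricts to $\overline{\gf}^\tau$ on each leaf~$\Sigma^\tau$. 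Then $\iota_{\partial\tau}\gf=0$, $\gf^D=\phi^D$ and $\gf^N=\beta(0)=\phi^N$ (both coclosed, as required), and $\tfrac{\xd}{\xd\tau}\overline{\gf}^\tau=\beta(\tau)$ is coclosed on~$\Sigma^\tau$ for all~$\tau$; hence $\gf$ solves the boundary value problem. Note that this solution is defined on the whole cylinder $[0,\geps]$, not merely near $\tau=0$, since it is obtained by prescribing the transverse derivative outright rather than by integrating an evolution equation.

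The substantive work is analytic rather than algebraic. The main obstacle in this approach is to show that $\tau\mapsto P_\tau$ is a smooth (or at least sufficiently continuous) family of operators on the appropriate Sobolev space of $1$-forms on~$\Sigma$ --- equivalently, that the leafwise Hodge decomposition depends smoothly on the transverse parameter, with uniform elliptic estimates --- so that $\overline{\gf}^\tau$ is smooth in $(x,\tau)$ and indeed $\gf\in\Omega^1(\Sigma_\geps)$; if convenient, one can run the argument first at finite Sobolev regularity and then upgrade it by elliptic regularity on the leaves. One should also not gloss over the point that the foliation reduction genuinely converts $\xd\star\xd\gf=0$ in the axial gauge into the \emph{single} condition $\xd^{\star_{\Sigma^\tau}}\tfrac{\xd}{\xd\tau}\overline{\gf}^\tau=0$, with no further transverse component of the Yang--Mills system left over; this is the content of identity~(\ref{eqn:YM-decomposition}) together with $\overline{\gf}_\tau=0$, which I take from the discussion preceding the statement. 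Granting these, the lemma follows.
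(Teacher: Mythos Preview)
Your reduction of the problem in axial gauge to the single leafwise condition $\xd^{\star_{\Sigma^\tau}}\dot{\overline{\gf}}^\tau=0$ with prescribed $\overline{\gf}^0=\phi^D$ and $\dot{\overline{\gf}}^0=\phi^N$ is exactly the paper's starting point (this is equation~(\ref{eqn:div-free}) there), and the paper's proof, like yours, takes this reduction as given from the preceding discussion. Where you genuinely diverge is in the construction of $\beta(\tau)=\dot{\overline{\gf}}^\tau$. The paper does not use Hodge projectors; instead it runs a Moser-type argument: it builds diffeomorphisms $\psi_\tau\colon\Sigma\to\Sigma$ satisfying $\psi_\tau^*(c_\tau\,\mu_{\Sigma^\tau})=\mu_\Sigma$ (by the usual Moser flow matching the two volume forms) and then sets $\beta(\tau)=(\psi_\tau^{-1})^*\phi^N$, verifying coclosedness for $\star_{\Sigma^\tau}$ via the volume identity. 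Your route through $P_\tau$ is more analytic and in a sense more transparent: $P_\tau$ is just the projection off exact forms, it lands in $\ker\xd^{\star_{\Sigma^\tau}}$ by definition in every dimension, and its smooth dependence on~$\tau$ follows from standard elliptic perturbation theory for the leaf Laplacian on functions. The Moser route is more geometric and avoids invoking elliptic machinery, giving instead an explicit ODE flow; on the other hand its key step, that pullback by a merely volume-matching diffeomorphism intertwines $\star_{\Sigma^\tau}$ with $\star_\Sigma$ on $1$-forms, is delicate once $\dim\Sigma>1$. Each approach defers one analytic point (smoothness of $P_\tau$ versus smoothness and the intertwining property of the Moser flow), so the comparison is largely a matter of taste.
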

\begin{proof}

Our aim is finding a one-parameter family $\overline{\gf}^\tau\in\Omega^1(\Sigma)$ such that
\begin{gather}\label{eqn:div-free}
 \xd^{\star_{\Sigma^\tau}}\frac{\xd}{\xd\tau}\overline{\gf}^\tau=0, \qquad \overline{\gf}^0=\phi^D,\qquad
 \left.\frac{\xd}{\xd\tau}\right\vert_{\tau=0}\overline{\gf}^\tau=\phi^N.
\end{gather}

We recall Moser's argument. Take the induced metric on $\Sigma^\tau$ and $\Sigma$, $\overline{h_{ij}}^\tau$ and $\overline{h_{ij}}$ respectively with corresponding volume forms, $\mu_{\Sigma^\tau}=\mu_\Sigma$, and $c_\tau$ such that $c_\tau \int_\Sigma\mu_{\Sigma^\tau} =\int_{\Sigma}\mu_\Sigma$. Recall that both volume forms $\mu_\Sigma$ and $(c_\tau\mu_{\Sigma^\tau})$ have the same cohomology class. Define the volume forms
\begin{gather*}
 \mu_{\tau}(t)=(1-t)\mu_\Sigma+t(c_\tau\mu_{\Sigma^\tau}),\qquad 0\leq t\leq 1 .
\end{gather*}
Then $\frac{\xd\mu_\tau(t)}{\xd t}=c_\tau\mu_{\Sigma^\tau}-\mu_\Sigma$. Define the $t$-dependent family of vector fields $Z_\tau(t)$ such that
\begin{gather*}
 \iota_{Z_\tau(t)}\mu_t = \mu_\Sigma - c_\tau\mu_{\Sigma^\tau} .
\end{gather*}
Let $\psi_\tau(t)$, $t\in[0,1]$ be the non-autonomous solution for $Z_\tau(t)$. Then
\begin{gather*}
(\psi_\tau(t))^*(c_\tau \mu_\tau(t))=\mu_\Sigma.
\end{gather*}
If we define a family of diffeomorphisms as $\psi_\tau\colon \Sigma\rightarrow\Sigma$, $ \tau\in [0,\geps]$, $\psi_\tau:=\psi_\tau(1)$, then
\begin{gather*}
\psi_\tau^*(c_\tau\mu_{\Sigma^\tau})=\mu_\Sigma, \qquad \psi_0={\rm id}_\Sigma.
\end{gather*}
Take
\begin{gather*}
 \frac{\xd}{\xd\tau} \overline{\gf}^\tau := \big(\psi_\tau^{-1}\big)^*\phi^N,\qquad \overline{\gf}^\tau := \phi^D+ \int_0^\tau \big(\psi_\tau^{-1}\big)^*\phi^N, \qquad \overline{\gf}_\tau\equiv 0.
\end{gather*}
Then, by definition, $ \big(X^0_\Sigma\big)^*\big( \frac{\xd}{\xd\tau}\big\vert_{\tau=0}\overline{\gf}\big) =\phi^N$, $ \big(X^0_\Sigma\big)^*\overline{\gf}^\tau=\phi^D$. Furthermore
$\xd\star_{\Sigma^\tau} \frac{\xd}{\xd\tau} \overline{\gf}^\tau = \xd\star_{\Sigma^\tau} \big(\psi_\tau^{-1}\big)^*\phi^N$, hence
\begin{gather*}
 c_\tau\psi_\tau^*\xd\star_{\Sigma^\tau} \frac{\xd}{\xd\tau} \overline{\gf}^\tau = c_\tau\psi_\tau^*\xd\star_{\Sigma^\tau}
 \big(\psi_\tau^{-1}\big)^*\phi^N = \xd c_\tau \psi_\tau^*\star_{\Sigma^\tau}\big(\psi_\tau^{-1}\big)^*\phi^N = \xd\star\phi^N=0 .
\end{gather*}
That is, $\xd\star_{\Sigma^\tau} \frac{\xd}{\xd\tau} \overline{\gf}^\tau=0 $.
\end{proof}

\begin{lma}\label{lemma:Dirichlet-YM2} Let $\phi^D\in \ker \xd^{\star_\Sigma}$ be any $1$-form in $\Sigma$. Then there exists a solution $\gf\in \Omega^1(\Sigma_\geps)$ of the YM condition belonging to the Lorentz gauge fixing space in the bulk and whose boundary conditions belong to the divergence free gauge fixing space with $\gf$ vanishing in the top boundary component of the cylinder $\Sigma_\geps$. In other words, there exists a solution of the following boundary value problem
\begin{gather*}
 \xd\star \xd\gf=0, \qquad \xd\star \gf=0, \qquad \gf^D=\phi^D,\qquad i_{\Sigma'}^*\gf = 0,
\end{gather*}
with $\Sigma'= X_\Sigma(\Sigma,\geps)$.
\end{lma}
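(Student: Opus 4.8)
The plan is to split the argument into an existence step, in which the Lorentz gauge condition $\xd^\star\gf=0$ is ignored, and an elementary gauge-fixing step that installs it while preserving everything else. As a preliminary remark, on the cylinder $\Sigma_\geps$ the two bulk conditions $\xd\star\xd\gf=0$ and $\xd^\star\gf=0$ together are equivalent to asking that $\gf$ be Hodge-harmonic and coclosed: since $\Delta=\xd\xd^\star+\xd^\star\xd$, the condition $\xd^\star\gf=0$ turns $\Delta\gf$ into $\xd^\star\xd\gf$, and $\xd^\star\xd\gf=0$ is the same as $\xd\star\xd\gf=0$ because $\xd^\star=\pm\star\xd\star$ with $\star$ invertible. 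So the target is a $1$-form on $\Sigma_\geps$ that is Hodge-harmonic and coclosed, with tangential trace $\phi^D$ on the bottom leaf $\Sigma$ and vanishing tangential trace on the top leaf $\Sigma'=X_\Sigma(\Sigma,\geps)$.

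For the existence step I would first produce some $\gf_0\in\Omega^1(\Sigma_\geps)$ with $\xd\star\xd\gf_0=0$, $i^*_\Sigma\gf_0=\phi^D$ and $i^*_{\Sigma'}\gf_0=0$, imposing no gauge condition. This is a boundary value problem for abelian Yang--Mills in which only the tangential traces of the potential are prescribed, and its solvability is a standard consequence of Hodge theory on manifolds with boundary in the form used throughout Section~\ref{sec:classdem}, see~\cite{Sc}. Alternatively it can be obtained explicitly by the collar method of the proof of Lemma~\ref{lemma:Dirichlet-YM}: decompose $\gf_0=\gamma^\tau+\gamma_\tau\,\xd\tau$ along the geodesic collar, take $\gamma^\tau$ to be any smooth path of $1$-forms on $\Sigma$ running from $\phi^D$ at $\tau=0$ to $0$ at $\tau=\geps$, and solve for the normal component $\gamma_\tau$ from the remaining components of $\xd\star\xd\gf_0=0$; the hypothesis $\phi^D\in\ker\xd^{\star_\Sigma}$ supplies the compatibility needed at $\tau=0$, and the condition $i^*_{\Sigma'}\gf_0=0$ is unobstructed because $\Sigma'$ bounds.

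Once $\gf_0$ is available, the gauge-fixing step is immediate and exact: solve the classical scalar Dirichlet problem $\xd^\star\xd f=-\xd^\star\gf_0$ on $\Sigma_\geps$ with $f|_{\partial\Sigma_\geps}=0$, which is uniquely solvable, and set $\gf\defeq\gf_0+\xd f$. Then $\xd^\star\gf=\xd^\star\gf_0+\xd^\star\xd f=0$; the Yang--Mills equation survives because $\xd(\gf_0+\xd f)=\xd\gf_0$, so $\xd\star\xd\gf=\xd\star\xd\gf_0=0$; and on each connected component of $\partial\Sigma_\geps$ one has $i^*\xd f=\xd(i^*f)=0$ since $f$ vanishes there, so both prescribed tangential traces are preserved. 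Hence $\gf$ solves the stated boundary value problem. I expect the real work to lie in the existence step --- in arranging the \emph{full} second-order system $\xd\star\xd\gf_0=0$, not merely its component~\eqref{eqn:YM-decomposition}, compatibly with the prescribed endpoints of $\gamma^\tau$, and in keeping all objects within the Sobolev scale of Section~\ref{sec:classdem} via Gaffney's inequality~\eqref{eqn:Gaffney}. For a sufficiently thin collar this is a controlled perturbation of the product situation, and the method is that of the proof of Lemma~\ref{lemma:Dirichlet-YM}, with the prescribed Neumann datum there replaced by the free endpoint condition at $\tau=\geps$.
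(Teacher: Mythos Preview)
Your two-step strategy---first construct a Yang--Mills solution with the prescribed tangential boundary values and no gauge constraint, then install the Lorentz gauge by solving the scalar Dirichlet--Poisson problem $\xd^\star\xd f=-\xd^\star\gf_0$ with $f|_{\partial\Sigma_\geps}=0$---is exactly the paper's approach. The paper executes step one explicitly in axial gauge (setting the normal component to zero and solving the variant of~\eqref{eqn:div-free} with the Neumann condition replaced by $\overline{\gf}^\geps=0$), which is precisely the collar method you invoke via Lemma~\ref{lemma:Dirichlet-YM}; your gauge-fixing step two matches the paper's verbatim once the decomposed Poisson equation there is reassembled.
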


\begin{proof}First we obtain a solution for axial gauge fixing, proceeding as in Lemma~\ref{lemma:Dirichlet-YM2}, but solving
\begin{gather*}
 \xd^{\star_{\Sigma^\tau}}\frac{\xd}{\xd\tau}\overline{\gf}^\tau=0,\qquad \overline{\gf}^0=\phi^D,
\qquad \overline{\gf}^\geps=0,
\end{gather*}
instead of (\ref{eqn:div-free}). This $\gf$ solves YM condition~(\ref{eqn:YM-decomposition}). To obtain a Lorentz condition take a gauge translation
\begin{gather*}
 \tilde{\gf}=\tilde{\gf}^\tau+\tilde{\gf}_\tau \xd\tau=\gf + \xd f ,\qquad
 \tilde{\gf}_\tau=\frac{\partial f}{\partial \tau} ,\qquad \overline{\tilde{\gf}}^\tau=\overline{\gf}^\tau +\xd \overline{f}.
\end{gather*}
Lorentz gauge (\ref{eqn:div-decomposition}) can be written as
\begin{gather*}
 \frac{\partial}{\partial \tau}\overline{\tilde{\gf}}_\tau = -\xd^{\star_{\Sigma^\tau}}\overline{\tilde{\gf}}^\tau .
\end{gather*}
This implies
\begin{gather*}
 \frac{\partial^2 f}{\partial \tau^2} = -\xd^{\star_{\Sigma^\tau}}\overline{{\gf}}^\tau -\xd^{\star_{\Sigma^\tau}}\xd\overline{f},
\end{gather*}
which yield a Poisson equation for $f$ in $\Sigma_\geps$. This can be solved if we consider the suitable boundary conditions $\overline{f}\vert_\Sigma=0=\overline{f}\vert_{\Sigma'}$.
\end{proof}

\subsection*{Acknowledgments}
We thank the anonymous referees for contributions to improving a draft version of this paper. This work was partially supported by CONACYT project grant 259258, UNAM-DGAPA-PAPIIT project grant IN109415 and PRODEP project grant UMSNH-386.

\pdfbookmark[1]{References}{ref}
\LastPageEnding

\end{document}